\newcommand\naturals{\mathbb{N}}
\newcommand\rationals{\mathbb{Q}}
\newcommand\reals{\mathbb{R}}
\newcommand\len[1]{|#1|}
\newcommand\Iff\Leftrightarrow
\newcommand\Imp\Rightarrow
\newcommand\set[2]{\{#1\mid#2\}}
\newcommand\subs\subseteq
\newcommand\eps\varepsilon
\newcommand\del\partial
\newcommand\subst[2]{[#1/#2]}
\newcommand\seq[3]{#1_{#2},\ldots,#1_{#3}}
\newcommand\lam[2]{\lambda#1\kern1pt.\kern1pt#2}
\newcommand\id{\mathsf{id}}
\newcommand\powerset[1]{2^{#1}}
\newcommand\sem[1]{\llbracket#1\rrbracket}
\newcommand\semr[1]{\sem{#1}\rho}
\newcommand\amp{\mathrel\&}
\newcommand\cmp{\mathrel;}
\newcommand\skp{\mathsf{skip}}
\newcommand\fail{\mathsf{fail}}
\newcommand\dist{\mathop{\mathsf{dist}}}
\newcommand\letin[3]{\mathsf{let\ }#1=#2\mathsf{\ in\ }#3}
\newcommand\sample{\mathop{\mathsf{sample}}}
\newcommand\rest\restriction
\newcommand\Exp{\mathsf{Exp}}
\newcommand\supp{\mathop{\mathsf{supp}}}
\DeclareMathOperator*{\bigamp}{\scalerel*{\amp}{\textstyle\sum}} 
\newcommand\fix[2]{\mathsf{fix}\,#1\kern1pt.\kern1pt#2}
\newcommand\bind{\mathrel{\texttt{>\kern-0.5pt>=}}}
\newcommand\Mfin{M_{\mathrm{fin}}}
\newcommand\eval{\mathsf{eval}}
\newcommand\mset[1]{\{\kern-2pt|#1|\kern-2pt\}}
\newcommand\Brz{\mathsf{Brz}}
\newcommand\ieq{\mathrel{\phantom{=}}}
\newcommand\NX{\naturals^X}
\newcommand\Ns{\naturals^S}
\newcommand\NS{\naturals^{\Sigma^*}}
\newcommand\DNS{D(\NS)}
\newcommand\dangle[1]{\angle{\kern-2pt\angle#1\kern-2pt}}
\renewcommand\angle[1]{\langle#1\rangle}
\newcommand\dirac[1]{\delta_{#1}}
\newcommand\Var{\mathsf{Var}}
\newcommand\half{\textstyle\frac 12}
\newcommand\auto{h}
\newcommand\rnd{\mathsf{rnd}}
\renewcommand\emptyset\varnothing
\newcommand\ifelse[3]{\textsf{if } #1 \textsf{ then } #2 \textsf{ else } #3}
\newcommand\while[2]{\textsf{while } #1 \textsf{ do } #2}
\begin{document}

\title[Probability and Angelic Nondeterminism with Multiset Semantics]{Probability and Angelic Nondeterminism\\with Multiset Semantics}
\subtitle{Extended version of ``Probabilistic Kleene Algebra with Angelic Nondeterminism,'' PLDI 2025 \cite{OMK25a}}

\author{Shawn Ong}
\affiliation{%
  \institution{Cornell University}
  \city{Ithaca}
  \state{New York}
  \country{USA}}
\email{so396@cornell.edu}

\author{Stephanie Ma}
\affiliation{%
  \institution{Cornell University}
  \city{Ithaca}
  \state{New York}
  \country{USA}}
\email{ym363@cornell.edu}

\author{Dexter Kozen}
\affiliation{%
  \institution{Cornell University}
  \city{Ithaca}
  \state{New York}
  \country{USA}}
\email{kozen@cs.cornell.edu}


\begin{abstract}
We introduce a version of probabilistic Kleene algebra with angelic nondeterminism and a corresponding class of automata. Our approach implements semantics via distributions over multisets in order to overcome theoretical barriers arising from the lack of a distributive law between the powerset and Giry monads. We produce a full Kleene theorem and a coalgebraic theory, as well as both operational and denotational semantics and equational reasoning principles.
\end{abstract}

\begin{CCSXML}
<ccs2012>
   <concept>
       <concept_id>10003752</concept_id>
       <concept_desc>Theory of computation</concept_desc>
       <concept_significance>500</concept_significance>
       </concept>
   <concept>
       <concept_id>10003752.10003753</concept_id>
       <concept_desc>Theory of computation~Models of computation</concept_desc>
       <concept_significance>500</concept_significance>
       </concept>
   <concept>
       <concept_id>10003752.10003753.10003757</concept_id>
       <concept_desc>Theory of computation~Probabilistic computation</concept_desc>
       <concept_significance>500</concept_significance>
       </concept>
   <concept>
       <concept_id>10003752.10003766</concept_id>
       <concept_desc>Theory of computation~Formal languages and automata theory</concept_desc>
       <concept_significance>500</concept_significance>
       </concept>
   <concept>
       <concept_id>10003752.10003766.10003776</concept_id>
       <concept_desc>Theory of computation~Regular languages</concept_desc>
       <concept_significance>500</concept_significance>
       </concept>
   <concept>
       <concept_id>10003752.10003766.10003767</concept_id>
       <concept_desc>Theory of computation~Formalisms</concept_desc>
       <concept_significance>500</concept_significance>
       </concept>
   <concept>
       <concept_id>10003752.10003766.10003767.10003768</concept_id>
       <concept_desc>Theory of computation~Algebraic language theory</concept_desc>
       <concept_significance>500</concept_significance>
       </concept>
 </ccs2012>
\end{CCSXML}

\ccsdesc[500]{Theory of computation}
\ccsdesc[500]{Theory of computation~Models of computation}
\ccsdesc[500]{Theory of computation~Probabilistic computation}
\ccsdesc[500]{Theory of computation~Formal languages and automata theory}
\ccsdesc[500]{Theory of computation~Regular languages}
\ccsdesc[500]{Theory of computation~Formalisms}
\ccsdesc[500]{Theory of computation~Algebraic language theory}
\keywords{Kleene algebra, program logic, probability, nondeterminism, coalgebra}

\maketitle

\section{Introduction}

\nocite{AbramskyJung94,SKFKS17a}

The combination of probability and nondeterminism in state-based systems is notoriously challenging, chiefly due to the nonexistence of a suitable distributive law between the powerset and probability monads \cite{VaraccaWinskel06,ZwartMarsden22}, leading to a variety of workarounds \cite{Affeldt21,ChenSanders09,DahlqvistParlantSilva18,GoyPetrisan20,HartogdeVink99,KeimelPlotkin17,Mislove00,MisloveOuaknineWorrell04,Varacca03,VaraccaWinskel06,WangHoffmannReps19,Zwart20,ZwartMarsden22}. Most of these approaches treat nondeterminism \emph{demonically}, meaning that the nondeterminism is resolved by an adversarial agent external to the program.

In this paper, we present a version of probabilistic Kleene algebra with angelic nondeterminism and a corresponding automaton model. This is a step in the development of a version of Kleene algebra with tests (KAT) \cite{K97c} with constructs to allow equational reasoning about programs in the presence of probability and nondeterminism. KAT has been quite successful with many variants and spinoffs, with applications in program analysis, networking, verification of compiler optimizations, and many other areas. KAT interprets nondeterminism angelically, which allows KAT-like systems to encode imperative programming constructs such as conditional tests and while loops, thereby reducing the reasoning process to a few basic constructs and equational rules. KAT and its variants are based on Kleene algebra, which forms the core, and that is what we focus on here.

A common but erroneous view of angelic nondeterminism involves a single agent with a stash of random bits making nondeterministic choices at nondeterministic choice states and consulting its stash of random bits at probabilistic choice states. In this approach, the same random bit may be used at different points in the computation, depending on some previous nondeterministic choice of the agent. Naively, it seems that this should cause no problem, since the bit is used only once, depending on which nondeterministic path was taken. However, to work mathematically, all such probabilistic choices should be independent. Rather than a single agent, we should instead think of a collection of agents visiting various states at various points in time, all acting independently. An agent visiting a probabilistic choice state chooses randomly which state to visit next and moves to that state. However, an agent visiting a nondeterministic choice state spawns multiple agents that go separate ways and thereafter act independently. At any point in the computation, there may be several distinct agents visiting the same probabilistic choice state, but all their choices at that state will be independent. Whereas the use of powersets would lose information by conflating these independent agents, the use of multisets retains it.

We take inspiration from the recently established Beck distributive law of probability over finite multisets \cite{Jacobs21,DashStaton21a,DashStaton21b,Dash23} to give a full treatment of probabilistic automata and expressions with angelic nondeterminism formalized by multisets instead of powersets. Our main results are operational and fully compositional denotational semantics, a full Kleene theorem, and a detailed development of the coalgebraic theory, including a ``fundamental theorem'' \`a la \cite{Silva10}. We develop reasoning principles in both denotational and operational styles.

A novel aspect of our approach is that automata and expressions are interpreted over $\DNS$, the space of distributions over multisets of strings with finite multiplicities. Thus a string is not just accepted with some probability, but accepted \emph{with some finite multiplicity} with some probability. We view a language not as a set of strings, but as a multiset of strings.

\subsection{Related Work}

There have been numerous previous attempts at building computational models combining probability and nondeterminism.
A compendium (Table~\ref{fig:Ana}), reproduced from \cite{Sokolova11}, summarizes various models of computation and their corresponding functors when viewed coalgebraically, including many of these approaches.
However, to circumvent theoretical difficulties based on the lack of a suitable distributive law involving the probabilistic and powerset monads \cite{VaraccaWinskel06,Zwart20,ZwartMarsden22}, many of these models must implement some workaround. Some general theoretical approaches include the geometrically convex monad \cite{Affeldt21}, monad lifting \cite{DahlqvistParlantSilva18}, weak distributive laws \cite{GoyPetrisan20,Varacca03,VaraccaWinskel06}, and Kegelspitzen \cite{KeimelPlotkin17,Rennela16}.
In the cases where weakenings of distributive laws are used, the resulting implementations would also have weakened versions of the distributive axioms. This also appears in the geometrically convex monad, probabilistic expressions and powerdomains \cite{MisloveOuaknineWorrell04}, and Kegelspitzen, all of which restrict distributivity in some way.

Previous approaches modeling nondeterminism demonically, such as \cite{MciverCohenMorgan06,MGCM08,Segala95,ZKSTa24}, often require giving an adversary extra nonconstructive power such as the ability to choose any convex combination of allowed possibilities. \emph{Angelic} nondeterminism, on the other hand, is under the control of the programmer in the form of a nondeterministic choice operator ($\amp$) with a well-defined compositional semantics, enabling KAT-style equational reasoning.

Several known operational models have associated algebraic systems that are powerful enough to reason about arbitrary instances of the operational model. However, in these cases, the algebraic model may be more expressive than the operational model, or there is at least no proof that this is not so.
This includes interactive Markov chains \cite{Hermanns02}, Segala systems \cite{SegalaPhD95,SegalaLynch94}, and Markov automata \cite{Hatefi17}.

In other cases, a formal grammar is produced with an operational model with instances equivalent to any admitted by the grammar. Then the opposite problem may occur---the grammar may not be expressive enough to capture all instances of the operational model. This is the case with PKAT expressions and probabilistic configuration transition systems in \cite{QWGW08} and PKAT expressions and the corresponding automaton model in \cite{MRS11}.

Some implementations also restrict various aspects of functionality, such as composition, distributivity, or iteration. These include instances such as weak Kleene algebra \cite{MGCM08}, in which certain distributive axioms are weakened in order to accommodate the lack of a distributive law, PCCS \cite{HartogdeVink99} in which the arguments of parallel composition are restricted to avoid distributivity (additionally, external stacks are used to facilitate backtracking for defining semantics), and PCSP \cite{Mislove00} in which probability does not distribute over nondeterminism.

These attempts include various probabilistic versions of Kleene algebra \cite{MciverCohenMorgan06,QWGW08,MGCM08,MRS11,FTN08,FKMRS16}. However, to our knowledge, none have yielded a Kleene theorem in both directions with full distributivity and iteration.
Our approach is fully compositional and the semantics is defined coinductively without reference to backtracking or additional constructs. Furthermore, it avoids the convoluted constructions that can arise in an effort to deal with the lack of a distributive law. Such constructs appear in the relational model of \cite{ChenSanders09} which adds additional arguments to keep track of the most recent nondeterministic choice, bundle systems \cite{DArgenio98} which require a product construction to implement parallel composition, and control-flow hypergraphs \cite{WangHoffmannReps19} which add call stacks, again to handle backtracking.

\begin{table}
  \caption{Discrete probabilistic system types (from \cite{Sokolova11})}
  \label{fig:Ana}
\begin{tabular}{c@{\hspace{1mm}}c@{\hspace{2mm}}l}
    \toprule
    $\mathsf{Coalg}_F$ & Functor $F$ & \text{Reference}\\
    \midrule
$\mathbf{MC}$ & $D$ & \text{Markov chains}\\
$\mathbf{DLTS}$ & $((-) + 1)^\Sigma$ & \text{deterministic automata}\\
$\mathbf{LTS}$ & $P(\Sigma\times(-))\cong P^\Sigma$ & \text{nondeterministic automata, LTSs}\\
$\mathbf{React}$ & $(D+1)^\Sigma$ & \text{reactive systems \cite{LarsenSkou91,Glabbeek90}}\\
$\mathbf{Gen}$ & $D(\Sigma\times(-))+1$ & \text{generative systems \cite{Glabbeek90}}\\
$\mathbf{Str}$ & $D + (\Sigma\times(-))$ & \text{stratified systems \cite{Glabbeek90}}\\
$\mathbf{Alt}$ & $D + P(\Sigma\times(-))$ & \text{alternating systems \cite{Hansson94}}\\
$\mathbf{Var}$ & $D(\Sigma\times(-)) + P(\Sigma\times(-))$ & \text{Vardi systems \cite{Vardi85}}\\
$\mathbf{SSeg}$ & $P(\Sigma \times D)$ & \text{simple Segala systems \cite{SegalaPhD95,SegalaLynch94}}\\
$\mathbf{Seg}$ & $PD(\Sigma\times(-))$ & \text{Segala systems \cite{SegalaPhD95,SegalaLynch94}}\\
$\mathbf{MA}$ & $P(\Sigma\times D) + P(\rationals \times (-))$ & \text{Markov automata \cite{Hatefi17}}\\
$\mathbf{Bun}$ & $DP(\Sigma\times(-))$ & \text{bundle systems \cite{DArgenio98}}\\
$\mathbf{PZ}$ & $PDP(\Sigma\times(-))$ & \text{Pnueli-Zuck systems \cite{PnueliZuck93}}\\
$\mathbf{MG}$ & $PDP(\Sigma\times(-) + (-))$ & \text{most general systems}\\
  \bottomrule
\end{tabular}
\end{table}

\subsection{Roadmap and Contributions}

\begin{itemize}
\item
In \S\ref{sec:basics}, we discuss the basic mathematical constructs used in this paper, including a review of the recently established Beck distributive law $\otimes:MD\to DM$ of distributions over finite multisets \cite{Jacobs21,DashStaton21a,DashStaton21b,Dash23}.
\item
In \S\ref{sec:automataandexpressions} we introduce automata with probability and angelic nondeterminism and a corresponding language of expressions analogous to regular expressions. We give the denotational semantics of automata and expressions and operational intuition, along with several examples. The semantics of expressions is fully compositional.
\item
In \S\ref{sec:axioms} we list several sound equational axioms along with some examples of their
use, with soundness proofs in \S\ref{sec:Apka}.
\item
In \S\ref{sec:metric} we introduce a complete ultrametric on the space of behaviors $\DNS$ and show that the semantic definitions of automata and expressions give rise to contractive maps, ensuring that the semantic maps of both models are well defined.
\item
In \S\ref{sec:kleene} we give a full Kleene theorem, showing that automata and expressions are equivalent in expressive power. To our knowledge this is the first result of its type for models combining probability and nondeterminism.
\item
In \S\ref{sec:coalgebras} we develop the foundations of the coalgebraic theory, including a notion of Brzozowski derivative and a fundamental theorem \`a la \cite{Silva10}. The usual diagram denoting a unique coalgebra morphism to a final coalgebra turns out not to be appropriate; it is replaced by a more general coalgebra/algebra diagram allowing a recursive definition of a unique map to the space of behaviors $\DNS$.
\item
In \S\ref{sec:future} we discuss future work.
\end{itemize}

Missing proofs and further explanatory material can be found in the
appendix.

\section{Basics}
\label{sec:basics}

For $X$ a set, let $MX = (\naturals\cup\{\infty\})^X$, the set of multisets of $X$ with finite or infinite multiplicities. We use stylized braces $\mset-$ for multiset comprehension. If $f:X\to Y$, then $Mf:MX\to MY$ is the function $Mf(m)(y)=\sum_{f(x)=y}m(x)$. The functor $M$ carries a monad structure with multiset union (pointwise sum) as multiplication, denoted $\sum$, and $x\mapsto\mset x$ (creation of a singleton multiset) as unit. The \emph{size} of a multiset is the sum of the multiplicities of all its elements. The set of multisets over $X$ of size $k$ is denoted $X^{(k)}$. If $X$ is a measurable space, the measurable sets of $MX$ are those generated by the observations $\len{m\rest B}\ge n$, where $m\in MX$, $B$ is a measurable set of $X$, and $m \rest B$ denotes the restriction of $m$ to $B$: $(m\rest B)(x)=m(x)$ if $x\in B$, $0$ otherwise.

We have included $\infty$ in the general definition of $M$, but it does not play any further role in our development. Thus we will restrict attention to $\naturals^X$, those multisets of $X$ with finite multiplicities. However, we should point out that $\naturals^X$ does not form a monad, except when $X$ is finite.

Let $DX$ be the space of probability measures on a measurable space $X$. For $f:X\to Y$, $Df:DX\to DY$ with $Df(\mu)=\mu\circ f^{-1}$, the \emph{pushforward measure} of $\mu$ under $f$. This is a monad on measurable spaces, often called the \emph{Giry monad} \cite{Giry81}.

We will be particularly interested in $\DNS$, where $\Sigma^*$ is the set of finite-length strings over a finite alphabet $\Sigma$, $\NS$ is the measurable space of multisets of elements of $\Sigma^*$ with finite multiplicities, and $\DNS$ is the space of probability measures over $\NS$. This is the space of behaviors over which our automata and expressions are interpreted. Elements of $\NS$ will be denoted $\alpha,\beta,\ldots$ and elements of $\DNS$ will be denoted $\mu,\nu,\ldots$~.

As a topological space, $\NS$ is homeomorphic to the Baire space $\omega^\omega$, the Cartesian product of $\omega$ copies of $\omega$ with the product topology, where each copy of $\omega$ has the discrete topology. The measurable sets are the Borel sets of this topology. This is a standard Borel space. The Borel sets are generated by the equivalence classes $[\alpha]_n$ of $\equiv_n$, where
\begin{align}
\alpha \equiv_n \beta\ &\Iff\ (\forall x\in\Sigma^*\ \len x\le n\Imp\alpha(x)=\beta(x)) && [\alpha]_n = \set\beta{\beta\equiv_n\alpha}.\label{eq:equivndef}
\end{align}
Every $[\alpha]_n$ has a unique canonical element $\alpha\rest n$ whose support is contained in $\Sigma^{\le n}$, that is, such that $(\alpha\rest n)(x) = 0$ for $\len x>n$, and $\alpha\equiv_n\beta$ iff $\alpha\rest n=\beta\rest n$. The $\equiv_n$-classes $[\alpha]_n$ form a basis for the measurable sets of $\NS$. Every $\alpha\in\NS$ is uniquely determined by its restrictions $\alpha\rest n$; equivalently, $\bigcap_{n\ge 0} [\alpha]_n = \{\alpha\}$.

\subsection{The Distributive Law}

The recently established Beck distributive law $\otimes:MD\to DM$ is a natural transformation whose component $\otimes_X:MDX\to DMX$ for a measurable space $X$ converts a finite multiset of distributions on $X$ to a distribution on finite multisets of $X$. Operationally, one independently samples all the elements in the multiset of distributions to obtain a multiset of elements of $X$; the probability of the sample is the product of the probabilities of its elements.

It must be shown that the distributive law interacts well with the monad structure of $M$ and $D$. These are the Beck conditions. These were verified for finite multisets and finite distributions in \cite{Jacobs21} and for finite multisets and arbitrary distributions over arbitrary measurable spaces in \cite{DashStaton21a,DashStaton21b,Dash23}. In our development, countable multisets with finite multiplicities and continuous distributions do make an appearance in the semantics of automata and expressions. We do not know whether the distributive law holds for countable multisets, but fortunately we need it only for finite multisets, so the results of \cite{Jacobs21,DashStaton21a,DashStaton21b,Dash23} suffice for our purposes.

In this paper, the distributive law is only used in the context $\Mfin\DNS\to D\Mfin(\NS)$ followed by $D\Sigma:D\Mfin(\NS)\to\DNS$, where $\Sigma:\Mfin(\NS)\to\NS$ is multiset union, so the results of \cite{Jacobs21,DashStaton21a,DashStaton21b,Dash23} apply. Additionally, restricting multiset union to domain $\Mfin(\NS)$ ensures that we remain in $\NS$, never generating any infinite multiplicities.

\subsection{Operations on Measures}
\label{sec:opsonmeasures}

The interpretation of the syntactic constructs of our language depends on three semantic operations on measures: product ($\otimes$), probabilistic choice ($\oplus$), and (angelic) nondeterministic choice ($\amp$).
Sequential composition ($\cmp$) is somewhat more involved and is handled separately in \S\ref{sec:composition}.
\begin{itemize}
\item If $\mu$ is a measure on $X$ and $\nu$ is a measure on $Y$, then $\mu\otimes\nu$ is the \emph{product measure} on $X\times Y$ that on a measurable rectangle $A\times B$ gives the value $(\mu\otimes\nu)(A\times B) = \mu(A)\nu(B)$. Operationally, sampling $\mu\otimes\nu$ samples $\mu$ and $\nu$ independently and emits the resulting pair of outcomes. The distributive law $\otimes:MD\to DM$ for finite multisets is a generalized version of this. Operationally, it takes a finite multiset of distributions and samples them all independently, producing a multiset of outcomes.

\item
Let $r\in[0,1]$. If $\mu$ and $\nu$ are measures on the same space $X$, then $\mu\oplus_r\nu = r\mu+(1-r)\nu$ is a measure on $X$. Operationally, sampling $\mu\oplus_r\nu = r\mu+(1-r)\nu$ is equivalent to independently flipping an $r$-biased coin, then sampling $\mu$ on heads or $\nu$ on tails. More generally, let $\set{\mu_n}{n\in I}$ be a finite set of distributions on $X$ and let $\mu$ be a distribution on $I$ such that $n$ occurs with probability $r_n$, $n\in I$. Then $\oplus\mu$ is the flattened measure $\sum_n r_n\mu_n$ on $X$. Operationally, sampling $\oplus\mu$ is equivalent to sampling the distribution $\sum_n r_n \dirac{n}$ to obtain an index $n$, then sampling $\mu_n$ to obtain an element of $X$. The flattening operator $\oplus$ is the multiplication of the Giry monad. 

\item
If $\mu$ and $\nu$ are measures on $\NX$, then $\mu\amp\nu = (\mu\otimes\nu)\circ(+)^{-1}$. Operationally, sampling $\mu\amp\nu$ is equivalent to sampling $\mu$ and $\nu$ independently to obtain two multisets $m,n$ over $X$, then taking their multiset union (pointwise sum) $m+n$. Similarly, there is a generalized version of $\amp$ that applies to larger finite multisets. Like $\otimes$, this takes the form of a natural transformation
\begin{align}
& {\amp}:MDM\to DM && {\amp} = D\Sigma\circ\otimes M.\label{eq:genamp}
\end{align}
Operationally, we independently sample all elements of a finite multiset of distributions over multisets to obtain a finite multiset of multisets, then combine them with multiset union $\Sigma$ (pointwise addition). 
\end{itemize}

\subsection{A Meta-calculus}
\label{sec:metacalculus}

We will sometimes make use of a small meta-calculus for reasoning equationally in an informal operational style. The calculus consists of typing rules and equations involving operators $\sample$ that allows sampling of a distribution and its inverse $\dist$ that constructs a sampleable distribution from a computation. Although not essential for our results, we have nevertheless found it invaluable as a pedagogical aid, as it gives an intuitive operational view of often more obscure denotational arguments.

In addition to the usual rules of the simply typed $\lambda$-calculus, we have constructs $\sample$ for sampling a distribution and $\dist$ for creating a sampleable distribution from a given computation.
\begin{align*}
&
\AxiomC{$\Gamma\vdash d:D\tau$}
\UnaryInfC{$\Gamma\vdash \sample d:\tau$}
\DisplayProof
&&
\AxiomC{$\Gamma\vdash e:\tau$}
\UnaryInfC{$\Gamma\vdash \dist e:D\tau$}
\DisplayProof
\end{align*}
These operators are inverses: 
\begin{align*}
\sample(\dist e) &= e & \dist(\sample d) &= d.
\end{align*}

The meta-calculus is described in more detail in
\S\ref{apx:metacalculus}.

\begin{toappendix}
\label{apx:metacalculus}

Here we provide more details on the meta-calculus described in \S\ref{sec:metacalculus}, along with proofs missing from \S\ref{sec:basics}.

The meta-calculus involves judgments $\Gamma\vdash e:\tau$, meaning that an expression $e$ has type $\tau$ in the typing environment $\Gamma$, where $\Gamma$ gives a typing for the free symbols occurring in $e$. We should think of $e$ as an expression to be evaluated, which might involve sampling from random sources mentioned in $e$ and producing a value of type $\tau$ distributed according to some distribution, depending on the distributions of the sources sampled in $e$. 

In addition to the usual rules of the simply typed $\lambda$-calculus, we have constructs $\sample$ for sampling a distribution and $\dist$ for creating a sampleable distribution from a given computation.
\begin{align*}
&
\AxiomC{$\Gamma\vdash d:D\tau$}
\UnaryInfC{$\Gamma\vdash \sample d:\tau$}
\DisplayProof
&&
\AxiomC{$\Gamma\vdash e:\tau$}
\UnaryInfC{$\Gamma\vdash \dist e:D\tau$}
\DisplayProof
\end{align*}
These operators are inverses: 
\begin{align*}
\sample(\dist e) &= e & \dist(\sample d) &= d.
\end{align*}
To see this, consider possible implementation of these constructs using thunks. We might package an expression $e$ of type $\tau$ in a thunk $\dist e = \lam{()}e$, creating a new sampleable resource of type $D\tau$. To sample it, one would apply it to $()$, thus we should define $\sample d = d\ ()$. Sampling it multiple times is assumed to give independent outcomes. They are inverses, since $\sample(\dist e) = (\lam{()}e)\ ()$, which $\beta$-reduces to $e$, and $\dist(\sample d) = \lam{()}{(d\ ())}$, which $\eta$-reduces to $d$.

We also have the rules
\begin{align*}
&
\AxiomC{$\Gamma\vdash e_1:D\sigma$}
\AxiomC{$\Gamma\vdash e_2:D\tau$}
\BinaryInfC{$\Gamma\vdash e_1\otimes e_2:D(\sigma\times\tau)$}
\DisplayProof
&
e_1\otimes e_2 &= \dist(\sample e_1,\sample e_2)
\end{align*}
for forming a product distribution, and
\begin{align*}
&
\AxiomC{$\Gamma\vdash e:D(\sigma_1\times\sigma_2)$}
\UnaryInfC{$\Gamma\vdash D\pi_i(e):D\sigma_i$}
\DisplayProof
&
D\pi_i(e) &= \dist(\pi_i(\sample e))
\end{align*}
for taking marginals. For nondeterministic choice, we have
\begin{align}
&
\AxiomC{$\Gamma\vdash e_1:DM\tau$}
\AxiomC{$\Gamma\vdash e_2:DM\tau$}
\BinaryInfC{$\Gamma\vdash e_1\amp e_2:DM\tau$}
\DisplayProof
& e_1\amp e_2 = \dist(\sample e_1 + \sample e_2)\label{eq:nondettype}
\end{align}
where $+$ denotes multiset union. For multisets, we have
\begin{align*}
&
\AxiomC{$\Gamma\vdash f:\sigma\to\tau$}
\UnaryInfC{$\Gamma\vdash Mf:M\sigma\to M\tau$}
\DisplayProof
&&
\AxiomC{$\Gamma\vdash e_1:M\tau$}
\AxiomC{$\Gamma\vdash e_2:M\tau$}
\BinaryInfC{$\Gamma\vdash e_1+e_2:M\tau$}
\DisplayProof
\end{align*}
Also, pushforward measures can be expressed:
\begin{align*}
&
\AxiomC{$\Gamma\vdash e:D\sigma$}
\AxiomC{$\Gamma\vdash f:\sigma\to\tau$}
\BinaryInfC{$\Gamma\vdash Df(e):D\tau$}
\DisplayProof
&
Df(e) &= \dist(f(\sample e))
\end{align*}

Here is an example of a proof that derives the typing rule \eqref{eq:nondettype}.
\begin{align*}
&
\AxiomC{$\Gamma\vdash e_1:DM\tau$}
\UnaryInfC{$\Gamma\vdash \sample e_1:M\tau$}
\AxiomC{$\Gamma\vdash e_2:DM\tau$}
\UnaryInfC{$\Gamma\vdash \sample e_2:M\tau$}
\BinaryInfC{$\Gamma\vdash\sample e_1+\sample e_2:M\tau$}
\UnaryInfC{$\Gamma\vdash \dist(\sample e_1+\sample e_2):DM\tau$}
\UnaryInfC{$\Gamma\vdash e_1\amp e_2:DM\tau$}
\DisplayProof
\end{align*}

\end{toappendix}

\subsection{Injective Monoid Actions}
\label{sec:injectivemonoidactions}

Another important concept in our semantics is \emph{injective monoid actions}. This concept will be crucial in the semantics of sequential composition, which is unlike other operations in that its effects are nonlocal. The significance of the basic constructs here will only become clear later in \S\ref{sec:semanticsofexpressions}, so we suggest skipping this section and the following one on first reading.

Suppose $S$ is a monoid acting on a set $X$. Thus for $s,t\in S$, we have $1\cdot x = x$ and $st\cdot x = s\cdot(t\cdot x)$. Suppose further that the monoid action is injective; that is, if $s\cdot x=s\cdot y$, then $x=y$.

There is a canonical way to lift the monoid action to multisets in $\NX$, namely
\begin{align*}
(s\cdot m)(y) &= \begin{cases}
m(x) & \text{if $y = s\cdot x$}\\ 
0 & \text{if $y\ne s\cdot x$ for any $x$.}
\end{cases}
\end{align*}
This is well defined, as the action on $X$ is injective. Moreover, the lifted action on $\NX$ is injective: if $s\cdot m=s\cdot n$, then for all $x$, $m(x) = (s\cdot m)(s\cdot x) = (s\cdot n)(s\cdot x) = n(x)$, so $m=n$.

There is a canonical way to lift the monoid action to $2^X$, namely $s\cdot A = \set{s\cdot x}{x\in A}$. This is actually a special case of the lifted action on $\NX$ described above.

Finally, there is a canonical way to lift the monoid action to $DX$, provided $s\cdot-:X\to X$ is a measurable function. We take $s\cdot\mu = \mu\circ(s\cdot-)^{-1}$, the pushforward measure of $\mu$ under $s\cdot-:X\to X$. Moreover, the lifted action on $DX$ is injective: for all measurable $A$, we have $(s\cdot-)^{-1}(s\cdot A) = A$ since $s\cdot-$ is injective on $X$, so if $s\cdot\mu=s\cdot\nu$, then
\begin{align*}
\mu(A) &= \mu((s\cdot-)^{-1}(s\cdot A)) = (s\cdot\mu)(s\cdot A) = (s\cdot\nu)(s\cdot A)
= \nu((s\cdot-)^{-1}(s\cdot A)) = \nu(A),
\end{align*}
therefore $\mu=\nu$.

In our application, the monoid $S$ will be the free monoid $\Sigma^*$. By the above arguments, $\Sigma^*$ acts injectively on $\Sigma^*$, $\NS$, $\powerset{\NS}$, and $\DNS$: for $x,y\in\Sigma^*$, $m\in \NS$, $A\subs \NS$, and $\mu\in \DNS$,
\begin{align*}
x\cdot y &= xy
&
(x\cdot m)(z) &= \begin{cases}
m(y), & \text{if $z=xy$},\\
0, & \text{if $x$ is not a prefix of $z$}
\end{cases}\\[1ex]
x\cdot A &= \set{x\cdot m}{m\in A}
&
x\cdot\mu &= \mu\circ(x\cdot-)^{-1}.
\end{align*}

\subsection{Composition}
\label{sec:composition}

For $\nu\in\DNS$, the operation $-\cdot\nu : \Sigma^* \to \DNS$ introduced in \S\ref{sec:injectivemonoidactions} gives rise to a bind operation
\begin{align*}
& {\bind} : \DNS \times (\Sigma^* \to \DNS) \to \DNS
& \mu\bind-\cdot\nu : \DNS
\end{align*}
that will be used in the definition of sequential composition. We only ever apply $\bind$ in the form $\mu\bind-\cdot\nu$, which ensures that we never generate any infinite multiplicities, as explained below.

The bind operation $\mu\bind-\cdot\nu$ is defined by first extending $-\cdot\nu:\Sigma^*\to\DNS$ to domain $\NS$, then integrating with respect to $\mu$. The extension is also denoted $-\cdot\nu:\NS\to\DNS$ and defined by
\begin{align*}
\beta\cdot\nu = {\amp}M(-\cdot\nu)(\beta) = \otimes(M(-\cdot\nu)(\beta))\circ\Sigma^{-1}.
\end{align*}
Operationally, to sample $\beta\cdot\nu$, we independently sample $x\cdot\nu$ for each $x\in\beta$, then take the multiset union (pointwise sum) of the outcomes. In the notation of the meta-calculus of \S\ref{sec:metacalculus},
\begin{align*}
\sample(\beta\cdot\nu) &= \sum_{x\in\beta}\ \sample(x\cdot \nu).
\end{align*}
Although this is formally an infinite sum, note that only $x\in\beta$ with $x$ a prefix of $y$ can contribute nonzero multiplicity to $y$ in the final result, as all multisets produced by $x\cdot\nu$ contain only strings that have $x$ as a prefix, and there are only finitely many occurrences of such $x$ in $\beta$. Thus the final outcome is a multiset with finite multiplicities.

Finally, we integrate with respect to $\mu$ by Lebesgue integration to get the measure
\begin{align}
(\mu\bind -\cdot\nu)(A) = \int_{\beta\in\NS}\ (\beta\cdot\nu)(A)\,d\mu.\label{eq:binddef}
\end{align}
Operationally, to sample $\mu\bind -\cdot\nu$, we sample $\mu$ to obtain a multiset $\beta$, then sample $x\cdot\nu$ for each $x\in\beta$ and take their multiset union. In the notation of the meta-calculus of \S\ref{sec:metacalculus},
\begin{align*}
\sample(\mu\bind -\cdot\nu) &= \letin\beta{\sample\mu}{\sum_{x\in\beta}\ \sample(x\cdot\nu)}.
\end{align*}

In order to integrate, we need $-\cdot\nu:\NS\to\DNS$ to be a measurable function on $\NS$. This is established in the following lemmas, along with some other useful properties.
\begin{lemmarep}
\label{lem:injproperties}
\begin{enumerate}[{\upshape(i)}]
\item
If $\mu_i \equiv_{n} \nu_i$ for all $i\in I$, then $\sum_{i\in I} \mu_i \equiv_{n} \sum_{i\in I} \nu_i$.
\item
If $\nu_1\equiv_n\nu_2$, then $x\cdot\nu_1 \equiv_{n+\len x}x\cdot\nu_2$.
\item
If $\beta_1\equiv_n\beta_2$, then $\beta_1\cdot\nu\equiv_n\beta_2\cdot\nu$.
\end{enumerate}
\end{lemmarep}
\begin{proof}
For (i), if $\mu_i \equiv_n \nu_i$ for all $i\in I$, then for all $i\in I$ and $\alpha\in\NS$, $\mu_i([\alpha]_n) = \nu_i([\alpha]_n)$. Then for all $\alpha\in\NS$,
\begin{align*}
(\sum_{i\in I} \mu_i)([\alpha]_n)
&= \sum_{i\in I} \mu_i([\alpha]_n)
= \sum_{i\in I} \nu_i([\alpha]_n)
= (\sum_{i\in I} \nu_i)([\alpha]_n),
\end{align*}
which says that $\sum_{i\in I} \mu_i \equiv_n \sum_{i\in I} \nu_i$.

For (ii), it follows that $(\beta\cdot\nu)([\alpha]_n) = (\beta\rest n\cdot\nu)([\alpha]_n)$, where $\beta\rest n$ is the unique finite multiset whose support is contained in $\Sigma^{\le n}$ and that agrees with $\beta$ on that set. Now if $\beta_1\equiv_n\beta_2$, then $\beta_1\rest n=\beta_2\rest n$, so
\begin{align*}
(\beta_1\cdot\nu)([\alpha]_n) &= (\beta_1\rest n\cdot\nu)([\alpha]_n) = (\beta_2\rest n\cdot\nu)([\alpha]_n) =(\beta_2\cdot\nu)([\alpha]_n),
\end{align*}
therefore $\beta_1\cdot\nu\equiv_n\beta_2\cdot\nu$.

For (iii), if $\nu_1\equiv_n\nu_2$, then
\begin{align*}
(x\cdot\nu_1)([\gamma]_{n+\len x})
&= \begin{cases}
\nu_1([\alpha]_n) & \text{if $\gamma = x\cdot\alpha$}\\
0 & \text{if $\gamma \ne x\cdot\alpha$ for any $\alpha$}
\end{cases}\\
&= \begin{cases}
\nu_2([\alpha]_n) & \text{if $\gamma = x\cdot\alpha$}\\
0 & \text{if $\gamma \ne x\cdot\alpha$ for any $\alpha$}
\end{cases}
\qquad = (x\cdot\nu_2)([\gamma]_{n+\len x}),
\end{align*}
so $x\cdot\nu_1 \equiv_{n+\len x}x\cdot\nu_2$.
\end{proof}

\begin{lemmarep}\ 
\label{lem:bind}
\begin{enumerate}[{\upshape(i)}]
\item
If $\beta\equiv_m 0$ and $\nu_1\equiv_n\nu_2$, then $\beta\cdot\nu_1 \equiv_{m+n+1} \beta\cdot\nu_2$.
\item
If $\mu([0]_m)=1$ and $\nu_1\equiv_n\nu_2$, then $\mu\bind-\cdot\nu_1\equiv_{m+n+1}\mu\bind-\cdot\nu_2$.
\end{enumerate}
\end{lemmarep}
\begin{proof}
For (i), since $\beta\equiv_m 0$, we have $\len x\ge m+1$ for all $x\in\supp\beta$. By Lemma \ref{lem:injproperties}(ii), since $\nu_1\equiv_n\nu_2$, we have $x\cdot\nu_1 \equiv_{n+\len x} x\cdot\nu_2$ for all $x\in\supp\beta$, therefore $x\cdot\nu_1 \equiv_{n+m+1} x\cdot\nu_2$ for all $x\in\supp\beta$. By Lemma \ref{lem:injproperties}(i), $\beta\cdot\nu_1 = \sum_{x\in\beta} x\cdot\nu_1 \equiv_{n+m+1} \sum_{x\in\beta} x\cdot\nu_2 = \beta\cdot\nu_2$.

For (ii), if $\mu([0]_m) = 1$, then
\begin{align*}
\mu(\set\beta{\beta\not\equiv_m 0}) &= \mu(\bigcup\set{[\beta]_m}{[\beta]_m\ne[0]_m})
= \sum\set{\mu([\beta]_m)}{[\beta]_m\ne[0]_m})
= 0,
\end{align*}
so for any $\nu$,
\begin{align*}
(\mu\bind-\cdot\nu)(A)
&= \int_{\beta\in\NS}(\beta\cdot\nu)(A)\,d\mu
= \int_{\substack{\beta\in\NS\\\beta\equiv_m 0}}(\beta\cdot\nu)(A)\,d\mu.
\end{align*}
By (i), if $\beta\equiv_m 0$, then $\beta\cdot\nu_1 \equiv_{m+n+1} \beta\cdot\nu_2$. Then
\begin{align*}
(\mu\bind-\cdot\nu_1)([\alpha]_{m+n+1})
&= \int_{\substack{\beta\in\NS\\\beta\equiv_m 0}}(\beta\cdot\nu_1)([\alpha]_{m+n+1})\,\mu([\beta]_{m+n+1})\\
&= \int_{\substack{\beta\in\NS\\\beta\equiv_m 0}}(\beta\cdot\nu_2)([\alpha]_{m+n+1})\,\mu([\beta]_{m+n+1})\\
&= (\mu\bind-\cdot\nu_2)([\alpha]_{m+n+1}).
\tag*\qedhere
\end{align*}
\end{proof}

\begin{lemmarep}
\label{lem:sumprod}
\begin{align*}
(a\cdot-)^{-1}([\alpha]_n)
&= \begin{cases}
\NS, & \text{if $n=0$ and $\alpha(\eps)=0$},\\
[\beta]_{n-1}, & \text{if $n\ge 1$ and $\alpha\equiv_n a\cdot \beta$},\\
\emptyset, & \text{otherwise.}
\end{cases}
\end{align*}
\end{lemmarep}
\begin{proof}
For $n=0$,
\begin{align*}
(p\cdot-)^{-1}([\alpha]_0)
&= \set{\beta}{p\cdot \beta\equiv_0 \alpha}
= \set{\beta}{\alpha(\eps)=0}
= \begin{cases}
\NS, & \text{if $\alpha(\eps)=0$},\\
\emptyset, & \text{otherwise.}
\end{cases}
\end{align*}
For $n\ge 1$, if $a\cdot \beta\equiv_n \alpha$ then $a\cdot \beta\equiv_{n-1} \alpha$, and if $\gamma\equiv_{n-1} \beta$ then $a\cdot \gamma\equiv_{n} a\cdot \beta$. It follows that
\begin{align*}
a\cdot \beta\equiv_n \alpha\ &\Iff\ \exists \gamma\ \ \alpha \equiv_n a\cdot \gamma\wedge \beta\equiv_{n-1} \gamma,
\end{align*}
so
\begin{align*}
(a\cdot-)^{-1}([\alpha]_n)
&= \set{\beta}{a\cdot \beta\equiv_n \alpha}
= \set{\beta}{\exists \gamma\ \ \alpha \equiv_n a \cdot \gamma \wedge \beta\equiv_{n-1} \gamma}\\
&= \begin{cases}
[\gamma]_{n-1}, & \text{if $\alpha\equiv_n a \cdot \gamma$,}\\
\emptyset, & \text{otherwise.}
\end{cases}
\tag*\qedhere
\end{align*}
\end{proof}

\begin{lemmarep}
For $\nu\in\DNS$, the map $-\cdot\nu:\NS\to\DNS$ is a measurable function.
\end{lemmarep}
\begin{proof}
Since $\beta\equiv_n\beta\rest n$, by Lemma \ref{lem:injproperties}(iii) we have $\beta\cdot\nu\equiv_n\beta\rest n\cdot\nu$. Then the preimage of a basic measurable set is
\begin{align*}
\set{\beta}{(\beta\cdot\nu)([\alpha]_n) \ge r}
&= \set{\beta}{(\beta\rest n\cdot\nu)([\alpha]_n) \ge r}
= \bigcup\,\set{[\beta]_n}{\beta=\beta\rest n,\ (\beta\cdot\nu)([\alpha]_n) \ge r},
\end{align*}
a countable union of $\equiv_n$-classes, thus a measurable set.
\end{proof}

\section{Automata and Expressions}
\label{sec:automataandexpressions}

In this section we introduce a new class of automata and expressions with probabilistic choice and angelic nondeterminism. We will give the formal definition of the automata and expressions and their semantics, along with some examples.

\subsection{Automata}
\label{sec:automata}

An automaton consists of a finite set of states and transitions of four types, along with a designated start state. The four types are
 \begin{itemize}
 \item
 nondeterministic choice states labeled $\amp$ with a finite multiset of successor states;
 \item
 probabilistic states labeled $\oplus$ with a finite distribution on successor states;
 \item
 terminal states labeled $\skp$ (accept) or $\fail$ (reject) with no successor states; and
 \item
 action states labeled $a\in\Sigma$ with one successor state.
 \end{itemize}
 
Informally, an automaton can be viewed either as an \emph{acceptor} that takes a string in $\Sigma^*$ as input or as an \emph{enumerator} that generates strings in $\Sigma^*$. In either view, the operation of the automaton can be described in terms of agents. At any time in the computation, there can be multiple agents, each occupying a state and acting independently of the other agents.

In the enumeration view, a single agent originates at the start state. Thereafter, the computation proceeds as follows:
\begin{itemize}
\item
Each agent visiting a nondeterministic state $s$ labeled $\amp$ replicates itself $n-1$ times, where $n$ is the size of the multiset of successors of $s$. The original agent and its copies are distributed to the successor multiset, respecting multiplicities. That is, if $t$ occurs in the successor multiset with multiplicity $k$, then $k$ new agents are created at $t$.
\item
Each agent visiting a probabilistic state $s$ labeled $\oplus$ independently samples the distribution associated with $s$, yielding a successor state to which the agent moves.
\item
An agent visiting a terminal state labeled $\skp$ outputs (enumerates) the string of letters of $\Sigma$ it has seen so far since the start, where \emph{seen} means having visited an action state with that label. An agent visiting a terminal state labeled $\fail$ rejects.
\item
An agent visiting an action state labeled $a$ with successor $t$ appends $a$ to the string of letters it has seen and moves to $t$.
\end{itemize}
The behavior for acceptors is largely the same, except that at $\skp$ states, the agent accepts if the entire input string has been scanned and rejects otherwise, and at action state $a$ with successor $t$, the agent advances past the symbol $a$ in the input string and moves to $t$ if the next input letter is $a$, otherwise rejects.

The string $x$ is generated (or accepted) with multiplicity $k$ if $k$ is the number of agents generating (or accepting) that string. Of course, this occurs with some probability, depending on the probabilistic choices of the agents. Moreover, the probabilities for different strings and different multiplicities may be correlated. Thus the behavior of the automaton is best described by a joint distribution on the space of multisets of strings $\DNS$.

To avoid infinite multiplicities, we impose the restriction that every cycle in the automaton must contain an action state. This is known as the \emph{productivity assumption}\footnote{The productivity assumption can be weakened to allow cycles containing probabilistic states only. However, this comes at some cost in the complexity of the presentation, so we do not pursue this option here.}. It is crucial for the coalgebraic treatment of automata and expressions.

Formally, an automaton is a tuple $(S,\Sigma,\ell,\del)$, where $S$ is a finite set of states, $\Sigma$ is a finite alphabet of input letters, $\ell:S\to\{{\amp},\oplus,\skp,\fail\}\cup\Sigma$ is a labeling function, and $\del:S\to DS + \Ns + S + 1$ are the transitions, such that:
\begin{itemize}
\item
if $\ell(s)=\oplus$, then $\del(s)\in DS$, that is, $\del(s)$ is a probability measure on $S$;
\item
if $\ell(s)={\amp}$, then $\del(s)\in\Ns$, that is, $\del(s)$ is a multiset of elements of $S$;
\item
if $\ell(s)\in\Sigma$, then $\del(s)\in S$, that is, $s$ has one successor; and
\item
if $\ell(s)\in\{\skp,\fail\}$, then $s$ has no successors.
\end{itemize}
We might also wish to designate a particular start state $s_0\in S$.
States $s$ with $\ell(s)=\oplus$ are called \emph{probabilistic states}, 
those with $\ell(s)={\amp}$ are called \emph{choice states},
those with $\ell(s)\in\Sigma$ are called \emph{action states},
and those with $\ell(s)\in\{\skp,\fail\}$ are called \emph{terminal states}.

The productivity assumption has the following consequence: For every state $s$ and $k\ge 0$, every path starting from $s$ of length at least $k\len S$ visits at least $k$ action states.

Fig.~\ref{fig:coalg} is an illustrative example for a two-letter alphabet $\Sigma=\{a,b\}$ with $S=\{s,t,u,v,\dots\}$.
\begin{figure}
\centering
\begin{tikzpicture}[->, >=stealth', node distance=2cm, auto, bend angle=35, scale=0.8]
\small
 \node at (-.3,.3) {$s$};
 \node (top) at (0,0) {$\oplus$};
 \node (amp1) at (-6,-2) {$\&$};
 \path (top) edge node[swap] {$p$} (amp1);
 \node (amp2) at (-2,-2) {$\&$};
 \path (top) edge node[swap, pos=.8] {$q$} (amp2);
 \node (amp3) at (2,-2) {$\&$};
 \path (top) edge node[pos=.8] {$r$} (amp3);
 \node (amp4) at (6,-2) {$\&$};
 \path (top) edge node {$1-(p+q+r)$} (amp4);
 \node (amp1a) at (-7.3,-3.6) {$\skp$};
 \path (amp1) edge (amp1a);
 \node (amp1b) at (-6.4,-3.6) {$\skp$};
 \path (amp1) edge (amp1b);
 \node (amp1c) at (-5.6,-3.6) {$a$};
 \path (amp1) edge (amp1c);
 \node (amp1d) at (-4.8,-3.6) {$b$};
 \path (amp1) edge (amp1d);
 \node (amp1cx) at (-5.6,-4.6) {$s$};
 \path (amp1c) edge (amp1cx);
 \node (amp1dx) at (-4.8,-4.6) {$t$};
 \path (amp1d) edge (amp1dx);
 \node (amp2a) at (-2.8,-3.6) {$\skp$};
 \path (amp2) edge (amp2a);
 \node (amp2b) at (-2,-3.6) {$a$};
 \path (amp2) edge (amp2b);
 \node (amp2c) at (-1.2,-3.6) {$a$};
 \path (amp2) edge (amp2c);
 \node (amp2bx) at (-2,-4.6) {$u$};
 \path (amp2b) edge (amp2bx);
 \node (amp2cx) at (-1.2,-4.6) {$v$};
 \path (amp2c) edge (amp2cx);
 \node (amp3a) at (.3,-3.6) {$\skp$};
 \path (amp3) edge (amp3a);
 \node (amp3b) at (1.2,-3.6) {$\skp$};
 \path (amp3) edge (amp3b);
 \node (amp3c) at (2,-3.6) {$a$};
 \path (amp3) edge (amp3c);
 \node (amp3d) at (2.8,-3.6) {$b$};
 \path (amp3) edge (amp3d);
 \node (amp3e) at (3.6,-3.6) {$b$};
 \path (amp3) edge (amp3e);
 \node (amp3cx) at (2,-4.6) {$s$};
 \path (amp3c) edge (amp3cx);
 \node (amp3dx) at (2.8,-4.6) {$t$};
 \path (amp3d) edge (amp3dx);
 \node (amp3ex) at (3.6,-4.6) {$t$};
 \path (amp3e) edge (amp3ex);
 \node (amp4a) at (4.8,-3.6) {$a$};
 \path (amp4) edge (amp4a);
 \node (amp4b) at (5.6,-3.6) {$a$};
 \path (amp4) edge (amp4b);
 \node (amp4c) at (6.4,-3.6) {$a$};
 \path (amp4) edge (amp4c);
 \node (amp4d) at (7.2,-3.6) {$b$};
 \path (amp4) edge (amp4d);
 \node (amp4ax) at (4.8,-4.6) {$s$};
 \path (amp4a) edge (amp4ax);
 \node (amp4bx) at (5.6,-4.6) {$s$};
 \path (amp4b) edge (amp4bx);
 \node (amp4cx) at (6.4,-4.6) {$t$};
 \path (amp4c) edge (amp4cx);
 \node (amp4dx) at (7.2,-4.6) {$t$};
 \path (amp4d) edge (amp4dx);
\end{tikzpicture}
\caption{Fragment of an automaton}
\label{fig:coalg}
\end{figure}
In this example, starting from state $s$,
\begin{itemize}
\item
with probability $p$, $\eps$ is accepted with multiplicity 2, and the automaton transitions to $\mset s$ on input $a$ and to $\mset t$ on input $b$;
\item
with probability $q$, $\eps$ is accepted with multiplicity 1, and the automaton transitions to $\mset{u,v}$ on input $a$ and to $\emptyset$ on input $b$;
\item
with probability $r$, $\eps$ is accepted with multiplicity 2, and the automaton transitions to $\mset s$ on input $a$ and to $\mset{t,t}$ on input $b$; and
\item
with probability $1-(p+q+r)$, $\eps$ is accepted with multiplicity 0 (i.e., not accepted), and the automaton transitions to $\mset{s,s,t}$ on input $a$ and to $\mset t$ on input $b$.
\end{itemize}
From state $s$, the probability that $\eps$ is accepted with multiplicity $0$, $1$, or $2$ is $1-(p+q+r)$, $q$, and $p+r$, respectively. On input $a$, the automaton transitions to $\mset s$, $\mset{u,v}$, or $\mset{s,s,t}$ with probabilities $p+r$, $q$, and $1-(p+q+r)$, respectively.

Further examples will be given below in \S\ref{sec:examples}.

\subsection{Semantics of Automata}
\label{sec:semanticsofautomata}

Every state in $S$ represents a distribution over multisets of strings in $\DNS$. Let $\dirac m$ denote the Dirac (point mass) measure on the multiset $m$.
The semantic map $\sem-:S\to\DNS$ is defined coinductively:
\begin{align*}
\sem s &= \begin{cases} 
\dirac{\mset\eps}, & \text{if $\ell(s)=\skp$}\\
\dirac{\mset{}}, & \text{if $\ell(s)=\fail$}\\
a \cdot\sem t, & \text{if $\ell(s)= a \in\Sigma$ and $\del(s)=t$}\\
\sum_i r_i\sem{t_i}, & \text{if $\ell(s)=\oplus$ and $\del(s)=\sum_i r_it_i$}\\
{\amp}(M\sem-(m)), & \text{if $\ell(s)={\amp}$ and $\del(s)=m$.}
\end{cases}
\end{align*}
In the last case, $M\sem-$ is the map that, given a multiset $m$, applies $\sem-$ to every element of $m$ and takes the multiset of results, and $\amp$ is the semantic operation described in \S\ref{sec:opsonmeasures}. Operationally, interpret all elements of $m$ by $\sem-$ to obtain a multiset of distributions in $\DNS$, then sample all of them independently and take the multiset union of the outcomes.

For the special case of binary $\amp$ and $\oplus_r$, these definitions reduce to
\begin{align*}
\sem{s\amp t} &= (\sem s\otimes\sem t)\circ{(+)}^{-1}
& \sem{s\oplus_r t} &= r\sem s + (1-r)\sem t.
\end{align*}

We will argue in Lemma \ref{thm:autsemwelldefined} that the map $\sem-$ is well defined. Briefly, the productivity assumption ensures that a map modeling the coinductive definition is contractive in a certain complete metric space, thus by the Banach fixpoint theorem has a unique fixpoint $\sem-$.

\subsection{Expressions}
\label{sec:expressions}

Let $\Var=\{x,y,\ldots\}$ be a set of variables and $\Sigma = \{a,b,\ldots\}$ a finite set of letters disjoint from $\Var$. The language of expressions is given by the BNF grammar
\begin{align*}
e\ &::=\ x \mid a \mid e_1 \amp e_2 \mid e_1 \oplus_r e_2 \mid e_1\cmp e_2 \mid \fix xe \mid \skp \mid \fail
\end{align*}
The operator $\cmp$ is for sequential composition. In any expression of the form $e_1\cmp e_2$, we require that the left operand $e_1$ be closed (contain no free variables), a restriction which characterizes linear recursion. Without it, we could have for example $\fix x{\skp \amp a;x;b}$, corresponding to the non-regular context-free language $\set{a^nb^n}{n\ge 0}$.
Thus expressions are the appropriate analog of regular (rational) expressions in this context, though the fixpoint operator $\fix xe$ is somewhat more expressive than the usual Kleene star. Expressions with star correspond to fixpoint expressions with one occurrence of the variable in the body, which is apparently too weak to encode automata; at least we do not see how. We need the more general fixpoint form allowing multiple occurrences of the variable in order to obtain a Kleene theorem. The general form we use is motivated by the fixpoint expressions of the modal $\mu$-calculus.

We also require that in fixpoint expressions $\fix xe$, all paths from the root of $e$ to a free occurrence of $x$ must pass through some $a\in\Sigma$. This is another manifestation of the \emph{productivity assumption}.

\subsection{Semantics of Expressions}
\label{sec:semanticsofexpressions}

Like automata, expressions $e$ are interpreted as measures on multisets of strings $\semr e:\DNS$ relative to an environment $\rho:\Var\to\DNS$. The environment $\rho$ is used to interpret free variables. A closed expression (one in which all variables are bound by $\mathsf{fix}$) does not need $\rho$. The notation $\dirac m$ denotes the Dirac (point mass) measure on the multiset $m$. The definitions for $e_1 \cmp e_2$ and $\fix xe$ contain some undefined notation, which we will explain below.
\begin{gather*}
\semr\skp = \dirac{\mset\eps} \qquad\quad
\semr\fail = \dirac{\mset{}} \qquad\quad
\semr a = \dirac{\mset a} \qquad\quad
\semr x = \rho(x)\\
\semr{e_1 \oplus_r e_2} = r\semr{e_1} + (1-r)\semr{e_2} \qquad
\semr{e_1 \amp e_2} = \semr{e_1}\amp\semr{e_2} = (\semr{e_1}\otimes\semr{e_2})\circ(+)^{-1}\\
\semr{e_1 \cmp e_2} = \sem{e_1} \bind -\cdot\semr{e_2} \qquad\qquad
\semr{\fix xe} = \semr e[\semr{\fix xe}/x]
\end{gather*}

The semantics of sequential composition $\cmp$ is based on Kleisli composition involving the bind operation ($\bind$)
introduced in \S\ref{sec:composition}. 
Note also that 
$\rho$ is not needed for $\sem{e_1}$ because of the restriction that $e_1$ must be closed.
Operationally, to sample $\semr{e_1\cmp e_2}$, we sample $\sem{e_1}$ to obtain a multiset $\beta$, then sample $x\cdot\semr{e_2} = \semr{x\cmp e_2}$ for each $x\in\beta$ and take their multiset union (pointwise sum). In the notation of the meta-calculus of \S\ref{sec:metacalculus},
\begin{align*}
\sample\semr{e_1\cmp e_2} &= \letin\beta{\sample\sem{e_1}}{\sum_{x\in\beta}\ \sample\semr{x\cmp e_2}}.
\end{align*}
We have required that the left operand $e_1$ be closed in compositions $e_1\cmp e_2$. Absent this restriction, besides enabling nonlinear behavior, in the expression  in $\sem{e_1}\rho\bind-\cdot\sem{e_2}\rho$, the parts of $\sem{e_1}\rho$ supplied by $\rho$ to the free variables of $e_1$ would feed into $\sem{e_2}\rho$, which would break compositionality.

In the definition of $\semr{\fix xe}$, the notation $\rho\subst\mu x$ refers to the environment $\rho$ with $x$ rebound to $\mu$. It appears that the definition is circular. However, we will show in Theorem \ref{thm:expsemwelldefined} that the semantics is well defined due to the productivity assumption.

The expression $\fix xe$ represents the unique solution of the equation $x=e$
in $\DNS$, where $e$ may contain free occurrences of $x$. A special case is the traditional $^*$ operator of Kleene algebra
\begin{align*}
e^* &= \fix x{\skp\amp(e\cmp x)},
\end{align*}
which is the unique solution of the equation $x = \skp\amp(e\cmp x)$
in $\DNS$. Thus we will have
\begin{align*}
\fix xe &= e[(\fix xe)/x] & e^* &= \skp\amp (e\cmp e^*)
\end{align*}
as shown formally in Lemma \ref{lem:substrebind} below.

\subsection{Examples}
\label{sec:examples}

Fig.~\ref{fig:exampleexpressions} shows some small examples of expressions and their equivalent automata.

\begin{figure}
  \centering
  \begin{subfigure}[b]{0.3\textwidth}
    \centering
    \begin{align*}
    \begin{tikzpicture}[->, >=stealth', auto, node distance=16mm]
    \useasboundingbox (0,-1.7) rectangle (0,.8);
    \small
    \node (A) {$\amp$};
    \node (Al) [left of=A, node distance=3mm, yshift=1mm] {$s$};
    \node (B) [below left of=A] {$\skp$};
    \node (C) [below right of=A] {$a$};
    \path (A) edge (B);
    \path (A) edge (C);
    \path (C) edge[bend right] (A);
    \end{tikzpicture}
    \end{align*}
    \caption{$a^*$}
    \label{fig:astar}
  \end{subfigure}%
  ~ 
  \begin{subfigure}[b]{0.3\textwidth}
    \centering
    \begin{align*}
    \begin{tikzpicture}[->, >=stealth', auto, node distance=16mm]
    \useasboundingbox (0,-2.5) rectangle (0,0);
    \small
    \node (A) {$\amp$};
    \node (Al) [left of=A, node distance=3mm, yshift=1mm] {$s$};
    \node (B) [below left of=A] {$\skp$};
    \node (C) [below right of=A] {$a$};
    \node (Cx) [below of=C, node distance=10mm] {$\amp$};
    \node (Cl) [left of=Cx, node distance=3mm, yshift=1mm] {$t$};
    \node (D) [below left of=Cx, node distance=12mm] {};
    \node (E) [below right of=Cx, node distance=10mm] {$a$};
    \path (A) edge (B);
    \path (A) edge (C);
    \draw (Cx) .. controls (D) .. (A);
    \path (C) edge (Cx);
    \path (Cx) edge (E);
    \path (E) edge[bend right] (Cx);
    \end{tikzpicture}
    \end{align*}
    \caption{$(aa^*)^*$}
    \label{fig:aastarstar}
  \end{subfigure}
  ~ 
  \begin{subfigure}[b]{0.3\textwidth}
    \centering
    \begin{align*}
    \begin{tikzpicture}[->, >=stealth', auto, node distance=16mm]
    \useasboundingbox (0,-2) rectangle (0,.5);
    \small
    \node (A) {$\amp$};
    \node (Al) [left of=A, node distance=3mm, yshift=1mm] {$s$};
    \node (B) [below left of=A] {$\skp$};
    \node (C) [below right of=A] {$\oplus$};
    \node (Cl) [left of=C, node distance=3mm] {$t$};
    \node (D) [below left of=C, node distance=10mm] {$a$};
    \node (E) [below right of=C, node distance=10mm] {$b$};
    \path (A) edge (B);
    \path (A) edge (C);
    \path (C) edge (D);
    \path (C) edge (E);
    \path (E) edge[bend right] (A);
    \path (D) edge[bend angle=10, bend left] (A);
    \end{tikzpicture}
    \end{align*}
    \caption{$(a \oplus_{1/2} b)^*$}
    \label{fig:aobstar}
  \end{subfigure}
  \caption{Examples of automata}
  \label{fig:exampleexpressions}
\end{figure}

The automaton of Fig.~\ref{fig:exampleexpressions}(a) with start state $s$ corresponds to the expression $a^*$. The behavior $\sem s = \sem{a^*}$ is a point mass on the multiset $a^n\mapsto 1$. This can be seen by solving a recurrence. Let $g(n)$ be the number of paths on which $a^n$ is accepted, starting at $s$. Then $g(0)=1$ and $g(n+1)=g(n)$, so $g(n)=1$ for all $n\ge 0$. There is no probabilistic choice in this example, which is why the outcome is a point mass.

We cannot have $a^{**}$ by the productivity assumption, but $(aa^*)^*$ is allowed. This corresponds to the automaton of Fig.~\ref{fig:exampleexpressions}(b) with start state $s$. 
The behavior $\sem s= \sem{(aa^*)^*}$ is a point mass on the multiset $a^0\mapsto 1$, $a^{n+1}\mapsto 2^n$. This can be seen by solving recurrences for $f(n)$ and $g(n)$, the number of paths accepting $a^n$ starting from states $s$ and $t$, respectively:
\begin{align*}
& f(0) = 1 && g(0) = 1 && f(n + 1) = g(n) && g(n + 1) = f(n + 1) + g(n)
\end{align*}
so $g(n + 1) = 2g(n)$, giving $g(n) = 2^n$ and $f(n + 1) = 2^n$.
This example shows that multiplicities, though always finite, can grow exponentially.

The automaton of Fig.~\ref{fig:exampleexpressions}(c) with start state $s$ corresponds to the expression $(a\oplus_{1/2}b)^*$. The behavior $\sem s=\sem{(a\oplus_{1/2} b)^*}$ is the uniform distribution over all \emph{maximal} multisets of $\{a,b\}^*$ with multiplicities at most 1 (that is, they are sets) and linearly ordered by the prefix relation. For example, $\mset{\eps,a,aa,aab,aaba,aabab,\ldots}$ is one such multiset. This example illustrates that it is possible to construct continuous measures on $\NS$.

\section{Axioms}
\label{sec:axioms}

Table \ref{tab:equations} contains several properties that allow for equational reasoning. An equation $e_1=e_2$ is \emph{sound} if $\semr{e_1}=\semr{e_2}$.
Probably the most counterintuitive is that sequential composition distributes over $\amp$ on the
right (Lemma \ref{lem:ampcmp}).
This is a consequence of the idea that probabilistic choices made by separate agents are independent.

\begin{table}
\caption{Equations}
\label{tab:equations}
\begin{tabular}{l@{\hspace{1cm}}l}
  \toprule
   $e_1 \amp e_2 = e_2 \amp e_1 $ & commutativity of $\amp$\\
   $e_1 \amp (e_2 \amp e_3) = (e_1 \amp e_2) \amp e_3$ & associativity of $\amp$\\
   $e \amp \fail = e$ & identity of $\amp$ \\
   $e_1 \oplus_r e_2 = e_2 \oplus_{1-r} e_1 $ & skew commutativity of $\oplus$\\
   $(e_1 \oplus_r e_2) \oplus_s e_3 = e_1 \oplus_{rs} (e_2 \oplus_{(s-rs)/(1-rs)} e_3)$ & skew associativity of $\oplus$\\
   $e_1 \oplus_1 e_2 = e_1$ & $\oplus$-elimination\\
   $e \oplus_r e = e$ & idempotence of $\oplus$\\
   $e_1 \cmp (e_2 \cmp e_3) = (e_1 \cmp e_2) \cmp e_3$ & associativity of $\cmp$\\
   $\skp \cmp e = e$ & left identity of $\cmp$ \\
   $e \cmp \skp = e$ & right identity of $\cmp$\\
   $\fail \cmp e = \fail$ & left absorption of $\cmp$\\
   $e \cmp \fail = \fail$ & right absorption of $\cmp$\\
   $(e_1 \amp e_2) \cmp e_3 = (e_1 \cmp e_3) \amp (e_2 \cmp e_3)$ & right distributivity of $\cmp$ over $\amp$\\
   $(e_1\oplus_r e_2)\cmp e_3 = (e_1\cmp e_3)\oplus_r (e_2\cmp e_3)$ & right distributivity of $\cmp$ over $\oplus_r$\\
   $(e_1\oplus_r e_2)\amp e_3 = (e_1\amp e_3)\oplus_r(e_2\amp e_3)$ & right distributivity of $\amp$ over $\oplus_r$\\
   $a\cmp(e_1\amp e_2) = (a\cmp e_1)\amp(a\cmp e_2)$ & atomic left distributivity of $\cmp$ over $\amp$\\
   $a\cmp(e_1\oplus_r e_2) = (a\cmp e_1)\oplus_r(a\cmp e_2)$ & atomic left distributivity of $\cmp$ over $\oplus_r$\\
   $d = e[d/x]\ \Iff\ d = \fix xe$ & fixpoint\\
  \bottomrule 
\end{tabular}
\end{table}

Soundness proofs for the equations listed in Table~\ref{tab:equations} are given in
\S\ref{sec:Apka}.

\begin{toappendix}
  \label{sec:Apka}
  
  This section contains soundness proofs for the properties listed in Table \ref{tab:equations} that are not obvious. Many of these proofs can be given in both denotational and operational style using the meta-calculus of \S\ref{sec:metacalculus}. We give proofs in both styles in most cases to showcase the versatility of the deductive system.
  
  In addition, we include an example illustrating the use of the axioms in calculating the probability of a compromised server in a network security scenario. 
  
  \begin{lemma}\ 
  \label{lem:collapse}
  \begin{itemize}
  \item Nondeterministic choice ($\amp$) is associative and commutative.
  \item Sequential composition ($\cmp$) is associative.
  \item $e_1\oplus_r e_2 = e_2\oplus_{1-r}e_1$.
  \item $(e_1\oplus_r e_2) \oplus_s e_3 = e_1\oplus_{rs} (e_2 \oplus_{(s-rs)/(1-rs)} e_3)$.
  \end{itemize}
  \end{lemma}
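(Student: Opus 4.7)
The lemma bundles four algebraic identities of three different flavors, so I would handle them separately, leveraging the meta-calculus of \S\ref{sec:metacalculus} for brevity where possible.

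For commutativity and associativity of $\amp$, I would unfold the definition $\sample\semr{e_1\amp e_2} = \sample\semr{e_1}+\sample\semr{e_2}$ and appeal to commutativity and associativity of multiset union $+$; equivalently, at the denotational level, $(\semr{e_1}\otimes\semr{e_2})\circ(+)^{-1}$ is invariant under swapping the two factors because product measure is symmetric up to coordinate swap and $+$ is commutative, and the associativity argument uses the ternary version $\amp:MDM\to DM$ from \eqref{eq:genamp}. For skew commutativity and skew associativity of $\oplus$, I would just expand $\semr{e_1\oplus_r e_2}=r\semr{e_1}+(1-r)\semr{e_2}$ as a linear combination of measures and check the coefficients in front of $\semr{e_1},\semr{e_2},\semr{e_3}$; both sides of the skew associativity equation produce $rs$, $s(1-r)$, $1-s$, respectively.

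The nontrivial item is associativity of $\cmp$, which is essentially the Kleisli associativity law for bind. My plan is to prove it in three stages. First, I would establish the identity $x\cdot(\nu\bind-\cdot\rho) = (x\cdot\nu)\bind-\cdot\rho$ for any string $x\in\Sigma^*$; this follows from the monoid action $xy\cdot\rho = x\cdot(y\cdot\rho)$ developed in \S\ref{sec:injectivemonoidactions} together with the fact that shifting by $x$ commutes with integration. Next, I would extend this to multisets by
\begin{align*}
\beta\cdot(\nu\bind-\cdot\rho) \;=\; \sum_{x\in\beta} x\cdot(\nu\bind-\cdot\rho) \;=\; \sum_{x\in\beta}(x\cdot\nu)\bind-\cdot\rho \;=\; (\beta\cdot\nu)\bind-\cdot\rho,
\end{align*}
where the last step uses the operational reading: an independent sample of $\sum_{x\in\beta}(x\cdot\nu)\bind-\cdot\rho$ first independently samples $\delta_x\sim x\cdot\nu$, then independently samples $z\cdot\rho$ for each $z\in\sum_x\delta_x$, which has the same joint distribution as sampling $\beta\cdot\nu$ and then $\gamma\cdot\rho$. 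Finally, integrating against $\mu=\sem{e_1}$ and applying Tonelli's theorem (valid because all integrands are nonnegative and all measures are probability measures) gives
\begin{align*}
(\mu\bind-\cdot(\nu\bind-\cdot\rho))(A) = \int(\beta\cdot\nu\bind-\cdot\rho)(A)\,d\mu = \int\!\!\int(\gamma\cdot\rho)(A)\,d(\beta\cdot\nu)\,d\mu = ((\mu\bind-\cdot\nu)\bind-\cdot\rho)(A),
\end{align*}
which is the desired equality when unfolded via $\semr{e_1\cmp e_2}=\sem{e_1}\bind-\cdot\semr{e_2}$.

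The main obstacle will be the middle step, namely extending $x\cdot(\nu\bind-\cdot\rho) = (x\cdot\nu)\bind-\cdot\rho$ from strings to multisets: one must check that the finite-multiplicity, independent-sampling interpretation of $\amp$-like operations genuinely factors through multiset union in the way claimed, which requires combining the measurability results of Lemma~\ref{lem:injproperties} with the operational semantics from \S\ref{sec:composition}. Once that is in place, the Fubini/Tonelli interchange is routine since all relevant measurability has already been verified in the earlier lemmas.
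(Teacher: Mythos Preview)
Your proposal is correct, and for the $\amp$ and $\oplus$ items it matches the paper's (implicit) reasoning: the paper states Lemma~\ref{lem:collapse} without proof, treating these as routine consequences of the definitions.

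For associativity of $\cmp$, your route is genuinely different from the paper's. The paper handles this separately in Lemma~\ref{lem:cmpcmp}, where the real argument is an \emph{operational} one in the meta-calculus: it unfolds $\sample\semr{(e_1\cmp e_2)\cmp e_3}$ as nested $\mathsf{let}$-bindings over independent samples, regroups them using $\sample w\cdot\sem e = w\cdot\sample\sem e$, and reassembles into $\sample\semr{e_1\cmp(e_2\cmp e_3)}$. (The paper also offers a short derivation via the substitution operator $[-]$ and Lemma~\ref{lem:elimcmp}, but that route is circular since Lemma~\ref{lem:elimcmp} already invokes associativity in its $\cmp$ case.) Your approach is instead a \emph{denotational} Kleisli-associativity argument: establish $x\cdot(\nu\bind-\cdot\rho)=(x\cdot\nu)\bind-\cdot\rho$ from the monoid action, lift to multisets, then interchange integrals via Tonelli. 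Both arguments encode the same independence-of-sampling idea; yours makes the measure-theoretic content explicit and avoids the informal sample-pushing of the meta-calculus, at the cost of having to justify the multiset lifting step carefully.

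Two small points worth tightening. First, in your multiset step the symbol $\sum_{x\in\beta}$ on the measure level should be ${\amp}_{x\in\beta}$ (independent sampling followed by multiset union), not a literal sum of measures; what you really need there is the identity $({\amp}_i\mu_i)\bind-\cdot\rho = {\amp}_i(\mu_i\bind-\cdot\rho)$, which is exactly \eqref{eq:amp1} in the paper's proof of Lemma~\ref{lem:elimcmp}. Second, Lemma~\ref{lem:injproperties} is about preservation of $\equiv_n$, not measurability; the measurability of $-\cdot\nu:\NS\to\DNS$ that licenses your Tonelli step is the separate (unnamed) lemma following Lemma~\ref{lem:sumprod}.
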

  
  \begin{lemma}
  \label{lem:ampcmp}
  $\semr{(e_1\amp e_2)\cmp e_3} = \semr{(e_1\cmp e_3)\amp(e_2\cmp e_3)}$.
  \end{lemma}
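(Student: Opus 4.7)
The plan is to work in the meta-calculus of \S\ref{sec:metacalculus}, which makes the argument essentially a calculation with independent samples, and then indicate the underlying measure-theoretic justification. The essential point is that the semantics of $\cmp$ takes a fresh independent sample of $x\cdot\semr{e_3}$ for \emph{each occurrence} of $x$ in the sampled multiset, so whether we split the initial multiset first or combine the results after does not change the joint distribution; this is precisely where multiset (rather than powerset) semantics matters. Note that $e_1\amp e_2$ must be closed for the composition $(e_1\amp e_2)\cmp e_3$ to be well formed, so $e_1$ and $e_2$ are closed individually, and $\sem{e_1\amp e_2}$ does not depend on $\rho$.

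First I would unfold the LHS: by definition, $\sample\semr{(e_1\amp e_2)\cmp e_3} = \letin{\beta}{\sample\sem{e_1\amp e_2}}{\sum_{x\in\beta}\sample(x\cdot\semr{e_3})}$, and by the definition of $\amp$, $\sample\sem{e_1\amp e_2} = \sample\sem{e_1} + \sample\sem{e_2}$ with the two samples independent. Substituting gives $\letin{\beta_1}{\sample\sem{e_1}}{\letin{\beta_2}{\sample\sem{e_2}}{\sum_{x\in\beta_1+\beta_2}\sample(x\cdot\semr{e_3})}}$. Next I split the multiset sum: since each occurrence of $x$ in $\beta_1+\beta_2$ contributes a fresh independent sample, we can reindex as $\sum_{x\in\beta_1}\sample(x\cdot\semr{e_3}) + \sum_{x\in\beta_2}\sample(x\cdot\semr{e_3})$ with all samples independent. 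Finally I refold: the two pieces are exactly $\sample\semr{e_1\cmp e_3}$ and $\sample\semr{e_2\cmp e_3}$, and their multiset sum is $\sample\semr{(e_1\cmp e_3)\amp(e_2\cmp e_3)}$. Equality of sample distributions yields equality of measures via the $\sample/\dist$ isomorphism.

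To make the argument rigorous at the level of measures, the auxiliary fact needed is $(\beta_1+\beta_2)\cdot\nu = (\beta_1\cdot\nu)\amp(\beta_2\cdot\nu)$ as elements of $\DNS$, for $\beta_1,\beta_2\in\NS$ and $\nu\in\DNS$. This follows from the definition $\beta\cdot\nu = \otimes(M(-\cdot\nu)(\beta))\circ\Sigma^{-1}$ in \S\ref{sec:composition} together with compatibility of $\otimes$ with multiset union on the indexing multiset. Then one integrates against the pushforward measure $(\sem{e_1}\otimes\sem{e_2})\circ(+)^{-1}$, using a Fubini-style change of variables to split the integral over $\beta_1,\beta_2$ independently; this reproduces $(\semr{e_1}\bind-\cdot\semr{e_3})\amp(\semr{e_2}\bind-\cdot\semr{e_3})$, which by definition is $\semr{(e_1\cmp e_3)\amp(e_2\cmp e_3)}$.

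The main obstacle is the splitting step: the shorthand ``$\sum_{x\in\beta}\sample(x\cdot\nu)$'' compactly hides an indexed collection of independent samples, and its distribution into two sums over $\beta_1$ and $\beta_2$ must be justified from the underlying definition of the generalized $\amp$ in \eqref{eq:genamp} rather than treated as formal manipulation. Once the auxiliary lemma $(\beta_1+\beta_2)\cdot\nu=(\beta_1\cdot\nu)\amp(\beta_2\cdot\nu)$ is established, the rest is routine integration, but this is the one place where the multiplicity information is genuinely used and where a set-based semantics would fail.
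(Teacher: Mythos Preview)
Your proposal is correct and matches the paper's operational-style proof essentially step for step: unfold $\cmp$, unfold $\amp$ into independent samples $\beta_1,\beta_2$, split the multiset-indexed sum, and refold. The paper additionally offers a one-line denotational shortcut via Lemma~\ref{lem:elimcmp} and the substitution operator, observing $(e_1\amp e_2)[e_3] = e_1[e_3]\amp e_2[e_3]$ by definition, but your meta-calculus argument (and your remark that the splitting step is where multiset semantics genuinely matters) is the same as the paper's alternative proof.
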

  \begin{proof}
  Using Lemma \ref{lem:elimcmp} and the definition of the substitution operator $[-]$ from \S\ref{sec:expressiontoautomaton},
  \begin{align*}
  \semr{(e_1\amp e_2)\cmp e_3} &= \semr{(e_1\amp e_2)[e_3]}
  = \semr{e_1[e_3]\amp e_2[e_3]}
  = \semr{(e_1\cmp e_3)\amp (e_2\cmp e_3)}.
  \end{align*}
  
  One can also argue operationally:
  \begin{align*}
  & \sample\semr{(e_1\amp e_2)\cmp e_3}\\
  &= \letin k{\sample\semr{e_1\amp e_2}}{\Sigma\mset{\sample x\cdot\semr{e_3}\mid x\in k}}\\
  &= \letin{(k_1,k_2)}{(\sample\semr{e_1},\sample\semr{e_2})}{}\\
  &\ieq \letin{k}{k_1+k_2}{\Sigma\mset{\sample x\cdot\semr{e_3}\mid x\in k}}\\
  &= \letin{(k_1,k_2)}{(\sample\semr{e_1},\sample\semr{e_2})}{}\\
  &\ieq \letin{(m_1,m_2)}{(\Sigma\mset{\sample x\cdot\semr{e_3}\mid x\in k_1},\Sigma\mset{\sample x\cdot\semr{e_3}\mid x\in k_2})}{m_1 + m_2}\\
  &= \letin{k_1}{\sample\semr{e_1}}{\letin{m_1}{\Sigma\mset{\sample x\cdot\semr{e_3}\mid x\in k_1}}{}}\\
  &\ieq \letin{k_2}{\sample\semr{e_2}}{\letin{m_2}{\Sigma\mset{\sample x\cdot\semr{e_3}\mid x\in k_2}}{m_1+m_2}}\\
  &= \letin{m_1}{\sample\semr{e_1\cmp e_3}}{\letin{m_2}{\sample\semr{e_2\cmp e_3}}{m_1+m_2}}\\
  &= \sample\semr{(e_1\cmp e_3)\amp(e_2\cmp e_3)}.
  \qedhere
  \end{align*}
  \end{proof}
  
  \begin{lemma}
  \label{lem:ocmp}
  $\semr{(e_1\oplus_r e_2)\cmp e_3} = \semr{(e_1\cmp e_3)\oplus_r (e_2\cmp e_3)}$.
  \end{lemma}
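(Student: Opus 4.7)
The plan is to unfold both sides using the denotational definition of sequential composition and reduce the claim to linearity of the bind operation $\bind$ in its first argument. Since $(e_1 \oplus_r e_2)\cmp e_3$ is a well-formed expression, the left operand $e_1 \oplus_r e_2$ must be closed, hence so are $e_1$ and $e_2$; thus $\sem{e_1}$, $\sem{e_2}$, and $\sem{e_1 \oplus_r e_2}$ do not depend on $\rho$.

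The first step is to compute, using the semantic clause for $\cmp$ and then the clause for $\oplus_r$:
\begin{align*}
\semr{(e_1 \oplus_r e_2)\cmp e_3}
 &= \sem{e_1 \oplus_r e_2}\bind{-}\cdot\semr{e_3}
 = (r\sem{e_1} + (1-r)\sem{e_2})\bind{-}\cdot\semr{e_3}.
\end{align*}
The second step is to invoke the integral formula \eqref{eq:binddef} for $\bind$ and apply linearity of the Lebesgue integral with respect to the measure being integrated against. For every measurable $A\subs\NS$,
\begin{align*}
((r\sem{e_1} + (1-r)\sem{e_2})\bind{-}\cdot\semr{e_3})(A)
 &= \int_{\beta\in\NS}(\beta\cdot\semr{e_3})(A)\,d(r\sem{e_1} + (1-r)\sem{e_2})\\
 &= r\int_{\beta\in\NS}(\beta\cdot\semr{e_3})(A)\,d\sem{e_1} + (1-r)\int_{\beta\in\NS}(\beta\cdot\semr{e_3})(A)\,d\sem{e_2}\\
 &= r(\sem{e_1}\bind{-}\cdot\semr{e_3})(A) + (1-r)(\sem{e_2}\bind{-}\cdot\semr{e_3})(A)\\
 &= r\semr{e_1\cmp e_3}(A) + (1-r)\semr{e_2\cmp e_3}(A)\\
 &= \semr{(e_1\cmp e_3)\oplus_r(e_2\cmp e_3)}(A).
\end{align*}
Since this identity holds on the generating class of basic measurable sets $[\alpha]_n$ (and indeed on all measurable $A$), the two measures coincide.

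The only nontrivial point is that the Lebesgue integral is linear with respect to finite convex combinations of measures; this is a standard fact, but applying it requires knowing that the integrand $\beta \mapsto (\beta\cdot\semr{e_3})(A)$ is measurable, which is precisely what was established in the measurability lemma at the end of \S\ref{sec:composition}. The main obstacle, therefore, is purely bookkeeping — there is no conceptual difficulty here, in contrast to the $\amp$ case (Lemma \ref{lem:ampcmp}) where independence of agents actually matters. As an alternative, one can also give a short operational proof in the meta-calculus of \S\ref{sec:metacalculus} by observing that $\sample\semr{e_1\oplus_r e_2}$ is equivalent to flipping an $r$-biased coin and sampling $\sem{e_1}$ or $\sem{e_2}$ accordingly, after which the subsequent $\letin{\beta}{-}{\sum_{x\in\beta}\sample(x\cdot\semr{e_3})}$ commutes with the coin flip.
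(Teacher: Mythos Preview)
Your proof is correct. You unfold the semantic clause for $\cmp$ to the integral formula \eqref{eq:binddef} and then use linearity of Lebesgue integration with respect to the integrating measure, which is exactly the right observation; the measurability of the integrand is indeed handled by the lemma you cite at the end of \S\ref{sec:composition}.

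This is a different route from the paper's primary argument. The paper proves Lemma~\ref{lem:ocmp} by invoking Lemma~\ref{lem:elimcmp} and the syntactic substitution operator $[e_3]$: one rewrites $(e_1\oplus_r e_2)\cmp e_3$ as $(e_1\oplus_r e_2)[e_3] = e_1[e_3]\oplus_r e_2[e_3]$ and reads off the result. The underlying semantic fact---that $\bind$ is linear in its first argument---is established inside the proof of Lemma~\ref{lem:elimcmp}, but there it is phrased via the pushforward representation $\mu\bind f = \mu\circ({\amp}\circ Mf)^{-1}$ rather than the integral. Your argument bypasses the substitution machinery entirely and goes straight to the integral; this is more elementary and self-contained, at the cost of not exhibiting the connection to the syntactic operator $[-]$ that the paper reuses for the other distributivity lemmas. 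The paper also gives an operational proof in the meta-calculus, which is essentially the alternative you sketch in your final sentence.
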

  \begin{proof}
  Using Lemma \ref{lem:elimcmp} and the definition of the substitution operator $[-]$ from \S\ref{sec:expressiontoautomaton},
  \begin{align*}
  \semr{(e_1\oplus_r e_2)\cmp e_3} &= \semr{(e_1\oplus_r e_2)[e_3]}
  = \semr{e_1[e_3]\oplus_r e_2[e_3]}
  = \semr{(e_1\cmp e_3)\oplus_r (e_2\cmp e_3)}.
  \end{align*}
  
  Operationally, one can get the same result given a random number generator $\rnd:()\to[0,1)$:
  \begin{align*}
  & \sample\semr{(e_1\oplus_r e_2)\cmp e_3}\\
  &= \letin k{\sample\semr{e_1\oplus_r e_2}}{\Sigma\mset{\sample x\cdot\semr{e_3}\mid x\in k}}\\
  &= \letin{(k_1,k_2)}{(\sample\semr{e_1},\sample\semr{e_2})}{}\\
  &\ieq \letin{k}{\mathsf{if\ } \rnd() \le r \mathsf{\ then\ } k_1 \mathsf{\ else\ } k_2}{\Sigma\mset{\sample x\cdot\semr{e_3}\mid x\in k}}\\
  &= \letin{(k_1,k_2)}{(\sample\semr{e_1},\sample\semr{e_2})}{}\\
  &\ieq \letin{(m_1,m_2)}{(\Sigma\mset{\sample x\cdot\semr{e_3}\mid x\in k_1},\Sigma\mset{\sample x\cdot\semr{e_3}\mid x\in k_2})}{}\\
  &\ieq \mathsf{if\ } \rnd() \le r \mathsf{\ then\ } m_1 \mathsf{\ else\ } m_2\\
  &= \letin{k_1}{\sample\semr{e_1}}{\letin{m_1}{\Sigma\mset{\sample x\cdot\semr{e_3}\mid x\in k_1}}{}}\\
  &\ieq \letin{k_2}{\sample\semr{e_2}}{\letin{m_2}{\Sigma\mset{\sample x\cdot\semr{e_3}\mid x\in k_2}}{}}\\
  &\ieq \mathsf{if\ } \rnd() \le r \mathsf{\ then\ } m_1 \mathsf{\ else\ } m_2\\
  &= \letin{m_1}{\sample\semr{e_1\cmp e_3}}{}\\
  &\ieq \letin{m_2}{\sample\semr{e_2\cmp e_3}}{\mathsf{if\ } \rnd() \le r \mathsf{\ then\ } m_1 \mathsf{\ else\ } m_2}\\
  &= \sample\semr{(e_1\cmp e_3)\oplus_r(e_2\cmp e_3)}.
  \qedhere
  \end{align*}
  \end{proof}
  
  \begin{lemma}
  \label{lem:cmpcmp}
  $\semr{(e_1\cmp e_2)\cmp e_3} = \semr{e_1\cmp(e_2\cmp e_3)}$.
  \end{lemma}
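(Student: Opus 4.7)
The goal is to show sequential composition is associative. By the semantics of composition, $\semr{e_1 \cmp e_2} = \sem{e_1} \bind -\cdot \semr{e_2}$, and since the left operand of any $\cmp$ must be closed (forcing both $e_1$ and $e_2$ to be closed in both parenthesizations), the problem reduces to establishing associativity of the bind operation itself:
\[
(\mu \bind -\cdot \nu_1) \bind -\cdot \nu_2 \;=\; \mu \bind -\cdot (\nu_1 \bind -\cdot \nu_2),
\]
where $\mu = \sem{e_1}$, $\nu_1 = \sem{e_2}$, and $\nu_2 = \semr{e_3}$, all living in $\DNS$.

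The first step is to prove an \emph{equivariance} identity: for any $x \in \Sigma^*$ and $\mu, \nu \in \DNS$,
\[
(x \cdot \mu) \bind -\cdot \nu \;=\; x \cdot (\mu \bind -\cdot \nu).
\]
This reduces to the monoid-action property $(xy) \cdot \nu = x \cdot (y \cdot \nu)$ from \S\ref{sec:injectivemonoidactions}, lifted first to multisets as $(x \cdot \beta) \cdot \nu = x \cdot (\beta \cdot \nu)$ by expanding $\beta \cdot \nu = \sum_{y \in \beta} y \cdot \nu$, and then upgraded to the level of $\mu$ by a routine change-of-variable through the pushforward $x\cdot\mu = \mu\circ(x\cdot-)^{-1}$. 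The second step extends equivariance to arbitrary $\beta \in \NS$: $(\beta \cdot \nu_1) \bind -\cdot \nu_2 = \beta \cdot (\nu_1 \bind -\cdot \nu_2)$. This follows by expanding $\beta \cdot \nu_1$ as the pointwise sum $\sum_{x \in \beta} x \cdot \nu_1$, commuting bind past the sum (linearity of Lebesgue integration in the measure, justified via monotone convergence to handle possibly countably infinite support of $\beta$), and applying the equivariance identity term-by-term.

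The main identity then falls out by Fubini--Tonelli. For a basic measurable set $[\alpha]_n$,
\begin{align*}
((\mu \bind -\cdot \nu_1) \bind -\cdot \nu_2)([\alpha]_n)
&= \int_\gamma (\gamma \cdot \nu_2)([\alpha]_n) \, d(\mu \bind -\cdot \nu_1)(\gamma) \\
&= \int_\beta \int_\gamma (\gamma \cdot \nu_2)([\alpha]_n) \, d(\beta \cdot \nu_1)(\gamma) \, d\mu(\beta) \\
&= \int_\beta ((\beta \cdot \nu_1) \bind -\cdot \nu_2)([\alpha]_n) \, d\mu(\beta) \\
&= \int_\beta (\beta \cdot (\nu_1 \bind -\cdot \nu_2))([\alpha]_n) \, d\mu(\beta) \\
&= (\mu \bind -\cdot (\nu_1 \bind -\cdot \nu_2))([\alpha]_n),
\end{align*}
using Fubini in the second line (valid by nonnegativity of all integrands and the measurability of $-\cdot\nu$ established in \S\ref{sec:composition}) and the multiset equivariance step in the fourth line. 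Since measures on $\NS$ are determined by their values on the generating basis $\{[\alpha]_n\}$, the desired equality of measures follows, and hence $\semr{(e_1\cmp e_2)\cmp e_3} = \semr{e_1\cmp(e_2\cmp e_3)}$.

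The main obstacle is the rigorous handling of the sum $\beta \cdot \nu = \sum_{x \in \beta} x \cdot \nu$ when $\beta$ has countably infinite support: both the interchange of sum and integral in step two and the invocation of Fubini in the last step depend essentially on nonnegativity and on the measurability of $-\cdot\nu:\NS\to\DNS$ already established in \S\ref{sec:composition}. With those technicalities dispatched, the remainder is a routine monad-law verification mediated by the $\Sigma^*$-action on $\DNS$, and a purely operational restatement via the meta-calculus of \S\ref{sec:metacalculus} could be added in parallel to reinforce the intuition.
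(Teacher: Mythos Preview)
Your overall architecture is sound and differs from the paper's two proofs (one via the substitution operator $[-]$ and Lemma~\ref{lem:elimcmp}, one purely in the operational meta-calculus): you reduce directly to bind-associativity $(\mu\bind-\cdot\nu_1)\bind-\cdot\nu_2 = \mu\bind-\cdot(\nu_1\bind-\cdot\nu_2)$ and attack it measure-theoretically with Fubini. That is a perfectly good route, and the Fubini step itself is correct.

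However, there is a genuine gap in your justification of the key intermediate identity $(\beta\cdot\nu_1)\bind-\cdot\nu_2 = \beta\cdot(\nu_1\bind-\cdot\nu_2)$. You write that this follows by ``expanding $\beta\cdot\nu_1$ as the pointwise sum $\sum_{x\in\beta} x\cdot\nu_1$'' and then ``commuting bind past the sum (linearity of Lebesgue integration in the measure \ldots)''. But $\beta\cdot\nu$ is \emph{not} a sum of measures: by definition (\S\ref{sec:composition}) it is ${\amp}M(-\cdot\nu)(\beta)$, i.e.\ the product measure $\bigotimes_{x\in\beta}(x\cdot\nu)$ pushed forward along multiset union. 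Concretely, for $\beta=\mset{x_1,x_2}$ and $\nu=\dirac{\mset\eps}$, the sum $x_1\cdot\nu + x_2\cdot\nu$ is supported on $\{\mset{x_1},\mset{x_2}\}$, whereas $\beta\cdot\nu=\dirac{\mset{x_1,x_2}}$. So the linearity/monotone-convergence argument you invoke does not apply; you are conflating the multiset-union that happens at the \emph{sample} level ($\sample(\beta\cdot\nu)=\sum_{x\in\beta}\sample(x\cdot\nu)$ in the meta-calculus) with a sum at the \emph{measure} level.

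The identity you need is nonetheless true, but the correct reason is that bind with $-\cdot\nu_2$ distributes over $\amp$: one shows $(\mu\amp\mu')\bind(-\cdot\nu_2)=(\mu\bind-\cdot\nu_2)\amp(\mu'\bind-\cdot\nu_2)$ using additivity of the map ${\amp}\circ M(-\cdot\nu_2)$ (this is exactly the content of \eqref{eq:amp1} in the proof of Lemma~\ref{lem:elimcmp}), and then iterates over the finitely many elements of $\beta$ of length at most $n$ that can contribute to $[\alpha]_n$. The same correction applies to your step~2 justification of $(x\cdot\gamma)\cdot\nu = x\cdot(\gamma\cdot\nu)$: it holds because $x\cdot-$ commutes with the product-and-pushforward defining $\amp$, not because of any linear expansion. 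Once these two lemmas are re-proved with the $\amp$-structure in mind, your Fubini calculation goes through unchanged.
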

  \begin{proof}
  Using Lemma \ref{lem:elimcmp} and the definition of the substitution operator $[-]$ from \S\ref{sec:expressiontoautomaton},
  \begin{align*}
  \semr{(e_1\cmp e_2)\cmp e_3} &= \semr{(e_1\cmp e_2)[e_3]}
  = \semr{e_1\cmp(e_2[e_3])}
  = \semr{e_1\cmp(e_2\cmp e_3)}.
  \end{align*}
  
  Operationally, using the fact that $\sample w\cdot\sem{e} = w\cdot\sample\sem{e}$, we can get the same result:
  \begin{align*}
  & \sample\semr{(e_1\cmp e_2)\cmp e_3}\\
  &= \letin k{\sample\semr{e_1\cmp e_2}}{\Sigma\mset{\sample x\cdot\semr{e_3}\mid x\in k}}\\
  &= \letin{k_1}{\sample\semr{e_1}}{}\\
  &\ieq \letin{k}{\Sigma\mset{\sample y\cdot\semr{e_2}\mid y\in k_1}}{\Sigma\mset{\sample x\cdot\semr{e_3}\mid x\in k}}\\
  &= \letin{k_1}{\sample\semr{e_1}}{}\\
  &\ieq \Sigma\mset{\Sigma\mset{\sample x\cdot\semr{e_3}\mid x\in \sample y\cdot\semr{e_2}}\mid y\in k_1}\\
  &= \letin{k_1}{\sample\semr{e_1}}{}\\
  &\ieq \Sigma\mset{\Sigma\mset{\sample x\cdot\semr{e_3}\mid x\in y \cdot\sample\semr{e_2}}\mid y\in k_1}\\
  &= \letin{k_1}{\sample\semr{e_1}}{\letin{k_2}{\sample\semr{e_2}}{}}\\
  &\ieq \Sigma\mset{\Sigma\mset{\sample y\cdot x'\cdot\semr{e_3}\mid x'\in k_2}\mid y\in k_1}\\
  &= \letin{k_1}{\sample\semr{e_1}}{\letin{k_2}{\sample\semr{e_2}}{}}\\
  &\ieq \Sigma\mset{y\cdot\Sigma\mset{\sample x'\cdot\semr{e_3}\mid x'\in k_2}\mid y\in k_1}\\
  &= \letin{k_1}{\sample\semr{e_1}}{\Sigma\mset{y\cdot\sample\semr{e_2\cmp e_3}\mid y\in k_1}}\\
  &= \letin{k_1}{\sample\semr{e_1}}{\Sigma\mset{\sample y\cdot\semr{e_2\cmp e_3}\mid y\in k_1}}\\
  &= \sample\semr{e_1\cmp (e_2\cmp e_3)}.
  \qedhere
  \end{align*}
  \end{proof}
  
  \begin{lemma}
  \label{lem:oamp}
  $\semr{(e_1\oplus_r e_2)\amp e_3} = \semr{(e_1\amp e_3)\oplus_r(e_2\amp e_3)}$.
  \end{lemma}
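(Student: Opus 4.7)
The plan is to unfold both sides using the definitions of $\amp$ and $\oplus_r$ from \S\ref{sec:opsonmeasures} and then reduce to a single measure-theoretic fact: that the product measure $\otimes$ is bilinear with respect to convex combinations of its arguments. Writing $\mu_i = \semr{e_i}$, the left-hand side unfolds to
\begin{align*}
\semr{(e_1\oplus_r e_2)\amp e_3} &= ((r\mu_1+(1-r)\mu_2)\otimes\mu_3)\circ(+)^{-1},
\end{align*}
while the right-hand side unfolds to
\begin{align*}
\semr{(e_1\amp e_3)\oplus_r(e_2\amp e_3)} &= r((\mu_1\otimes\mu_3)\circ(+)^{-1}) + (1-r)((\mu_2\otimes\mu_3)\circ(+)^{-1}).
\end{align*}
The equality of these two expressions reduces to two routine facts: first, that product measure satisfies $(r\mu_1 + (1-r)\mu_2)\otimes\mu_3 = r(\mu_1\otimes\mu_3) + (1-r)(\mu_2\otimes\mu_3)$ on measurable rectangles, and hence by Carathéodory extension on all measurable sets; and second, that pushforward $-\circ(+)^{-1}$ is linear in the underlying measure. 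Combining these gives the result immediately.

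Following the pattern of Lemmas \ref{lem:ampcmp} and \ref{lem:ocmp}, I would also give an operational derivation via the meta-calculus of \S\ref{sec:metacalculus}. The key observation is that sampling $\semr{e_3}$ is independent of the coin flip used by $\oplus_r$, so we may factor the sample of $\semr{e_3}$ either inside or outside the conditional without changing the joint distribution. Concretely, $\sample\semr{(e_1\oplus_r e_2)\amp e_3}$ is a let-binding that samples $\semr{e_3}$ once and adds its multiset to the result of sampling $\semr{e_1\oplus_r e_2}$; by independence of samples, duplicating the sample of $\semr{e_3}$ on both branches of the $\mathsf{if}\ \rnd()\le r$ construct produces the same outcome distribution, which is precisely $\sample\semr{(e_1\amp e_3)\oplus_r(e_2\amp e_3)}$.

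There is no real obstacle here: the lemma is essentially a bilinearity statement, and all the heavy lifting has already been done in setting up the semantic operations in \S\ref{sec:opsonmeasures}. The only thing to be careful about is being explicit that the identity $(r\mu_1+(1-r)\mu_2)\otimes\mu_3 = r(\mu_1\otimes\mu_3)+(1-r)(\mu_2\otimes\mu_3)$ holds as measures on $\NS\times\NS$ (not just on rectangles), which is standard but should be cited or spelled out. The operational version is essentially a one-line meta-calculus rewriting, analogous to the proofs already given.
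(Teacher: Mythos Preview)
Your proposal is correct. Both the denotational and operational arguments go through without difficulty.

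The operational derivation you sketch is essentially the same as the paper's. The denotational argument, however, takes a genuinely different route. The paper proves this lemma by invoking the Beck distributive law $\otimes:MD\to DM$ at the categorical level: one views $(e_1\oplus_r e_2)\amp e_3$ as an element of $MD\DNS$ (a multiset containing the distribution $r\dirac{\semr{e_1}}+(1-r)\dirac{\semr{e_2}}$ together with $\dirac{\semr{e_3}}$), and the commutativity of a diagram built from the Beck axioms and naturality of $\mu^D\mu^M$ yields the result. Your argument instead unfolds $\amp$ directly as a product measure followed by a pushforward and appeals to the bilinearity of $\otimes$ on probability measures (checked on rectangles, extended by uniqueness) together with linearity of pushforward. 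Your approach is more elementary and self-contained; the paper's approach buys a connection to the distributive-law machinery that is a central organizing theme of the paper, and in particular makes explicit that this axiom is precisely the shadow of the Beck law in the equational theory.
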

  \begin{proof}
  The expression $(e_1\amp e_3)\oplus_r(e_2\amp e_3)$ can be obtained by applying the distributive law $\otimes: MD\to DM$ for finite multisets to $(e_1\oplus_r e_2)\amp e_3$. 
  \begin{align*}
  \begin{tikzpicture}[->, >=stealth', auto, node distance=32mm]
  \small
  \node (NW) {$MD\DNS$};
  \node (NM) [right of=NW] {$DM\DNS$};
  \node (NE) [right of=NM] {$DDM(\NS)$};
  \node (NEE) [right of=NE] {$D\DNS$};
  \node (SW) [below of=NW, node distance=12mm] {$M\DNS$};
  \node (SE) [below of=NE, node distance=12mm] {$DM(\NS)$};
  \node (SEE) [below of=NEE, node distance=12mm] {$\DNS$};
  \path (NW) edge node {$\otimes_{\DNS}$} (NM);
  \path (NW) edge node[swap] {$M\mu^D_{\NS}$} (SW);
  \path (NM) edge node[swap] {$D\otimes_{\NS}$} (NE);
  \path (NM) edge[bend angle=10, bend left] node {$D{\amp}_{\Sigma^*}$} (NEE);
  \path (NE) edge node[swap] {$DD\mu^M_{\Sigma^*}$} (NEE);
  \path (NE) edge node {$\mu^D_{M(\NS)}$} (SE);
  \path (NEE) edge node {$\mu^D_{\NS}$} (SEE);
  \path (SW) edge node {$\otimes_{\NS}$} (SE);
  \path (SW) edge[bend angle=10, bend right] node[swap] {${\amp}_{\Sigma^*}$} (SEE);
  \path (SE) edge node {$D\mu^M_{\Sigma^*}$} (SEE);
  \end{tikzpicture}
  \end{align*}
  The left-hand rectangle is an instance of the distributive law. The right-hand rectangle commutes by naturality; it is an instance of the horizontal product $\mu^D\mu^M$. The curved arrows are just the definition of ${\amp}_{\Sigma^*}$.
  
  The same result can be obtained operationally:
  \begin{align*}
  & \sample\semr{(e_1\oplus_r e_2)\amp e_3}\\
  &= \letin{(k, m_3)}{(\sample\semr{e_1\oplus_r e_2}, \sample\semr{e_3})}{k+m_3}\\
  &= \letin{(m_1, m_2)}{(\sample\semr{e_1}, \sample\semr{e_2})}{}\\
  &\ieq \letin{k}{\mathsf{if\ }\rnd()\le r\mathsf{\ then\ }m_1\mathsf{\ else\ }m_2}{}\\
  &\ieq \letin{m_3}{\sample\semr{e_3}}{k+m_3}\\
  &= \letin{m_1}{\sample\semr{e_1}}\\
  &\ieq \letin{m_2}{\sample\semr{e_2}}\\
  &\ieq \letin{m_3}{\sample\semr{e_3}}\\
  &\ieq \mathsf{if\ }\rnd()\le r\mathsf{\ then\ }m_1+m_3\mathsf{\ else\ }m_2+m_3\\
  &= \sample\semr{(e_1\amp e_3)\oplus_r(e_2\amp e_3)}.
  \tag*\qedhere
  \end{align*}
  \end{proof}
  
  \begin{lemma}
  \label{lem:pp}
  $\semr{a\cmp e} = a\cdot\semr{e} = \semr e\circ(a\cdot-)^{-1}$.
  \end{lemma}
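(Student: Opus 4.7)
The plan is to unfold all relevant definitions and verify that composition with an atomic symbol on the left reduces to the lifted monoid action. The second equality $a \cdot \semr{e} = \semr{e} \circ (a \cdot -)^{-1}$ is immediate, since it is precisely the definition of the lifted monoid action on $\DNS$ given in \S\ref{sec:injectivemonoidactions}. So the work is concentrated in the first equality $\semr{a \cmp e} = a \cdot \semr{e}$.

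First I would unfold the semantics of composition. Since $a$ is closed, we have $\semr{a \cmp e} = \sem{a} \bind -\cdot\semr{e}$, where $\sem{a} = \dirac{\mset a}$ is the Dirac measure on the singleton multiset $\mset a$. Applying the integral form \eqref{eq:binddef} of bind to a Dirac measure collapses the integral to pointwise evaluation at $\mset a$, yielding
\begin{align*}
(\sem{a} \bind -\cdot\semr{e})(A) = \int_{\beta\in\NS}(\beta\cdot\semr{e})(A)\,d\dirac{\mset a} = (\mset a \cdot \semr{e})(A)
\end{align*}
for every measurable $A \subseteq \NS$.

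The remaining step is to show $\mset a \cdot \semr{e} = a \cdot \semr{e}$. Unfolding the definition from \S\ref{sec:composition}, we have $\mset a \cdot \semr{e} = {\amp}M(-\cdot\semr{e})(\mset a)$, and $M(-\cdot\semr{e})(\mset a)$ is the singleton multiset $\mset{a\cdot\semr{e}}$ containing the single distribution $a\cdot\semr{e}$. The distributive law $\otimes$ applied to a singleton multiset of distributions samples that one distribution and wraps the outcome in a singleton multiset; composing with multiset union $\Sigma$ then simply unwraps it. Hence $\amp\mset{a\cdot\semr{e}} = a\cdot\semr{e}$, and so $\mset a \cdot \semr{e} = a\cdot\semr{e}$, which completes the argument. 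If one prefers the operational view, the same fact follows from the meta-calculus identity
\begin{align*}
\sample\semr{a\cmp e} = \letin{\beta}{\sample\dirac{\mset a}}{\sum_{x\in\beta}\sample(x\cdot\semr{e})} = \sample(a\cdot\semr{e}),
\end{align*}
since $\sample\dirac{\mset a} = \mset a$ deterministically.

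There is no real obstacle here; the proof is a straightforward verification once the singleton case of the generalized $\amp$ is observed. The only subtlety worth stating explicitly is that the Dirac-measure integration is legitimate because $-\cdot\semr{e}:\NS \to \DNS$ is measurable, which was established earlier in the paper.
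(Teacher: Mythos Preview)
Your proof is correct and follows essentially the same approach as the paper: unfold the semantics of composition, use that $\sem{a}=\dirac{\mset a}$ collapses the bind integral to evaluation at the singleton multiset $\mset a$, and observe that the generalized $\amp$ on a singleton is the identity. You have simply made explicit the steps (Dirac integration and the singleton case of $\amp$) that the paper's one-line denotational argument leaves to the reader; both proofs also include the same operational verification via the meta-calculus.
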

  \begin{proof}
  By definition, we have $\semr{a\cmp e} = \semr{a}\bind-\cdot\semr{e} = a\cdot\semr{e}$, since $\semr{a} = \dirac{\mset a}$. Operationally, we can get the same result:
  \begin{align*}
  \sample\semr{a\cmp e} &= \letin{k}{\sample\semr{a}}{\Sigma\mset{\sample x\cdot\semr{e}\mid x\in k}}\\
  &= \letin{k}{a}{\Sigma\mset{\sample x\cdot\semr{e}\mid x\in k}}\\
  &= \sample a\cdot\semr{e}.
  \tag*\qedhere
  \end{align*}
  \end{proof}
  
  \begin{lemma}
  \label{lem:pe1e2}
  $\semr{a\cmp(e_1\amp e_2)} = \semr{(a\cmp e_1)\amp(a\cmp e_2)}$.
  \end{lemma}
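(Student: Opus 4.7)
The plan is to reduce both sides to statements about the lifted monoid action $a\cdot-$ on $\DNS$ using Lemma~\ref{lem:pp}, and then to exploit the fact that the action distributes over multiset union (pointwise sum).

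First, I would apply Lemma~\ref{lem:pp} to both sides. On the left, $\semr{a\cmp(e_1\amp e_2)} = a\cdot\semr{e_1\amp e_2} = a\cdot(\semr{e_1}\amp\semr{e_2})$. On the right, $\semr{(a\cmp e_1)\amp(a\cmp e_2)} = \semr{a\cmp e_1}\amp\semr{a\cmp e_2} = (a\cdot\semr{e_1})\amp(a\cdot\semr{e_2})$. So the statement reduces to proving the semantic identity
\[
  a\cdot(\mu\amp\nu) = (a\cdot\mu)\amp(a\cdot\nu)
\]
for arbitrary $\mu,\nu\in\DNS$.

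Next, I would unfold the definitions of $\amp$ and $a\cdot-$. By the lifted monoid action in \S\ref{sec:injectivemonoidactions}, for multisets $m,n\in\NS$, the identity $a\cdot(m+n)=a\cdot m+a\cdot n$ holds pointwise from the definition of the lifted action, because $(a\cdot(m+n))(y)$ equals $(m+n)(x)=m(x)+n(x)$ when $y=a\cdot x$ and $0$ otherwise. Equivalently, at the level of functions $\Sigma^*\times\Sigma^*\to\Sigma^*$, the square with $(a\cdot-)\times(a\cdot-)$ on one side and $+$ on the other commutes. Pushing forward through this commuting square, we get
\[
  ((a\cdot-)\circ(+))_*(\mu\otimes\nu) = ((+)\circ((a\cdot-)\times(a\cdot-)))_*(\mu\otimes\nu),
\]
and the right-hand side is $((a\cdot\mu)\otimes(a\cdot\nu))\circ(+)^{-1}=(a\cdot\mu)\amp(a\cdot\nu)$ (using $Df$'s compatibility with products), while the left-hand side is $a\cdot((\mu\otimes\nu)\circ(+)^{-1}) = a\cdot(\mu\amp\nu)$.

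Alternatively, I would present the same reasoning operationally in the meta-calculus of \S\ref{sec:metacalculus}, using that $\sample(a\cdot\mu)=a\cdot\sample\mu$ and that $a\cdot(m+n)=a\cdot m+a\cdot n$ as multisets:
\begin{align*}
\sample\semr{a\cmp(e_1\amp e_2)}
&= a\cdot\sample\semr{e_1\amp e_2}
= a\cdot(\sample\semr{e_1}+\sample\semr{e_2})\\
&= a\cdot\sample\semr{e_1}+a\cdot\sample\semr{e_2}
= \sample\semr{a\cmp e_1}+\sample\semr{a\cmp e_2}\\
&= \sample\semr{(a\cmp e_1)\amp(a\cmp e_2)}.
\end{align*}
The only delicate point, which I expect to be the main obstacle, is tracking the pushforwards carefully to verify that the action of $a\cdot-$ commutes with the product-measure-then-pushforward-by-$+$ construction underlying $\amp$; everything else is bookkeeping built on the distributivity of the lifted action over multiset union.
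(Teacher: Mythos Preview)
Your proposal is correct and follows essentially the same approach as the paper: reduce via Lemma~\ref{lem:pp} to the semantic identity $a\cdot(\mu\amp\nu)=(a\cdot\mu)\amp(a\cdot\nu)$, then derive this from the additivity $a\cdot(m+n)=a\cdot m+a\cdot n$ together with the compatibility of pushforwards with product measures. The paper carries out only the denotational argument (spelling out the rectangle check for $(\mu\otimes\nu)\circ(f,f)^{-1}=(\mu\circ f^{-1})\otimes(\nu\circ f^{-1})$ explicitly), whereas you also supply the operational variant, but the substance is identical.
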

  \begin{proof}
  Let $f=a\cdot-$, $\mu=\semr{e_1}$, and $\nu=\semr{e_2}$. Using Lemma \ref{lem:pp}, we wish to show
  \begin{align}
  (\mu\otimes\nu)\circ(+)^{-1}\circ f^{-1} = ((\mu\circ f^{-1})\otimes(\nu\circ f^{-1}))\circ(+)^{-1}.\label{eq:pe1e2}
  \end{align}
  First,
  \begin{align*}
  ((\mu\otimes\nu)\circ(f,g)^{-1})(A\times B)
  &= (\mu\otimes\nu)((f,g)^{-1})(A\times B))\\
  &= (\mu\otimes\nu)(f^{-1}(A)\times g^{-1}(B))\\
  &= (\mu\circ f^{-1})(A)\cdot(\nu\circ g^{-1})(B)\\
  &= ((\mu\circ f^{-1})\otimes(\nu\circ g^{-1}))(A\times B).
  \end{align*}
  As the two measures agree on measurable rectangles, we have
  \begin{align}
  (\mu\otimes\nu)\circ(f,g)^{-1} = (\mu\circ f^{-1})\otimes(\nu\circ g^{-1}).\label{eq:pe1e2b}
  \end{align}
  Since $f(m_1 + m_2) = f(m_1) + f(m_2)$, we have
  \begin{align}
  (+)^{-1}\circ f^{-1}
  = (f\circ(+))^{-1}
  = ((+)\circ(f,f))^{-1}
  = (f,f)^{-1}\circ(+)^{-1},\label{eq:pe1e2a}
  \end{align}
  so
  \begin{align*}
  (\mu\otimes\nu)\circ(+)^{-1}\circ f^{-1}
  &= (\mu\otimes\nu)\circ(f,f)^{-1}\circ(+)^{-1} && \text{by \eqref{eq:pe1e2a}}\\
  &= ((\mu\circ f^{-1})\otimes(\nu\circ f^{-1}))\circ(+)^{-1} && \text{by \eqref{eq:pe1e2b}.}
  \qedhere
  \end{align*}
  \end{proof}
  
  \begin{lemma}
  \label{lem:pe1e2x}
  $\semr{a\cmp(e_1\oplus_r e_2)} = \semr{(a\cmp e_1)\oplus_r(a\cmp e_2)}$.
  \end{lemma}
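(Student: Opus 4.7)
The plan is to reduce the claim to a statement about pushforward measures by invoking Lemma \ref{lem:pp}. Writing $f = a\cdot-$, $\mu = \semr{e_1}$, and $\nu = \semr{e_2}$, Lemma \ref{lem:pp} gives $\semr{a\cmp e_i} = \mu_i \circ f^{-1}$ for $i=1,2$, where I write $\mu_1 = \mu$, $\mu_2 = \nu$. Combined with the definition $\semr{e_1 \oplus_r e_2} = r\semr{e_1} + (1-r)\semr{e_2}$, the claim reduces to the pointwise identity
\begin{align*}
(r\mu + (1-r)\nu)\circ f^{-1} = r(\mu\circ f^{-1}) + (1-r)(\nu\circ f^{-1}).
\end{align*}

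The key step is observing that pushforward is linear in the measure: for any measurable $A \subseteq \NS$,
\begin{align*}
((r\mu + (1-r)\nu)\circ f^{-1})(A) &= (r\mu + (1-r)\nu)(f^{-1}(A)) \\
&= r\,\mu(f^{-1}(A)) + (1-r)\,\nu(f^{-1}(A)) \\
&= r(\mu\circ f^{-1})(A) + (1-r)(\nu\circ f^{-1})(A).
\end{align*}
Putting everything together then yields $\semr{a\cmp(e_1\oplus_r e_2)} = r\semr{a\cmp e_1} + (1-r)\semr{a\cmp e_2} = \semr{(a\cmp e_1)\oplus_r(a\cmp e_2)}$.

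Alternatively, mirroring the operational style used in Lemma \ref{lem:pe1e2}, I could sample $\semr{a\cmp(e_1\oplus_r e_2)}$ via the meta-calculus: sample $\semr a$ to get the singleton multiset $\mset a$, then independently sample $a\cdot\semr{e_1\oplus_r e_2}$, and expand the inner probabilistic choice using $\rnd()$. The atomic prefix $a$ can be pulled outside the conditional $\mathsf{if}\ \rnd()\le r\ \mathsf{then}\ \ldots\ \mathsf{else}\ \ldots$, giving the same distribution as sampling $(a\cmp e_1)\oplus_r(a\cmp e_2)$.

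I do not expect any real obstacle: unlike Lemma \ref{lem:pe1e2}, which involves the nontrivial interaction between product measures and multiset union via equation~\eqref{eq:pe1e2b}, the probabilistic choice operator $\oplus_r$ is a plain convex combination on $\DNS$ and pushforward along $a\cdot-$ is linear in the measure argument, so the argument reduces to this single linearity fact.
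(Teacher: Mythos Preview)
Your proposal is correct and takes essentially the same approach as the paper: both reduce the claim via Lemma~\ref{lem:pp} to the linearity of the pushforward $(-)\circ f^{-1}$ in its measure argument, verified pointwise on measurable sets. The only cosmetic difference is that the paper absorbs the scalars into $\mu = r\semr{e_1}$ and $\nu = (1-r)\semr{e_2}$ before stating the linearity identity, whereas you keep $r$ and $1-r$ explicit.
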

  \begin{proof}
  Let $f=a\cdot-$, $\mu=r\semr{e_1}$, and $\nu=(1-r)\semr{e_2}$. Using Lemma \ref{lem:pp}, we wish to show
  \begin{align}
  (\mu+\nu)\circ f^{-1} = \mu\circ f^{-1}+\nu\circ f^{-1}.\label{eq:pe1e2x}
  \end{align}
  For all measurable sets $A$,
  \begin{align*}
  ((\mu+\nu)\circ f^{-1})(A) 
  &= (\mu+\nu)(f^{-1}(A))\\
  &= \mu(f^{-1}(A))+\nu(f^{-1}(A))\\
  &= (\mu\circ f^{-1})(A)+(\nu\circ f^{-1})(A)\\
  &= ((\mu\circ f^{-1})+(\nu\circ f^{-1}))(A).
  \qedhere
  \end{align*}
  \end{proof}
  
  \begin{lemma}
  \label{lem:fixpoint}
  $\semr{d} = \semr{e\subst dx}\ \Iff\ \semr{d} = \semr{\fix xe}$.
  \end{lemma}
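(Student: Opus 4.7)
The plan is to prove both directions by leveraging the definition $\semr{\fix xe} = \sem e \rho\subst{\semr{\fix xe}}x$ together with a substitution lemma and contractivity arguments coming from the productivity assumption.

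First I would establish the substitution lemma $\semr{e\subst dx} = \sem e\rho\subst{\semr d}x$ for any closed $d$ (and more generally for $d$ whose free variables lie in $\rho$'s domain). This should follow by structural induction on $e$, using that the semantic clauses for each syntactic construct commute with substitution in the environment: for $\amp$, $\oplus_r$, and the atomic cases it is immediate; for $\cmp$, the left operand is closed so substitution does not touch it, and the right operand is handled by induction inside the bind; for a nested $\fix ye$ with $y\ne x$, one uses $\alpha$-conversion and the fact that $\semr{\fix ye}$ is the unique fixpoint of the corresponding operator.

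With the substitution lemma in hand, the reverse direction is a one-line computation: if $\semr d = \semr{\fix xe}$, then
\begin{align*}
\semr{e\subst dx} = \sem e\rho\subst{\semr d}x = \sem e\rho\subst{\semr{\fix xe}}x = \semr{\fix xe} = \semr d,
\end{align*}
where the penultimate equality is the defining clause for $\fix xe$.

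For the forward direction, assume $\semr d = \semr{e\subst dx} = \sem e\rho\subst{\semr d}x$, so $\semr d$ is a fixpoint of the operator $F:\DNS\to\DNS$ defined by $F(\mu)=\sem e\rho\subst\mu x$. The same operator has $\semr{\fix xe}$ as a fixpoint by the defining clause, so it suffices to show $F$ has a \emph{unique} fixpoint. This is where I would invoke the complete ultrametric on $\DNS$ from \S\ref{sec:metric}: the productivity assumption for $\fix xe$ guarantees that every occurrence of $x$ in $e$ is guarded by some atomic action $a\in\Sigma$, and by results on composition (Lemmas \ref{lem:injproperties} and \ref{lem:bind}) each such guard strictly increases the level of agreement, so $F$ is strictly contractive. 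Banach's fixpoint theorem then gives uniqueness, hence $\semr d = \semr{\fix xe}$.

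The main obstacle is the substitution lemma in the presence of nested fixpoints and of $\cmp$: one must check that substitution of a closed $d$ into $e_1\cmp e_2$ (whose left operand is closed by the linearity restriction) behaves correctly under the bind, and that nested fixpoints $\fix ye$ can be handled via $\alpha$-conversion together with the uniqueness half of the very statement we are proving. One cleanly resolves this by ordering the induction so that the uniqueness argument for an inner fixpoint is invoked only on strict subexpressions, which is legitimate since the Banach argument at depth $k$ depends only on the semantics of proper subexpressions.
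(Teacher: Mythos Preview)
Your proposal is correct and follows essentially the same approach as the paper. The paper's proof is a two-line appeal to exactly the ingredients you identify: it invokes Lemma~\ref{lem:substrebind} (the substitution lemma $\semr{e\subst dx}=\sem e\rho[\semr d/x]$, proved there by coinduction rather than structural induction) and Theorem~\ref{thm:expsemwelldefined} (well-definedness of $\sem-$, which packages the Banach argument showing $\mu\mapsto\sem e\rho[\mu/x]$ has a unique fixpoint); your outline simply unpacks how those two results are established.
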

  \begin{proof}
  By Lemma \ref{lem:substrebind}, $\semr{e\subst dx} = \semr{e}[\semr d/x]$, and by Theorem \ref{thm:expsemwelldefined}, $\semr{\fix xe}$ is the unique fixpoint of the map $\mu\mapsto\semr{e}[\mu/x]$.
  \end{proof}
  
  \end{toappendix}

The axioms allow automata and expressions to be transformed to coalgebraic form
\begin{align}
\del : S\to D(\naturals\times(\naturals^S)^\Sigma),\label{eq:coalgebraicform}
\end{align}
as will be needed in \S\ref{sec:coalgebras}.

For automata in the form \eqref{eq:coalgebraicform}, $S$ is the set of states. Intuitively, for each $s\in S$, sampling $\del(s)$ results in an element of $\naturals\times(\Ns)^\Sigma$ whose projections
\begin{align*}
& \pi_\eps:\naturals\times(\naturals^S)^\Sigma\to\naturals && \pi_a:\naturals\times(\naturals^S)^\Sigma\to\naturals^S,\ a\in\Sigma
\end{align*}
give the multiplicity of accepting $\eps$ and strings beginning with $a$, respectively. Thus if sampling $\del(s)$ returns $v\in\naturals\times(\naturals^S)^\Sigma$ with some positive probability $p$, then starting from state $s$, with probability at least $p$, the automaton accepts the empty string $\eps$ starting from $s$ with multiplicity exactly $\pi_\eps(v)$, and $\pi_a(v)\in\naturals^S$ is the multiset of states occupied by an agent after scanning input symbol $a$. We say ``with probability at least $p$'' because other samples of $\del(s)$ may also contribute probability to these events.

\begin{examplerep}
\upshape
As an example of equational reasoning using the axioms of Table \ref{tab:equations}, we provide an equational proof of the correctness of the von Neumann trick for simulating a fair coin with a coin of arbitrary bias $r$:
  \begin{align}
    \fix{x}{((c \cmp x) \oplus_r a) \oplus_r (b \oplus_r (c \cmp x))} &= (\fix{x}{\skp \oplus_{2r(1-r)} (c \cmp x)}) \cmp (a \oplus_{1/2} b).
  \label{eq:vonNeumann}
  \end{align}
  The left-hand side says to flip the biased coin twice. If the result is heads-tails or tails-heads, output $a$ or $b$, respectively, otherwise repeat. The right-hand side says that $a$ and $b$ are produced with equal probability. The action $c$ counts the number of trials before success.

The proof can be found in the
appendix.
  \end{examplerep}
  \begin{proof}
  Abbreviating the left- and right-hand sides of \eqref{eq:vonNeumann} by $e_1$ and $e_2$, respectively, and writing $e_2 = f \cmp (a \oplus_{1/2} b)$ where $f = \fix{x}{\skp \oplus_{2r(1-r)}(c \cmp x)}$, the expression $e_2$ simplifies to
  \begin{align}
  e_2 &= f \cmp (a \oplus_{1/2} b) \nonumber \\
  &= (\skp \oplus_{2r(1-r)} (c \cmp f)) \cmp (a \oplus_{1/2} b)
  && \text{fixpoint} \nonumber\\
  &= (\skp \cmp (a \oplus_{1/2} b)) \oplus_{2r(1-r)} ((c \cmp f) \cmp (a \oplus_{1/2} b))
  && \text{right distributivity of $\cmp$ over $\oplus$} \nonumber\\
  &= (a \oplus_{1/2} b) \oplus_{2r(1-r)} ((c \cmp f) \cmp (a \oplus_{1/2} b)) && \text{identity of $\cmp$} \nonumber\\
  &= (a \oplus_{1/2} b) \oplus_{2r(1-r)} (c \cmp (f \cmp (a \oplus_{1/2} b)) && \text{associativity of $\cmp$} \nonumber\\
  &= (a \oplus_{1/2} b) \oplus_{2r(1-r)} (c \cmp e_2). \label{eq:vnfix}
  \end{align}
  We can similarly simplify $e_1$:
  \begin{align*}
  e_1 &= \fix{x}{((c \cmp x) \oplus_r a) \oplus_r (b \oplus_r (c \cmp x))} \\
  &= \fix{x}{(a \oplus_{1-r} (c \cmp x)) \oplus_r (b \oplus_r (c \cmp x))} && \text{skew-commutativity} \\
  &= \fix{x}{a \oplus_{r(1-r)} ((c \cmp x) \oplus_{r^2/(1-r+r^2)} (b \oplus_r (c \cmp x)))} && \text{skew-associativity} \\
  &= \fix{x}{a \oplus_{r(1-r)} ((b \oplus_r (c \cmp x)) \oplus_{(1-r)/(1-r+r^2)} (c \cmp x) )} && \text{skew-commutativity} \\
  &= \fix{x}{a \oplus_{r(1-r)} (b \oplus_{(r-r^2)/(1-r+r^2)} ((c \cmp x) \oplus_{(1-r)^2/(1-2r+2r^2)} (c \cmp x)))} && \text{skew-associativity} \\
  &= \fix{x}{a \oplus_{r(1-r)} (b \oplus_{(r-r^2)/(1- r + r^2)} (c \cmp x))} && \text{idempotence of $\oplus$} \\
  &= \fix{x}{(a \oplus_{1/2} b) \oplus_{2r(1-r)} (c \cmp x)}. && \text{skew-associativity}
  \end{align*}
  By the fixpoint axiom, $e_2 = e_1$ iff the following substitution holds:
  \begin{align*}
  e_2 &= (a \oplus_{1/2} b) \oplus_{2r(1-r)} (c \cmp x)[e_2/x]
  = (a \oplus_{1/2} b) \oplus_{2r(1-r)} (c \cmp e_2),
  \end{align*}
  which is just \eqref{eq:vnfix}.
\end{proof}

We do not know whether the axioms of Table \ref{tab:equations} are complete. However, they are sufficient to describe many behaviors we would like in a KAT-like system. Informally, one could augment expressions with tests by adding boolean expressions, which act as follows: $\semr{b} = \sem{\skp}$ if $b$ evaluates to \textsf{true} under $\rho$, and $\sem{\fail}$ otherwise. Basic control flow constructs for a simple imperative programming language can be implemented in the following way:
\begin{align*}
  & \ifelse{b}{e}{f} :=  (b \cmp e) \amp (\overline{b} \cmp f) && \while{b}{e} := \fix{f}{(b \cmp e \cmp f) \amp \overline{b}}
\end{align*}
The axioms of Table \ref{tab:equations} are sufficient to prove properties of this system, such as the following:
\begin{examplerep}
$\while{b}{e} = \ifelse{b}{(e \cmp \while{b}{e})}{\skp}$.
\end{examplerep}
\begin{proof}
Rewriting the left-hand side gives:
\begin{align*}
  \while{b}{e} &= \fix{f}{(b \cmp e \cmp f) \amp \overline{b}} && \text{definition of \textsf{while}}\\
  &= ((b \cmp e \cmp f) \amp \overline{b}) [(\fix{f}{(b \cmp e \cmp f) \amp \overline{b}}) / f] && \text{fixpoint} \\
  &= (b \cmp e \cmp (\fix{f}{(b \cmp e \cmp f) \amp \overline{b}})) \amp \overline{b} && \text{substitution} \\
  &= (b \cmp e \cmp (\while{b}{e})) \amp \overline{b} && \text{definition of \textsf{while}} \\
  &= (b \cmp e \cmp (\while{b}{e})) \amp (\overline{b} \cmp \skp) && \text{identity of $\cmp$} \\
  &= \ifelse{b}{(e \cmp \while{b}{e})}{\skp} && \text{definition of \textsf{if} }
\end{align*}
\end{proof}

A final example involving a network security scenario is given in
the appendix (Example \ref{exm:network}).
The example illustrates the use of the distributive law in reasoning about the combination of probability and nondeterminism.

\begin{toappendix}
\begin{example}
\label{exm:network}
\upshape
Here is an example involving a network security triage scenario. Say there is an attacker who wishes to initiate a day-0 exploit of a newly discovered network vulnerability. Suppose that there are $n$ servers on the network that require patching, but the network administrator can only patch $k$ of them in time and must leave $n-k$ vulnerable. On the other hand, the attacker has resources to mount an attack against only a small number of servers, say $m$. Given probabilistic strategies for the attacker and the administrator as to which servers to attack and defend, what is the likelihood that some server is compromised?

We can represent this as an expression of the form
\begin{align*}
\text{(defense strategy)}\amp\text{(attack strategy)}
\end{align*}
where (defense strategy) and (attack strategy) are expressions of the form
\begin{align*}
& {\oplus}\{\text{(choice $A$)} \mid A\subs\{1,2,\ldots,n\},\ \len A=k\}
&& {\oplus}\{\text{(choice $B$)} \mid B\subs\{1,2,\ldots,n\},\ \len B=m\},
\end{align*}
respectively, specifying probability distributions on the choices of which hosts to defend and which to attack, respectively, and (choice $A$) and (choice $B$) are expressions of the form
\begin{align*}
& {\amp}\mset{a_1,a_1,a_2,\seq a2k,a_k}
&& {\amp}\mset{\seq b1m},
\end{align*}
respectively, where $A = \{\seq a1k\}$ and $B = \{\seq b1m\}$. Note that elements of (choice $A$) occur with multiplicity two and those of (choice $B$) occur with multiplicity one. The probability that a server is compromised is the probability that the outcome multiset has an element of multiplicity one.

For example, for the case $n=3$, $k=2$, and $m=1$, we would have the expression of Fig.~\ref{fig:securityexample}, where the server names are $\{a,b,c\}$ and $p+q+r = u+v+w = 1$.
\begin{figure}
\begin{centering}
\begin{tikzpicture}[->, >=stealth', node distance=2cm, auto, bend angle=35, scale=0.8]
\small
 \node (top) at (0,-0.5) {$\amp$};
 \node (amp1) at (-3,-2) {$\oplus$};
 \path (top) edge (amp1);
 \node (amp2) at (3,-2) {$\oplus$};
 \path (top) edge (amp2);
 \node (ab) at (-5,-3.5) {$\amp$};
 \node (ac) at (-3,-3.5) {$\amp$};
 \node (bc) at (-1,-3.5) {$\amp$};
 \path (amp1) edge node[swap, yshift=-3pt] {$p$} (ab) edge node {$q$} (ac) edge node {$r$} (bc);
 \node (aba) at (-5.5,-5) {$2a$};
 \node (abb) at (-4.5,-5) {$2b$};
 \node (aca) at (-3.5,-5) {$2a$};
 \node (acc) at (-2.5,-5) {$2c$};
 \node (bcb) at (-1.5,-5) {$2b$};
 \node (bcc) at (-0.5,-5) {$2c$};
 \path (ab) edge (aba) edge (abb);
 \path (ac) edge (aca) edge (acc);
 \path (bc) edge (bcb) edge (bcc);
 \node (ea) at (1.5,-3.5) {$a$};
 \node (eb) at (3,-3.5) {$b$};
 \node (ec) at (4.5,-3.5) {$c$};
 \path (amp2) edge node[swap, yshift=-3pt] {$u$} (ea) edge node {$v$} (eb) edge node[yshift=-3pt] {$w$} (ec);
\end{tikzpicture}
\end{centering}
\caption{A network security problem}
\label{fig:securityexample}
\end{figure}
In this case the probability that a server is compromised is the dot product $(p,q,r)\cdot(w,v,u)$. If both parties choose their first option, which occurs with probability $up$, then the resulting multiset is $\mset{a,a,a,b,b}$, and no servers are compromised.

We show how the distributivity of nondeterminism over probability, in conjunction with the other axioms of Table~\ref{tab:equations}, can be used to calculate these probabilities. We first introduce some abbreviations for nested probabilistic and nondeterministic expressions. For $k\ge 1$, define
\begin{align}
{\amp}(\seq e1k) &= \begin{cases}
e_1 \amp ({\amp}(\seq e2k)), & \text{if $k > 1$}\\
e_1, & \text{if $k = 1$}
\end{cases}
\label{eq:ampdef}
\end{align}
and for $p_1 + \cdots + p_k = 1$,
\begin{align}
{\oplus}(e_1:p_1,\ldots,e_k:p_k) &= \begin{cases}
{\oplus}(e_2:p_2,\ldots,e_k:p_k), & \text{if $p_1 = 0$}\\
e_1 \oplus_{p_1} ({\oplus}(e_2:q_2,\ldots,e_k:q_k)), & \text{if $0 < p_1 < 1$}\\
e_1, & \text{if $p_1 = 1$},
\end{cases}
\label{eq:oplusdef}
\end{align}
where in the second case, $q_i = p_i/(p_2+\cdots+p_k) = p_i/(1-p_1)$ for $2\le i\le k$.
We further abbreviate ${\amp}(\seq e1k)$ and ${\oplus}(e_1:p_1,\ldots,e_k:p_k)$ by
\begin{align}
& {\amp}(e_i \mid 1\le i\le k) && {\oplus}(e_i:p_i \mid 1\le i\le k),
\label{eq:shortform}
\end{align}
respectively.

These abbreviations serve as syntactic sugar to allow us to reason with nondeterministic and probabilistic expressions with more than two subexpressions. The abbreviation for $\amp$ is justified by the associativity of $\amp$, and that for $\oplus$ is justified by the skew associativity of $\oplus$. One can transform any nested $\amp$ expression to the form \eqref{eq:ampdef} using the axioms of associativity and identity for $\amp$. Similarly, one can transform any nested $\oplus$ expression to the form \eqref{eq:oplusdef} using the axioms of skew associativity, skew commutativity, idempotence, and $\oplus$-elimination. The skew associativity and skew commutativity axioms allow the calculation of the correct probabilities.
One can also show that \eqref{eq:ampdef} and \eqref{eq:oplusdef} are independent of order; that is, for any permutation $\sigma$ of $\{1,2,\ldots,k\}$,
\begin{align}
{\amp}(e_i \mid 1\le i\le k) &= {\amp}(e_{\sigma(i)} \mid 1\le i\le k)\\
{\oplus}(e_i:p_i \mid 1\le i\le k) &= {\oplus}(e_{\sigma(i)}:p_{\sigma(i)} \mid 1\le i\le k).
\label{eq:permuteok}
\end{align}

In this notation, the distributive law for $\amp$ over $\oplus$ entails 
\begin{align}
& {\oplus}(e_1:p_1,e_2:p_2) \amp {\oplus}(d_1:q_1,d_2:q_2)\nonumber\\
&= {\oplus}(e_1\amp d_1:p_1q_1,e_1\amp d_2:p_1q_2,e_2\amp d_1:p_2q_1,e_2\amp d_2:p_2q_2),
\label{eq:newdistrib}
\end{align}
which allows probabilistic choices to precede nondeterministic choices. To see this,
\begin{align}
& {\oplus}(e_1:p_1,e_2:p_2) \amp {\oplus}(d_1:q_1,d_2:q_2)\nonumber\\
&= (e_1 \oplus_{p_1} e_2) \amp (d_1 \oplus_{q_1} d_2)\label{eq:diststepA}\\
&= (e_1 \amp (d_1 \oplus_{q_1} d_2)) \oplus_{p_1} (e_2 \amp (d_1 \oplus_{q_1} d_2))\label{eq:diststepB}\\
&= ((d_1 \oplus_{q_1} d_2)\amp e_1) \oplus_{p_1} ((d_1 \oplus_{q_1} d_2)\amp e_2)\label{eq:diststepC}\\
&= ((d_1\amp e_1) \oplus_{q_1} (d_2\amp e_1)) \oplus_{p_1} ((d_1\amp e_2) \oplus_{q_1} (d_2\amp e_2))\label{eq:diststepD}\\
&= ((e_1\amp d_1) \oplus_{q_1} (e_1\amp d_2)) \oplus_{p_1} ((e_2\amp d_1) \oplus_{q_1} (e_2\amp d_2))\label{eq:diststepE}\\
&= {\oplus}(e_1\amp d_1:p_1q_1,e_1\amp d_2:p_1q_2,e_2\amp d_1:p_2q_1,e_2\amp d_2:p_2q_2),\label{eq:diststepF}
\end{align}
where
step \eqref{eq:diststepA} is justified by \eqref{eq:oplusdef};
\eqref{eq:diststepB} and \eqref{eq:diststepD} by distributivity;
\eqref{eq:diststepC} and \eqref{eq:diststepE} by commutativity; and
\eqref{eq:diststepF} by skew associativity, \eqref{eq:oplusdef}, and calculation.
The law \eqref{eq:newdistrib} further generalizes in a straightforward way to
\begin{align}
({\oplus}(e_i:p_i \mid 1\le i\le k) \amp ({\oplus}(d_j:q_j \mid 1\le j\le m)
&= {\oplus}(e_i\amp d_j:p_iq_j \mid 1\le i\le k,\ 1\le j\le m).
\label{eq:genoplus}
\end{align}

In our application, we wish to calculate the probability that a server is compromised, which is the probability that the expression of Fig.~\ref{fig:securityexample} generates a multiset containing an element with multiplicity 1. In the notation of \eqref{eq:ampdef} and \eqref{eq:oplusdef}, this expression is
\begin{align*}
{\oplus}({\amp}(a,a,b,b):p,{\amp}(a,a,c,c):q,{\amp}(b,b,c,c):r) \amp {\oplus}(a:u,b:v,c:w)
\end{align*}
or allowing integer coefficients to abbreviate multiplicities,
\begin{align*}
{\oplus}({\amp}(2a,2b):p,{\amp}(2a,2c):q,{\amp}(2b,2c):r) \amp {\oplus}(a:u,b:v,c:w).
\end{align*}
To avoid dealing with corner cases, let us assume that all $p,q,r,u,v,w$ are nonzero. This is without loss of generality, as zero probabilities can be systematically eliminated and do not appear in the unsugared form of expressions \eqref{eq:oplusdef}. Now
\begin{align*}
& {\oplus}({\amp}(2a,2b):p,{\amp}(2a,2c):q,{\amp}(2b,2c):r) \amp {\oplus}(a:u,b:v,c:w)\\
&= {\oplus}({\amp}(2a,2b)\amp a:pu,{\amp}(2a,2c)\amp a:pv,{\amp}(2b,2c)\amp a:pw,\\
& \hspace{8.3mm} {\amp}(2a,2b)\amp b:qu,{\amp}(2a,2c)\amp b:qv,{\amp}(2b,2c)\amp b:qw,\\
& \hspace{8.3mm} {\amp}(2a,2b)\amp c:ru,{\amp}(2a,2c)\amp c:rv,{\amp}(2b,2c)\amp c:rw) && \text{by \eqref{eq:genoplus}}\\
&= {\oplus}({\amp}(3a,2b):pu,{\amp}(3a,2c):pv,{\amp}(a,2b,3c):pw,\\
& \hspace{8.3mm} {\amp}(2a,3b):qu,{\amp}(2a,b,2c):qv,{\amp}(3b,2c):qw,\\
& \hspace{8.3mm} {\amp}(2a,2b,c):ru,{\amp}(2a,3c):rv,{\amp}(2b,3c):rw) && \text{by commutativity and \eqref{eq:ampdef}}\\
&= {\oplus}(
{\oplus}({\amp}(3a,2b):pu/s,{\amp}(3a,2c):pv/s,{\amp}(2a,3b):qu/s,\\
& \hspace{12mm}{\amp}(3b,2c):qw/s,{\amp}(2a,3c):rv/s,{\amp}(2b,3c):rw/s : s,\\
& \hspace{8.3mm}{\oplus}({\amp}(a,2b,3c):pw/(1-s),{\amp}(2a,b,2c):qv/(1-s),\\
& \hspace{12.2mm}{\amp}(2a,2b,c):ru/(1-s)): (1-s)) && \text{skew associativity, calculation}\\
&= ({\oplus}({\amp}(3a,2b):pu/s,{\amp}(3a,2c):pv/s,{\amp}(2a,3b):qu/s,\\
& \hspace{9.6mm}{\amp}(3b,2c):qw/s,{\amp}(2a,3c):rv/s,{\amp}(2b,3c):rw/s : s))\\
& \hspace{5mm}\oplus_s ({\oplus}({\amp}(a,2b,3c):pw/(1-s),{\amp}(2a,b,2c):qv/(1-s),\\
& \hspace{16mm}{\amp}(2a,2b,c):ru/(1-s)): 1-s) && \text{by \eqref{eq:oplusdef},}
\end{align*}
where
\begin{align*}
s &= pu + pv + qu + qw + rv + rw
= (p,q,r)\cdot((u,w,v) + (v,u,w))\\
1-s &= pw + qv + ru = (p,q,r)\cdot(w,v,u).
\end{align*}
The subexpression
\begin{align*}
{\oplus}({\amp}(a,2b,3c):pw/(1-s),{\amp}(2a,b,2c):qv/(1-s),{\amp}(2a,2b,c):ru/(1-s)),
\end{align*}
which occurs with probability $1-s$, describes the undesirable set of outcomes containing an element with multiplicity 1.
\end{example}
\end{toappendix}

\section{Metric Properties}
\label{sec:metric}

To show that the semantic maps for automata and expressions are well defined, we will introduce a complete ultrametric on $\DNS$ and show that the semantic equations are contractive, thus have a unique solution. The metric is also useful for many other purposes, such as definitions and proofs by coinduction.

Recall from \S\ref{sec:basics} the definition
\begin{align*}
\alpha \equiv_n \beta\ &\Iff\ \alpha(x)=\beta(x)\ \text{for all $\len x\le n$} && [\alpha]_n = \set{\beta}{\beta\equiv_n\alpha}
\end{align*}
for $\alpha,\beta\in\NS$. By convention, we take $\equiv_{-1}$ to be the trivial relation with one equivalence class. The relations $\equiv_{n}$ are ordered by refinement, with $\bigcap_n\equiv_n$ the identity relation. Every $\equiv_n$-equivalence class $[\alpha]_n$ has a unique canonical element $\alpha\rest n$ whose support is contained in $\Sigma^{\le n}$,
and $\alpha\equiv_n\beta$ iff $\alpha\rest n=\beta\rest n$.

\begin{lemmarep}
\label{lem:integraldomain}
For $n\ge 0$, $\alpha\in\NS$, and $\mu,\nu\in\DNS$,
\begin{align*}
(\mu\amp\nu)([\alpha]_n) = \sum_{\beta+\gamma=\alpha\rest n}\mu([\beta]_n)\cdot\nu([\gamma]_n).
\end{align*}
More generally, let $m$ be a finite multiset of distributions in $\DNS$. For $n\ge 0$ and $\alpha\in\NS$,
\begin{align*}
({\amp}m)([\alpha]_n) = \sum_{\alpha\rest n=\sum_{\mu\in m}\beta_\mu} \ \prod_{\mu\in m}\mu([\beta_\mu]_n)
\end{align*}
where the sum is over all possible ways of associating a multiset $\beta_\mu$ with each (occurrence of) $\mu\in m$ such that
$\alpha\rest n=\sum_{\mu\in m}\beta_\mu$.
\end{lemmarep}
\begin{proof}
We prove the first statement, which is the binary case. The second statement is a direct generalization. Note that in order for $\beta+\gamma=\alpha\rest n$, we must have $\beta=\beta\rest n$ and $\gamma=\gamma\rest n$.
\begin{align*}
(\mu\amp\nu)([\alpha]_n) &= ((\mu\otimes\nu)\circ(+)^{-1})([\alpha]_n)\\
&= (\mu\otimes\nu)(\set{(\beta,\gamma)}{\beta+\gamma\equiv_n\alpha})\\
&= (\mu\otimes\nu)(\set{(\beta,\gamma)}{\beta\rest n+\gamma\rest n=\alpha\rest n})\\
&= \sum_{\beta+\gamma=\alpha\rest n}(\mu\otimes\nu)([\beta]_n\times[\gamma]_n)
= \sum_{\beta+\gamma=\alpha\rest n}\mu([\beta]_n)\cdot\nu([\gamma]_n).
\tag*\qedhere
\end{align*}
\end{proof}

\subsection{A Complete Ultrametric}
\label{sec:ultrametric}

We define an equivalence relation on $\DNS$, also denoted $\equiv_n$.
\begin{align*}
\mu\equiv_n\nu\ &\Iff\ \forall\alpha\in\NS\ \mu([\alpha]_n)=\nu([\alpha]_n).
\end{align*}
This gives rise to a complete ultrametric on $\DNS$.
\begin{align*}
d(\mu,\nu) &= \begin{cases}
2^{-n}, & \text{if $n$ is minimum such that $\mu\not\equiv_n\nu$,}\\
0, & \text{if no such $n$ exists.}
\end{cases}
\end{align*}
It follows from the definition that
\begin{align}
d(\mu,\nu) \le 2^{-(n+1)}\ \Iff\ \mu\equiv_n\nu\ \Iff\ 
\forall \alpha\in\NS\ \mu([\alpha]_n)=\nu([\alpha]_n).\label{eq:ultrametricY}
\end{align}

\begin{lemmarep}
\label{lem:ultrametric}
The map $d:\DNS^2\to\reals$ is a complete ultrametric.
\end{lemmarep}
\begin{proof}
It is routine to show that it is a pseudometric. If $d(\mu,\nu)=0$, then $\mu([\alpha]_n)=\nu([\alpha]_n)$ for all $\alpha\in\NS$ and $n\ge 0$. But these sets generate the Borel space, so $\mu=\nu$. Thus $d$ is a metric.

Completeness is also not difficult to show. Let $\mu_i$, $i\ge 0$ be a Cauchy sequence. Given $n$, for all sufficiently large $i,j$, $d(\mu_i,\mu_j) < 2^{-n}$. By \eqref{eq:ultrametricY}, $\mu_i$ and $\mu_j$ agree on $[\alpha]_n$ for all $\alpha\in\NS$, hence $\mu_i([\alpha]_n)$ stabilizes at a constant value for sufficiently large $i$. We define $\mu([\alpha]_n)$ to be this value. Then $\mu$ is a measure by the Kolmogorov extension theorem, and $d(\mu_i,\mu) < 2^{-n}$ for all sufficiently large $i$, so the $\mu_i$ converge to $\mu$.
\end{proof}

\begin{lemmarep}\ 
\label{lem:ultrametricB}
\begin{enumerate}[\upshape(i)]
\item
$d(x\cdot \mu,x\cdot \nu) \le 2^{-\len x}d(\mu,\nu)$
\item
$d(\sum_i r_i \mu_i,\sum_i r_i \nu_i) \le \max_i d(\mu_i,\nu_i)$
\item
$d({\amp}(\mset{\mu_i \mid i\in m}),{\amp}(\mset{\nu_i \mid i\in m})) \le \max_{i\in m} d(\mu_i,\nu_i)$.
\item
$d(\mu_1 \bind -\cdot\nu_1,\mu_2 \bind -\cdot\nu_2) \le \max d(\mu_1,\mu_2), d(\nu_2,\nu_2)$.
\end{enumerate}
\end{lemmarep}
\begin{proof}
(i) By definition, $(x\cdot \beta)(xy) = \beta(y)$ and $(x\cdot\beta)(z) = 0$ if $x$ is not a prefix of $z$. Thus
\begin{align*}
(x\cdot-)^{-1}([x\cdot\alpha]_{n+\len x}) &= \set{\beta}{x\cdot\beta\in[x\cdot \alpha]_{n+\len x}}
= \set{\beta}{x\cdot\beta\equiv_{n+\len x}x\cdot\alpha}
= \set{\beta}{\beta\equiv_{n}\alpha}
= [\alpha]_n,
\end{align*}
and for $\gamma$ not of the form $x\cdot\alpha$,
\begin{align*}
(x\cdot-)^{-1}([\gamma]_{n+\len x}) &= \set{\beta}{x\cdot\beta\in[\gamma]_{n+\len x}}
= \set{\beta}{x\cdot\beta\equiv_{n+\len x}\gamma}
= \emptyset.
\end{align*}
Combining these,
\begin{align*}
(x\cdot\mu)([\gamma]_{n+\len x})
&= \mu((x\cdot-)^{-1}([\gamma]_{n+\len x}))
= \begin{cases}
\mu([\alpha]_n), & \text{if $\gamma = x\cdot\alpha$,}\\
0, & \text{if $\gamma$ is not of the form $x\cdot\alpha$.}
\end{cases}
\end{align*}
Thus for any $n$, $x\cdot\mu$ and $x\cdot\nu$ agree on all $[\alpha]_{n+\len x}$ iff $\mu$ and $\nu$ agree on all $[\alpha]_n$. By \eqref{eq:ultrametricY}, $d(x\cdot\mu,x\cdot\nu)<2^{n+\len x}$ iff $d(\mu,\nu)<2^n$, so $d(x\cdot\mu,x\cdot\nu)\le 2^{-\len x}d(\mu,\nu)$.

(ii) Let $2^{-n}=\max_i d(\mu_i,\nu_i)$. Then $d(\mu_i,\nu_i)\le 2^{-n}$ for all $i$, thus by \eqref{eq:ultrametricY}, $\mu_i$ and $\nu_i$ agree on $[\alpha]_{n-1}$ for all $\alpha\in\NS$. Then $\sum_i r_i \mu_i$ and $\sum_i r_i \nu_i$ also agree on all $[\alpha]_{n-1}$, so $d(\sum_i r_i \mu_i,\sum_i r_i \nu_i) \le 2^{-n}$.

(iii) Let $2^{-n}=\max_i d(\mu_i,\nu_i)$. Then $d(\mu_i,\nu_i)\le 2^{-n}$ for all $i\in I$, thus by \eqref{eq:ultrametricY}, $\mu_i$ and $\nu_i$ agree on $[\alpha]_{n-1}$ for all $\alpha\in\NS$. By Lemma \ref{lem:integraldomain}, ${\amp}(\mset{\mu_i \mid i\in I}$ and ${\amp}(\mset{\nu_i \mid i\in I})$ also agree on all $[\alpha]_{n-1}$, so $d({\amp}(\mset{\mu_i \mid i\in I},{\amp}(\mset{\nu_i \mid i\in I})) \le 2^{-n}$.

(iv) Recall that the bind operation $\mu\bind-\cdot\nu$ is defined by first extending $-\cdot\nu:\Sigma^*\to\DNS$ to $-\cdot\nu:\NS\to\DNS$, then integrating with respect to $\mu$. Given a measurable set $A\subs\NS$,
\begin{align*}
(\mu\bind -\cdot\nu)(A) = \int_{\beta\in\NS}\ (\beta\cdot\nu)(A)\,d\mu.
\end{align*}
Since $\beta\cdot\nu\equiv_n\beta\rest n\cdot\nu$, when applied to $[\alpha]_n$ we can express the integral as a countable sum:
\begin{align*}
(\mu\bind -\cdot\nu)([\alpha]_n) &= \int_{\beta\in\NS}\ (\beta\cdot\nu)([\alpha]_n)\,d\mu\\
&= \int_{\beta\in\NS}\ (\beta\rest n\cdot\nu)([\alpha]_n)\,d\mu
= \sum_{\beta=\beta\rest n}\ (\beta\cdot\nu)([\alpha]_n)\,\mu([\beta]_n).
\end{align*}
Thus if $\mu_1\equiv_n\mu_2$ and $\nu_1\equiv_n\nu_2$, then
\begin{align*}
(\mu_1\bind -\cdot\nu_1)([\alpha]_n)
&= \sum_{\beta=\beta\rest n}\ (\beta\cdot\nu_1)([\alpha]_n)\,\mu_1([\beta]_n)\\
&= \sum_{\beta=\beta\rest n}\ (\beta\cdot\nu_2)([\alpha]_n)\,\mu_2([\beta]_n)
= (\mu_2\bind -\cdot\nu_2)([\alpha]_n),
\end{align*}
therefore $\mu_1\bind -\cdot\nu_1 \equiv_n \mu_2\bind -\cdot\nu_2$. The stated result (iv) follows.
\end{proof}

\subsection{A Metric on Labelings}
\label{sec:metriconlabelings}

Let $X$ be a set. Consider labelings $L:X\to\DNS$ of elements of $X$ with distributions over $\NS$.
Let us lift the metric $d$ on $\DNS$ to labelings as follows:
\begin{align*}
d(L_1,L_2) &= \sup_{x\in X} d(L_1(x),L_2(x)).
\end{align*}
We might just as well write $\max$ instead of $\sup$ because the supremum is always achieved, as values are of the form $2^{-n}$ or $0$.
Let us also extend the equivalence relations $\equiv_n$ on $\DNS$ to labelings:
\begin{align*}
L_1\equiv_{n}L_2\ &\Iff\ \forall x\in X\ L_1(x)\equiv_{n}L_2(x).
\end{align*}

\begin{lemmarep}\ 
\label{lem:metricproperties}
\begin{enumerate}[{\upshape(i)}]
\item
$d:(X\to\DNS)^2\to\reals$ is a complete ultrametric.
\item
$d(L_1,L_2) \le 2^{-(n+1)}\ \Iff\ L_1\equiv_n L_2$.
\item
$d(L_1,L_2) = \begin{cases}
2^{-n} & \text{if $n$ is minimum such that $L_1\not\equiv_n L_2$}\\
0 & \text{if no such $n$ exists.}
\end{cases}$
\end{enumerate}
\end{lemmarep}
\begin{proof}
The statement (i) is a standard construction based on Lemma \ref{lem:ultrametric}. We argue the (ultrametric) triangle inequality and completeness explicitly. For any $x\in X$,
\begin{align*}
d(L_1(x),L_2(x)) &\le \max d(L_1(x),L_3(x)),d(L_3(x),L_2(x))\\
&\le \max\sup_{y\in X} d(L_1(y),L_3(y)),\sup_{z\in X}d(L_3(z),L_2(z))\\
&= \max d(L_1,L_3),d(L_3,L_2).
\end{align*}
As $x\in X$ was arbitrary,
\begin{align*}
d(L_1,L_2) &= \sup_{x\in X} d(L_1(x),L_2(x)) \le \max d(L_1,L_3),d(L_3,L_2).
\end{align*}
For completeness, let $L_i$, $i\ge 0$ be a Cauchy sequence. Then $L_i(x)$ is Cauchy for each $x\in X$. By Lemma \ref{lem:ultrametric}, $L_i(x)$ converges to a value, which we will call $L(x)$. Moreover, because the extension is defined as a supremum over $x$, the convergence is uniform in $x$, thus $L_i$ converges to $L$.

For (ii),
\begin{align*}
d(L_1,L_2) \le 2^{-(n+1)}\ &\Iff\ \max_{x\in X} d(L_1(x),L_2(x)) \le 2^{-(n+1)}\\
&\Iff\ \forall x\in X\ d(L_1(x),L_2(x)) \le 2^{-(n+1)}\\
&\Iff\ \forall x\in X\ L_1(x)\equiv_n L_2(x))\\
&\Iff\ L_1\equiv_n L_2.
\end{align*}

The statement (iii) follows from (ii) and the fact that all values are of the form $2^{-n}$ or $0$.
\end{proof}

\subsection{Verification of the Semantics}

The results of \S\ref{sec:ultrametric} and \S\ref{sec:metriconlabelings} allow us to create semantic maps by a fixpoint construction. Consider an automaton with finite state set $S$. Recalling the coinductive definition of $\sem s$ from \S\ref{sec:semanticsofautomata}, we define a map $\tau:(S\to\DNS)\to(S\to\DNS)$ and show that it is eventually contractive with constant of contraction $1/2$.

Given $L:S\to\DNS$,
let
\begin{align}
\tau(L)(s) &= \begin{cases} 
\dirac{\mset\eps} & \text{if $\ell(s)=\skp$}\\
\dirac{\mset{}} & \text{if $\ell(s)=\fail$}\\
a \cdot L(t) & \text{if $\ell(s)= a\in\Sigma$ and $\del(s)=t$}\\
\sum_i r_i L(t_i) & \text{if $\ell(s)=\oplus$ and $\del(s)=\sum_i r_it_i$}\\
{\amp}(ML(m)) & \text{if $\ell(s)={\amp}$ and $\del(s)=m$.}
\end{cases}
\label{eq:autotaudef}
\end{align}

\begin{lemmarep}
\label{lem:autcontractive}
The map $\tau:(S\to\DNS)\to(S\to\DNS)$ of \eqref{eq:autotaudef} is eventually contractive with (eventual) constant of contraction $1/2$: for all $L_1,L_2$, for any $n\ge\len S$,
\begin{align*}
d(\tau^n(L_1),\tau^n(L_2)) \le \textstyle\frac 12d(L_1,L_2).
\end{align*}
\end{lemmarep}
\begin{proof}
By the productivity assumption, every path from any state $s$ visits a terminal or action state within $\len S$ steps. 
We show by induction on $k$ that if $k$ is the least number such that all paths from $s$ have visited a terminal or action state within $k$ steps, then $d(\tau^{k+1}(L_1)(s),\tau^{k+1}(L_2)(s)) \le \textstyle\frac 12d(L_1,L_2)$. There are five cases:
\begin{itemize}
\item 
If $\ell(s)=\skp$, then $k=0$ and $\tau(L_1)(s) = \tau(L_2)(s) = \dirac{\mset\eps}$, so $d(\tau(L_1)(s),\tau(L_2)(s)) = 0$.
\item 
If $\ell(s)=\fail$, then $k=0$ and $\tau(L)(s) = \tau(L_2)(s) = \dirac{\mset{}}$, so $d(\tau(L_1)(s),\tau(L_2)(s)) = 0$.
\item 
If $\ell(s)=a\in\Sigma$ and $\del(s)=t$, then $k=0$, $\tau(L_1)(s) = a\cdot L_1(t)$, and $\tau(L_2)(s) = a\cdot L_2(t)$, so
\begin{align*}
d(\tau(L_1)(s),\tau(L_2)(s)) &= d(a\cdot L_1(t),a\cdot L_2(t))\\
&\le \textstyle\frac 12d(L_1(t),L_2(t)) & \text{by Lemma \ref{lem:ultrametricB}(i)}\\
&\le \textstyle\frac 12\max_{u\in S} d(L_1(u),L_2(u))\\
&= \textstyle\frac 12d(L_1,L_2).
\end{align*}
\item 
If $\ell(s)=\oplus$ and $\del(s)=\sum_i r_it_i$, then $k>0$, $\tau(L_1)(s) = \sum_i r_i L_1(t_i)$, and $\tau(L_2)(s) = \sum_i r_i L_2(t_i)$, so
\begin{align*}
d(\tau^{k+1}(L_1)(s),\tau^{k+1}(L_2)(s))
&= d(\tau(\tau^k(L_1)(s)),\tau(\tau^k(L_2)(s)))\\
&= d(\sum_i r_i \tau^k(L_1)(t_i),\sum_i r_i \tau^k(L_2)(t_i))\\
&\le \max_i d(\tau^k(L_1)(t_i),\tau^k(L_2)(t_i)) && \text{by Lemma \ref{lem:ultrametricB}(ii)}\\
&\le \max_i \textstyle\frac 12d(L_1,L_2) && \text{induction hypothesis}\\
&= \textstyle\frac 12d(L_1,L_2).
\end{align*}
\item 
If $\ell(s)={\amp}$ and $\del(s)=m$, then $k>0$, $\tau(L_1)(s) = {\amp}(ML_1(m))$, and $\tau(L_2)(s) = {\amp}(ML_2(m))$, so
\begin{align*}
d(\tau^{k+1}(L_1)(s),\tau^{k+1}(L_2)(s))
&= d(\tau(\tau^k(L_1)(s)),\tau(\tau^k(L_2)(s)))\\
&= d({\amp}(M(\tau^k(L_1))(m)),{\amp}(M(\tau^k(L_2))(m)))\\
&\le \max_{t\in m} d(\tau^k(L_1)(t),\tau^k(L_2)(t)) && \text{by Lemma \ref{lem:ultrametricB}(iii)}\\
&\le \max_{t\in m} \textstyle\frac 12d(L_1,L_2) && \text{induction hypothesis}\\
&= \textstyle\frac 12d(L_1,L_2).
\end{align*}
\end{itemize}
\end{proof}

\begin{theoremrep}
\label{thm:autsemwelldefined}
The semantic map $\sem-$ on automata is well defined.
\end{theoremrep}
\begin{proof}
The definition of $\sem-$ says exactly that $\sem-$ is a fixpoint of $\tau$; that is, $\tau(\sem-)=\sem-$. As $\tau$ is an eventually contractive map on a complete metric space, it has a unique fixpoint by the Banach fixpoint theorem, which must be $\sem-$.
\end{proof}

The semantics of expressions from \S\ref{sec:semanticsofexpressions} can be handled similarly. The definition of $\sem-$ from \S\ref{sec:semanticsofexpressions} asks for a fixpoint of
\begin{align*}
\tau:(\Exp\times(\Var\to\DNS)\to\DNS) \to (\Exp\times(\Var\to\DNS)\to\DNS),
\end{align*}
where for $L:\Exp\times(\Var\to\DNS)\to\DNS$,
\begin{align}
\tau(L)(e,\rho) &= \begin{cases}
\dirac{\mset\eps} & \text{if $e=\skp$}\\
\dirac{\mset{}} & \text{if $e=\fail$}\\
\dirac{\mset a} & \text{if $e=a\in\Sigma$}\\
\rho(x) & \text{if $e=x\in\Var$}\\
rL(e_1,\rho) + (1-r)L(e_2,\rho) & \text{if $e=e_1 \oplus_r e_2$}\\
L(e_1,\rho)\amp L(e_2,\rho) = (L(e_1,\rho)\otimes L(e_2,\rho))\circ(+)^{-1} & \text{if $e=e_1 \amp e_2$}\\
L(e_1,\rho) \bind -\cdot L(e_2,\rho) & \text{if $e=e_1 \cmp e_2$}\\
L(e_1,\rho[L(\fix x{e_1},\rho)/x]) & \text{if $e=\fix x{e_1}$.}
\end{cases}
\label{eq:exptaudef}
\end{align}

Let $t\in\Var\cup\{\skp\}$. We say that an occurrence of $t$ in $e$ is \emph{unguarded} in $e$ if either
\begin{itemize}
\item
$e=t$;
\item
$e\in\{e_1\oplus_r e_2,e_1\amp e_2\}$ and $t$ is unguarded in either $e_1$ or $e_2$;
\item
$e=\fix xd$, $t\ne x$, and $t$ is unguarded in $d$; or
\item
$e=e_1\cmp e_2$, $t$ is unguarded in $e_2$, and $\skp$ is unguarded in $e_1$,
\end{itemize}
otherwise $t$ is \emph{guarded} in $e$.
Define
\begin{align*}
U(e) &= \{\text{unguarded free variables of $e$}\} &
G(e) &= \{\text{guarded free variables of $e$}\}.
\end{align*}
For $\rho_1,\rho_2\in\Var\to\DNS$ and $V\subs\Var$, let
\begin{align*}
d_{V}(\rho_1,\rho_2) &= \max_{x\in V} d(\rho_1(x),\rho_2(x)) &
d_e(\rho_1,\rho_2) &= \max \{ d_{U(e)}(\rho_1,\rho_2),\,\half d_{G(e)}(\rho_1,\rho_2) \}.
\end{align*}

\begin{lemmarep}
\label{lem:expcontractiveA}
For a fixed finite set $A\subs\Exp$ and $L_1,L_2\in\Exp\to(\Var\to\DNS)\to\DNS$, let
\begin{align*}
d(L_1,L_2) & = \sup\set{d(L_1(e,\rho),L_2(e,\rho))}{e\in A,\ \rho:\Var\to\DNS}.
\end{align*}
For sufficiently large $n$ depending on the height of $e$, for any $\rho_1,\rho_2$,
\begin{align*}
d(\tau^n(L_1)(e,\rho_1),\tau^n(L_2)(e,\rho_2)) &\le \max d_e(\rho_1,\rho_2),\,\half d(L_1,L_2).
\end{align*}
\end{lemmarep}
\begin{proof}
We proceed by induction. For the basis, for all $n\ge 1$,
\begin{align*}
& d(\tau^n(L_1)(\skp,\rho_1),\tau^n(L_2)(\skp,\rho_2)) = d(\dirac{\mset\eps},\dirac{\mset\eps}) = 0\\
& d(\tau^n(L_1)(\fail,\rho_1),\tau^n(L_2)(\fail,\rho_2)) = d(\dirac{\mset{}},\dirac{\mset{}}) = 0\\
& d(\tau^n(L_1)(a,\rho_1),\tau^n(L_2)(a,\rho_2)) = d(\dirac{\mset a},\dirac{\mset a}) = 0\\
& d(\tau^n(L_1)(x,\rho_1),\tau^n(L_2)(x,\rho_2)) = d(\rho_1(x),\rho_2(x)) \le d_x(\rho_1,\rho_2).
\end{align*}

For $e_1 \oplus_r e_2$, by choosing $n$ sufficiently large, the induction hypothesis gives
\begin{align*}
d(\tau^n(L_1)(e_i,\rho_1),\tau^n(L_2)(e_i,\rho_2)) &\le \max d_{e_i}(\rho_1,\rho_2),\,\half d(L_1,L_2),\ \ i\in\{1,2\}.
\end{align*}
Using Lemma \ref{lem:ultrametricB}(ii),
\begin{align*}
& d(\tau^{n+1}(L_1)(e_1 \oplus_r e_2,\rho_1),\tau^{n+1}(L_2)(e_1 \oplus_r e_2,\rho_2))\\
&= d(\tau^n(L_1)(e_1,\rho_1)\oplus_r \tau^n(L_1)(e_2,\rho_1),\tau^n(L_2)(e_1,\rho_2)\oplus_r \tau^n(L_2)(e_2,\rho_2))\\
&\le \max d(\tau^n(L_1)(e_1,\rho_1),\tau^n(L_2)(e_1,\rho_2)),\ d(\tau^n(L_1)(e_2,\rho_1),\tau^n(L_2)(e_2,\rho_2))\\
&\le \max d_{e_1}(\rho_1,\rho_2),\ d_{e_2}(\rho_1,\rho_2),\,\half d(L_1,L_2)\\
&\le \max d_{e_1 \oplus_r e_2}(\rho_1,\rho_2),\,\half d(L_1,L_2),
\end{align*}
since $G(e_1), G(e_2) \subs G(e_1 \oplus_r e_2)$ and
$U(e_1), U(e_2) \subs U(e_1 \oplus_r e_2)$.
The same argument applies to $e_1\amp e_2$ using Lemma \ref{lem:ultrametricB}(iii).

For $e_1 \cmp e_2$, let $n$ be sufficiently large that
\begin{align*}
d(\mu_1,\mu_2) &\le \max d_{e_1}(\rho_1,\rho_2),\,\half d(L_1,L_2) &
d(\nu_1,\nu_2) &\le \max d_{e_2}(\rho_1,\rho_2),\,\half d(L_1,L_2),
\end{align*}
where
\begin{align*}
\mu_1 &= \tau^{n}(L_1)(e_1,\rho_1) &
\nu_1 &= \tau^{n}(L_1)(e_2,\rho_1) &
\mu_2 &= \tau^{n}(L_2)(e_1,\rho_2) &
\nu_2 &= \tau^{n}(L_2)(e_2,\rho_2).
\end{align*}

Note that $U(e_1\cmp e_2) = U(e_2)$ and $G(e_1\cmp e_2) = G(e_2)$ if $\skp$ is unguarded in $e_1$, and $U(e_1\cmp e_2) = \emptyset$ and $G(e_1\cmp e_2) = U(e_2)\cup G(e_2)$ if $\skp$ is guarded in $e_1$. In the former case, $d_{e_2}(\rho_1,\rho_2) = d_{e_1\cmp e_2}(\rho_1,\rho_2)$, and in the latter,
\begin{align*}
\half d_{e_2}(\rho_1,\rho_2)
&= \half (\max d_{U(e_2)}(\rho_1,\rho_2),\,\half d_{G(e_2)}(\rho_1,\rho_2))\\
&\le \half (\max d_{U(e_2)}(\rho_1,\rho_2),\,d_{G(e_2)}(\rho_1,\rho_2))\\
&= \half d_{U(e_2)\cup G(e_2)}(\rho_1,\rho_2)\\
&= \max d_{U(e_1\cmp e_2)}(\rho_1,\rho_2),\,\half d_{G(e_1\cmp e_2)}(\rho_1,\rho_2))\\
&= d_{e_1\cmp e_2}(\rho_1,\rho_2).
\end{align*}

If $\skp$ is unguarded in $e_1$, then using Lemma \ref{lem:ultrametricB}(iv),
\begin{align*}
d(\tau^{n+1}(L_1)(e_1 \cmp e_2,\rho_1),\tau^{n+1}(L_2)(e_1 \cmp e_2,\rho_2))
&= d(\mu_1\bind -\cdot\nu_1),\mu_2\bind -\cdot\nu_2)\\
&\le \max d(\mu_1,\mu_2),\ d(\nu_1,\nu_2)\\
&\le \max d_{e_1}(\rho_1,\rho_2),\,d_{e_2}(\rho_1,\rho_2),\,\half d(L_1,L_2)\\
&= \max d_{e_1\cmp e_2}(\rho_1,\rho_2),\,\half d(L_1,L_2).
\end{align*}

If $\skp$ is guarded in $e_1$, we still have $d(\nu_1,\nu_2)\le d_{e_2}(\rho_1,\rho_2)$ by the induction hypothesis. Moreover, $\mu_1(\set{\alpha}{\alpha(\eps)>0}) = 0$, or equivalently, $\mu_1([0]_0) = 1$; that is, with probability 1, a sample from $\mu_1$ does not contain $\eps$ in its support. Thus if $\nu_1\equiv_n\nu_2$, then by Lemma \ref{lem:bind}(ii),
\begin{align*}
\mu_1\bind-\cdot\nu_1\equiv_{n+1}\mu_1\bind-\cdot\nu_2, 
\end{align*}
so
\begin{align*}
d(\mu_1\bind-\cdot\nu_1,\mu_1\bind-\cdot\nu_2) &\le \half d(\nu_1,\nu_2)
\le \half(\max d_{e_2}(\rho_1,\rho_2),\,\half d(L_1,L_2)).
\end{align*}
In this case,
\begin{align*}
& d(\tau^{n+1}(L_1)(e_1 \cmp e_2,\rho_1),\tau^{n+1}(L_2)(e_1 \cmp e_2,\rho_2))\\
&= d(\mu_1\bind -\cdot\nu_1,\mu_2\bind -\cdot\nu_2)\\
&\le \max d(\mu_1\bind -\cdot\nu_1,\mu_1\bind -\cdot\nu_2),\ d(\mu_1\bind -\cdot\nu_2,\mu_2\bind -\cdot\nu_2)\\
&\le \max d(\mu_1,\mu_2),\,\half d(\nu_1,\nu_2)\\
&\le \max d_{e_1}(\rho_1,\rho_2),\,\half d_{e_2}(\rho_1,\rho_2),\,\half d(L_1,L_2)\\
&\le \max d_{e_1\cmp e_2}(\rho_1,\rho_2),\,\half d(L_1,L_2).
\end{align*}

For a fixpoint expression $\fix xe$, by choosing $n$ sufficiently large, the induction hypothesis gives
\begin{align*}
d(\tau^n(L_1)(e,\rho_1),\tau^n(L_2)(e,\rho_2)) &\le \max d_e(\rho_1,\rho_2),\,\half d(L_1,L_2)
\end{align*}
for any $\rho_1,\rho_2$. Let
\begin{align*}
\rho_1' &= \rho_1[\tau^n(L_1)(\fix xe,\rho_1)/x] &
\rho_2' &= \rho_2[\tau^n(L_2)(\fix xe,\rho_2)/x]).
\end{align*}
Since $U(e)=U(\fix xe)$ and $G(e)=G(\fix xe)\cup\{x\}$,
\begin{align*}
& d(\tau^{n+1}(L_1)(e,\rho_1),\tau^{n+1}(L_2)(e,\rho_2))\\
&= d(\tau^n(L_1)(e,\rho_1'),\tau^n(L_2)(e,\rho_2'))\\
&\le \max d_e(\rho_1',\rho_2'),\,\half d(L_1,L_2)\\
&= \max d_{U(e)}(\rho_1',\rho_2'),\ \half d_{G(e)}(\rho_1',\rho_2'),\,\half d(L_1,L_2)\\
&= \max d_{U(\fix xe)}(\rho_1,\rho_2),\ \half d_{G(\fix xe)}(\rho_1,\rho_2),\ \half d(\rho_1'(x),\rho_2'(x)),\,\half d(L_1,L_2)\\
&= \max d_{\fix xe}(\rho_1,\rho_2),\,\half d(L_1,L_2),\ \half d(\tau^n(L_1)(\fix xe,\rho_1),\tau^n(L_2)(\fix xe,\rho_2)).
\end{align*}
We now use this to show that
\begin{align*}
& d(\tau^{n+k}(L_1)(\fix xe,\rho_1),\tau^{n+k}(L_2)(\fix xe,\rho_2))\\
&\le \max d_{\fix xe}(\rho_1,\rho_2),\,\half d(L_1,L_2),\ 2^{-k}d(\tau^n(L_1)(\fix xe,\rho_1),\tau^n(L_2)(\fix xe,\rho_2)).
\end{align*}
If $\rho_1=\rho_2$, there is nothing to prove. Otherwise, we proceed by induction.
The claim is true for $k=0$ by inspection, and
\begin{align*}
& d(\tau^{n+k+1}(L_1)(\fix xe,\rho_1),\tau^{n+k+1}(L_2)(\fix xe,\rho_2))\\
&\le \max d_{\fix xe}(\rho_1,\rho_2),\,\half d(L_1,L_2),\ \half d(\tau^{n+k}(L_1)(\fix xe,\rho_1),\tau^{n+k}(L_2)(\fix xe,\rho_2))\\
&\le \max d_{\fix xe}(\rho_1,\rho_2),\,\half d(L_1,L_2),\\
&\qquad \half(\max d_{\fix xe}(\rho_1,\rho_2),\,\half d(L_1,L_2),\ 2^{-k}d(\tau^n(L_1)(\fix xe,\rho_1),\tau^n(L_2)(\fix xe,\rho_2)))\\
&= \max d_{\fix xe}(\rho_1,\rho_2),\,\half d(L_1,L_2),\ 2^{-(k+1)}d(\tau^n(L)(\fix xe,\rho_1),\tau^n(L)(\fix xe,\rho_2)).
\end{align*}
Choosing $k\ge 1-\log_2 d(L_1,L_2)$ so that $2^{-k}\le \half d(L_1,L_2)$ and using the fact that all distances are bounded by 1, we have
\begin{align*}
& d(\tau^{n+k}(L_1)(\fix xe,\rho_1),\tau^{n+k}(L_2)(\fix xe,\rho_2))\\
&\le \max d_{\fix xe}(\rho_1,\rho_2),\,\half d(L_1,L_2),\ 2^{-k}d(\tau^n(L_1)(\fix xe,\rho_1),\tau^n(L_2)(\fix xe,\rho_2))\\
&= d_{\fix xe}(\rho_1,\rho_2),\,\half d(L_1,L_2).
\tag*\qedhere
\end{align*}
\end{proof}

\begin{lemmarep}
\label{lem:expcontractive}
For any $e\in\Exp$, the map $\tau$ of \eqref{eq:exptaudef} restricted to subterms of $e$ is eventually contractive with (eventual) constant of contraction $1/2$.
\end{lemmarep}

\begin{proof}
Let $A$ be the set of subterms of a given term.
By Lemma \ref{lem:expcontractiveA}, for any $e\in A$, for sufficiently large $n$ and any $\rho:\Var\to\DNS$,
\begin{align*}
d(\tau^n(L_1)(e,\rho),\tau^n(L_2)(e,\rho)) &\le \max d_e(\rho,\rho),\,\half d(L_1,L_2) = \half d(L_1,L_2).
\end{align*}
As $e$ and $\rho$ were arbitrary, $d(\tau^n(L_1),\tau^n(L_2)) \le \half d(L_1,L_2)$.
\end{proof}

\begin{theoremrep}
\label{thm:expsemwelldefined}
The semantic map $\sem-$ on expressions is well defined.
\end{theoremrep}
\begin{proof}
The proof is the same as the proof of Theorem \ref{thm:autsemwelldefined}.
\end{proof}

Contractive maps also allow proofs by coinduction in which one can assume a coinduction hypothesis equivalent to the proposition to be proved, as long as progress is made, as explained in \cite{KS16a}. These arguments are sometimes called cyclic proof systems. The proof of the following lemma is an example. The lemma relates substitution to rebinding and is a well-known phenomenon in logical systems; see for example \cite[Lemma 5.15(i), p.~89]{Barendregt84}.

\begin{lemmarep}
\label{lem:substrebind}
$\sem{e[d/x]}\rho = \sem e\rho[\sem d\rho/x]$.
\end{lemmarep}
\begin{proof}
By coinduction. The base cases are all straightforward:
\begin{align*}
\sem{x[d/x]}\rho &= \sem d\rho = \sem x\rho[\sem d\rho/x]\\
\sem{\skp[d/x]}\rho &= \sem\skp\rho = \dirac{\mset\eps} = \sem\skp\rho[\sem d\rho/x]\\
\sem{\fail[d/x]}\rho &= \sem\fail\rho = \dirac{\mset{}} = \sem\fail\rho[\sem d\rho/x]\\
\sem{a[d/x]}\rho &= \sem a \rho = \dirac{\mset a} = \sem a \rho[\sem d\rho/x],\ a\in\Sigma.
\end{align*}
The cases $\oplus$ and $\amp$ are also straightforward:
\begin{align*}
\sem{(e_1 \oplus_r e_2)[d/x]}\rho &= \sem{e_1[d/x] \oplus_r e_2[d/x]}\rho = \sem{e_1[d/x]} \oplus_r \sem{e_2[d/x]}\rho\\
&= \sem{e_1}\rho[\sem d\rho/x] \oplus_r \sem{e_2}\rho[\sem d\rho/x] = \sem{e_1 \oplus_r e_2}\rho[\sem d\rho/x]
\end{align*}
and similarly for $\amp$. For compositions, cognizant of the restriction the left operand must be closed (may not contain any free variables),
\begin{align*}
\sem{(e_1 \cmp e_2)[d/x]}\rho &= \sem{e_1 \cmp (e_2[d/x])}\rho = \sem{e_1} \bind -\cdot\sem{e_2[d/x]}\rho\\
&= \sem{e_1} \bind -\cdot\sem{e_2}\rho[\sem d\rho/x] = \sem{e_1 \cmp e_2}\rho[\sem d\rho/x]
\end{align*}
For fixpoint expressions, we have two cases.
\begin{align*}
\sem{(\fix xe)[d/x]}\rho &= \sem{\fix xe}\rho = \sem{\fix xe}\rho[\sem d\rho/x]
\end{align*}
For $y\ne x$, we assume without loss of generality that $y$ is not free in $d$.
\begin{align*}
& \sem{(\fix ye)[d/x]}\rho\\
&= \sem{\fix y{(e[d/x])}}\rho && \text{definition of substitution}\\
&= \sem{e[d/x]}\rho[\sem{\fix y{(e[d/x])}}\rho/y] && \text{semantics of $\mathsf{fix}$}\\
&= \sem e\rho[\sem{\fix y{(e[d/x])}}\rho/y][\sem d\rho[\sem{\fix y{(e[d/x])}}\rho/y]/x] && \text{coinduction hypothesis}\\
&= \sem e\rho[\sem{\fix y{(e[d/x])}}\rho/y][\sem d\rho/x] && \text{since $y$ is not free in $d$}\\
&= \sem e\rho[\sem d\rho/x][\sem{\fix y{(e[d/x])}}\rho/y] && \text{switch order of rebinding}\\
&= \sem e\rho[\sem d\rho/x][\sem{(\fix ye)[d/x]}\rho/y] && \text{definition of substitution}\\
&= \sem e\rho[\sem d\rho/x][\sem{\fix ye}\rho[\sem d\rho/x]/y] && \text{coinduction hypothesis}\\
&= \sem{\fix ye}\rho[\sem d\rho/x] && \text{semantics of $\mathsf{fix}$.}
\tag*\qedhere
\end{align*}
\end{proof}

\section{A Kleene Theorem}
\label{sec:kleene}

In this section, we prove a Kleene theorem showing that expressions and automata are equivalent in expressive power. We first describe a third formalism that is equivalent to both expressions and automata involving systems of affine linear equations.

\subsection{Systems of Equations}
\label{sec:systemequations}

A \emph{system of (affine linear) equations} is a finite collection of equations of the form $x = e$, where $x$ is a variable and $e$ is an expression, with certain restrictions listed below.

An occurrence of a variable $x$ in $e$ is \emph{guarded} in $e$ if it is in a subexpression of $e$ of the form $a \cmp e'$, otherwise it is \emph{unguarded}. By $\alpha$-conversion if necessary, we can assume that all bound variables are distinct and different from all free variables in the system. 

We assume that systems of equations satisfy the following restrictions:
\begin{itemize}
  \item No variable may appear on the left-hand side of more than one equation. 
  \item Compositions may occur only in the form $a \cmp e$ for $a \in \Sigma$.
  \item There must be no sequence of equations $x_i = e_i$, $1\le i\le n$, in which $x_{i + 1}$ occurs unguarded in $e_i$, $1\le i\le n-1$, and $x_1$ occurs unguarded in $e_n$. This is the productivity assumption again.
\end{itemize}

Similar to automata and expressions, the semantics of systems of equations is defined coinductively, relative to an environment $\rho:\Var\to\DNS$ to interpret free variables.
\begin{itemize}
\item 
$\semr{\skp} = \dirac{\mset\eps}$
\item 
$\semr{\fail} = \dirac{\mset{}}$
\item 
$\semr{a \cmp e} = a\cdot\semr{e}$
\item 
$\semr{e_1\oplus_r e_2} = r\sem{e_1}\rho + (1-r)\semr{e_2}$
\item 
$\semr{e_1\amp e_2} = \semr{e_1} \amp \semr{e_2}$
\item
$\semr{\fix xe} = \semr{e}[\semr{\fix xe}/x]$
\item
If $x$ occurs on the left of an equation $x = e$, then $\semr{x} = \semr{e}$
\item
If $x$ does not occur on the left of an equation, then $\semr{x} = \rho(x)$.
\end{itemize}

\begin{lemmarep}
\label{lem:semsystemwelldefined}
Given a fixed system of equations, the semantic map $\sem{-}: \Exp \to (\Var \to \DNS) \to \DNS$ is well defined.
\end{lemmarep}
\begin{proof}
The proof is the same as for Theorems \ref{thm:autsemwelldefined} and \ref{thm:expsemwelldefined}, using the version of $\semr-$ defined for systems of equations and a corresponding contractive map.
Here, the labeling functions will take the form $L : E \to (F \to \DNS) \to \DNS$.
Like before, we can establish a metric $d(L_1,L_2)$ on labelings as follows:
\begin{align*}
d(L_1,L_2) &= \begin{cases}
2^{-k}, & \text{if $k$ is minimum such that there exist $e\in E$, $m\in \NS$, $n\in\naturals$,}\\
& \text{and $\rho : F \to \DNS$ with $k \ge N(e,n)$ and $L_1(e)\rho([m]_n)\ne L_2(e)\rho([m]_n)$,}\\
0, & \text{if no such $k$ exists.}
\end{cases}
\end{align*}
The proof that this is a complete ultrametric follows the same structure as that of Lemma \ref{lem:ultrametric}, with the version of $d$ for systems of equations replacing the one defined for automata.

Semantics will again be the fixpoint of a contractive map $\tau: (\Exp \to (F \to \DNS) \to \DNS) \to (\Exp \to (F \to \DNS) \to \DNS)$. Let $L: \Exp \to (F \to \DNS) \to \DNS$ and define $\tau(L)$:
\begin{itemize}
\item 
If $e=\skp$, then $\tau(L)(e)\rho = \dirac{\mset\eps}$.
\item 
If $e=\fail$, then $\tau(L)(e)\rho = \dirac{\mset{}}$.
\item 
If $e=p \cmp f$, then $\tau(L)(e)\rho = p\cdot L(f)\rho$.
\item 
If $e=\bigoplus_i r_if_i$, then $\tau(L)(e)\rho = \sum_i r_i L(f_i)\rho$.
\item 
If $e=\bigamp_{i=1}^k f_i$, then $\tau(L)(e)\rho = {\amp}(ML(\mset{\seq f1k})\rho)$.
\item 
If $e = x \in F$, then $\tau(L)(e)\rho = \rho(x)$.
\item
If $e = x_i \in X \setminus F$ with equation $x_i = e_i$, then $\tau(L)(e)\rho = L(e_i)\rho$.
\end{itemize}
The proof that this is a contraction is largely the same as that of Lemma \ref{lem:autcontractive}, with additional cases to handle bound variables and concatenation.

Suppose that $d(L_1,L_2) < 2^{-k}$. We wish to show that 
\begin{align}
\forall e\ \forall n\ \forall m\ \forall \rho\ \ k+1\ge N(e,n) \Imp \tau(L_1)(e)\rho([m]_n)=\tau(L_2)(e)\rho([m]_n).\label{eq:contractiveBsystem}
\end{align}
Let $e, n, m, \rho$ be arbitrary and suppose that $k + 1 \geq N(e,n)$. 

We treat the case where $e = x_i \in X \setminus F$ is a bound variable.
Note that $k + 1 \geq N(x_i,n)$ implies $k \geq N(e_i,n)$ so $L_1(e_i)\rho([m]_n) = L_2(e_i)\rho([m]_n)$.
\begin{align*}
  \tau(L_1)(e)\rho([m]_n) = L_1(e_i)\rho([m]_n) = L_2(e_i)\rho([m]_n) = \tau(L_2)(e)\rho([m]_n)
\end{align*}

Now consider the case where $e = p \cmp f$.

Note that $(p \cdot D)([m]_n) = D( (p \cdot -)^{-1}([m]_n))$. Thus, since $\tau(L)(e)\rho = p \cdot L(f)\rho$ it suffices to show that $L_1(f)\rho((p \cdot -)^{-1}([m]_n)) = L_2(f)\rho((p \cdot -)^{-1}([m]_n))$. From Lemma \ref{lem:sumprod}, we have three cases to consider:
\begin{itemize}
  \item If $n = 0$ and $m(\eps) = 0$, then $(p \cdot -)^{-1}([m]_n) = \NS$. Then,
  \begin{align*}
    L_1(f)\rho((p \cdot -)^{-1}([m]_n)) = L_1(f)\rho(\NS) = 1 = L_2(f)\rho(\NS) = L_2(f)\rho((p \cdot -)^{-1}([m]_n))
  \end{align*}
  
  \item If $n \geq 1$ and $m \equiv_n p \cdot m'$, then $(p \cdot -)^{-1}([m]_n) = [m']_{n-1}$. From $k + 1 \geq N(e,n)$, we have that $k \geq N(f,n-1)$, so $L_1(f)\rho ([m']_{n-1}) = L_2(f)\rho([m']_{n-1})$. Then,
  \begin{align*}
    L_1(f)\rho((p \cdot -)^{-1}([m]_{n})) = L_1(f)\rho([m']_{n-1}) = L_2(f)\rho([m']_{n-1}) = L_2(f)\rho((p \cdot -)^{-1}([m]_{n}))
  \end{align*}

  \item Otherwise, $(p \cdot -)^{-1}([m]_n) = \emptyset$. Then,
  \begin{align*}
    L_1(f)\rho((p \cdot -)^{-1}([m]_n)) = L_1(f)\rho(\emptyset) = 0 = L_2(f)\rho(\emptyset) = L_2(f)\rho((p \cdot -)^{-1}([m]_n))
  \end{align*}
  
\end{itemize}
\end{proof}

Every expression and every automaton can be converted to an equivalent system of equations. For automata, we can introduce a variable for each state and an equation describing the transitions from that state. The systems resulting from this transformation contain no occurrence of the $\mathsf{fix}$ operator. We will show in \S\ref{sec:automatontoexpression} that the system of equations arising from any automaton has a solution in the space of expressions.

For expressions, the translation is less straightforward, chiefly due to the non-local effect of sequential composition ($\cmp$). We will explain how to handle this in \S\ref{sec:expressiontoautomaton}.

\subsection{Expressions to Automata}
\label{sec:expressiontoautomaton}

We can now use the results of \S\ref{sec:systemequations} to convert expressions to equivalent automata. This is one direction of our Kleene theorem.

The main difficulty is the sequential composition operator ($\cmp$). The issue here is that the parse tree of the expression is not the same as the computation tree of the corresponding automaton. The two trees (parse and computation) are the same for the other operators: an agent at a state corresponding to the expression $e_1 \oplus_{1/2} e_2$ would flip a coin and move down to one of $e_1$ or $e_2$, and an agent at a state corresponding to the expression $e_1\amp e_2$ would fork and send an agent to each of $e_1$ or $e_2$. However, for the expression $e_1\cmp e_2$, one should think of $e_2$ as sitting \emph{below} $e_1$ in the corresponding computation tree. That is, when an agent reaches a leaf of $e_1$, say $\skp$ or $a$, it would move down to the root of the computation tree corresponding to $e_2$ and continue from there. The substitution operator $[e]$ defined below does exactly this transformation, essentially transforming the parse tree to a computation tree. Lemma \ref{lem:elimcmp} below proves its correctness.

We now show how to eliminate sequential composition. By this, we mean that we can reduce expressions to a form in which the sequential composition operator appears only in the form $a\cmp e$, where $a$ is a primitive letter, as required by restrictions on systems of equations. Let us define the following postfix syntactic substitution operator:
\begin{align*}
[e] = [e/\skp,(a\cmp e)/a \mid a\in\Sigma].
\end{align*}
Applying this operator to an expression $e_1$ simultaneously substitutes $e$ for $\skp$ and $a\cmp e$ for $a$ at all occurrences of $\skp$ and $a$ appearing in a terminal position in $e_1$. Intuitively, this is meant to capture the idea that any $a$ and $\skp$ performed last in a computation should be followed by the continuation $e$. The formal definition is inductive.
\begin{gather*}
\skp[e] = e
\qquad\qquad \fail[e] = \fail
\qquad\qquad a[e] = a\cmp e
\qquad\qquad x[e] = x\\[1ex]
(e_1\oplus_r e_2)[e] = (e_1[e])\oplus_r(e_2[e])
\qquad\qquad (e_1\cmp e_2)[e] = e_1\cmp(e_2[e])\\[1ex]
(e_1\amp e_2)[e] = (e_1[e])\amp(e_2[e])
\qquad\qquad (\fix x{e_1})[e] = \fix x{(e_1[e])}
\end{gather*}
In the clause for the fixpoint expression, we assume without loss of generality that $x$ has no free occurrence in $e$. This can be enforced by $\alpha$-conversion if necessary.

\begin{lemmarep}
\label{lem:elimcmp}
$\semr{e_1\cmp e_2} = \semr{e_1[e_2]}$.
\end{lemmarep}
\begin{proof}
By coinduction on the structure of $e_1$. Coinduction, as opposed to induction, is needed to deal with the case $\fix xe$, since unwinding the definition results in a larger term.

By definition,
\begin{align*}
& \semr{\skp\cmp e} = \semr{e} = \semr{\skp[e]}
&& \semr{a\cmp e} = \semr{a[e]}
&& \semr{\fail\cmp e} = \semr{\fail} = \semr{\fail[e]}.
\end{align*}
The case $\semr{x\cmp e}$ cannot occur by the restriction that variables must occur in tail position.

For the remaining cases, by definition, $\semr{e_1\cmp e_2} = \semr{e_1}\bind -\cdot\semr{e_2}$, and
\begin{align*}
\mu \bind f &= \mu\circ({\amp}\circ Mf)^{-1} = \mu\circ(Mf)^{-1}\circ{\amp}^{-1},
\end{align*}
so we must show that
\begin{align*}
\semr{e_1[e_2]} &= \semr{e_1}\circ(M(-\cdot\semr{e_2}))^{-1}\circ{\amp}^{-1}.
\end{align*}

For the case $e_1\oplus_r e_2$, we first show that for any $g$,
\begin{align}
(\mu\oplus_r\nu)\circ g^{-1}
&= (\mu\circ g^{-1}) \oplus_r (\nu\circ g^{-1}).\label{eq:plusr1}
\end{align}
For any $A$,
\begin{align*}
((\mu\oplus_r\nu)\circ g^{-1})(A)
&= (\mu\oplus_r\nu)(g^{-1}(A))
= (r\mu + (1-r)\nu)(g^{-1}(A))\\
&= r\mu(g^{-1}(A)) + (1-r)\nu(g^{-1}(A))
= r(\mu\circ g^{-1})(A) + (1-r)(\nu\circ g^{-1})(A)\\
&= (r(\mu\circ g^{-1}) + (1-r)(\nu\circ g^{-1}))(A)
= ((\mu\circ g^{-1}) \oplus_r (\nu\circ g^{-1}))(A).
\end{align*}
It follows that for any $f$,
\begin{align}
(\mu\oplus_r\nu)\bind f
&= (\mu\oplus_r\nu)\circ({\amp}\circ Mf)^{-1}\nonumber\\
&= (\mu\circ({\amp}\circ Mf)^{-1}) \oplus_r (\nu\circ({\amp}\circ Mf)^{-1}) && \text{by \eqref{eq:plusr1}}\nonumber\\
&= (\mu\bind f) \oplus_r (\nu\bind f)\label{eq:plusr}
\end{align}
Then
\begin{align*}
\semr{e_1\oplus_r e_2}\bind -\cdot\semr{e}
&= (\semr{e_1} \oplus_r\semr{e_2})\bind -\cdot\semr{e}\\
&= (\semr{e_1}\bind -\cdot\semr{e}) \oplus_r (\semr{e_2}\bind -\cdot\semr{e}) && \text{by \eqref{eq:plusr}}\\
&= \semr{e_1[e]}\oplus_r\semr{e_2[e]} && \text{coinductive hypothesis}\\
&= \semr{(e_1[e])\oplus_r (e_2[e])}\\
&= \semr{(e_1\oplus_r e_2)[e]}.
\end{align*}

For the case $e_1\amp e_2$, we first show that if $g$ is additive, that is, if $g\circ (+) = (+)\circ(g \times g)$, then
\begin{align}
(\mu\oplus_r\nu)\circ g^{-1}
&= (\mu\circ g^{-1}) \oplus_r (\nu\circ g^{-1}).\label{eq:amp}
\end{align}
\begin{align*}
(\mu\amp\nu)\circ g^{-1}
&= (\mu\otimes\nu)\circ(+)^{-1}\circ g^{-1}
= (\mu\otimes\nu)\circ(g\circ (+))^{-1}\\
&= (\mu\otimes\nu)\circ((+)\circ(g \times g))^{-1} && \text{by \eqref{eq:amp}}\\
&= (\mu\otimes\nu)\circ(g \times g)^{-1}\circ(+)^{-1}
= ((\mu\circ g^{-1})\otimes(\nu\circ g^{-1}))\circ(+)^{-1}\\
&= ((\mu\circ g^{-1})\amp(\nu\circ g^{-1})).
\end{align*}
Both $\amp$ an $Mf$ are additive:
\begin{align*}
& Mf(m+\ell) = Mf(m) + Mf(\ell) && {\amp}(m+\ell) = {\amp}(m) + {\amp}(\ell),
\end{align*}
therefore so is their composition ${\amp}\circ Mf$, so
\begin{align*}
{\amp}\circ Mf\circ (+)
&= (+)\circ(({\amp}\circ Mf) \times ({\amp}\circ Mf)).
\end{align*}
Then
\begin{align}
(\mu\amp\nu)\bind f
&= (\mu\amp\nu)\circ({\amp}\circ Mf)^{-1}\nonumber\\
&= (\mu\circ({\amp}\circ Mf)^{-1})\amp(\nu\circ({\amp}\circ Mf)^{-1}) && \text{by \eqref{eq:amp}}\nonumber\\
&= (\mu\bind f)\amp(\nu\bind f),\label{eq:amp1}
\end{align}
therefore
\begin{align*}
\semr{e_1\amp e_2}\bind -\cdot\semr{e}
&= (\semr{e_1}\amp\semr{e_2})\bind -\cdot\semr{e}\\
&= (\semr{e_1}\bind -\cdot\semr{e})\amp(\semr{e_2}\bind -\cdot\semr{e}) && \text{by \eqref{eq:amp1}}\\
&= \semr{(e_1[e])}\amp\semr{(e_2[e])} && \text{coinductive hypothesis}\\
&= \semr{(e_1[e])\amp (e_2[e])}\\
&= \semr{(e_1\amp e_2)[e]}.
\end{align*}

For the case $e_1\cmp e_2$,
\begin{align*}
\semr{(e_1\cmp e_2)\cmp e}
&= \semr{e_1\cmp(e_2\cmp e)} && \text{associativity}\\
&= \semr{e_1} \bind -\cdot\semr{e_2\cmp e]} && \text{definition of $\cmp$}\\
&= \semr{e_1} \bind -\cdot\semr{e_2[e]} && \text{coinductive hypothesis}\\
&= \semr{e_1\cmp(e_2[e])} && \text{definition of $\cmp$}\\
&= \semr{(e_1\cmp e_2)[e]} && \text{definition of $[e]$.}
\end{align*}

For the final case $\fix x{e_1}$, we wish to show
\begin{align*}
\semr{(\fix x{e_1});e} &= \semr{(\fix x{e_1})[e]}
\end{align*}
As mentioned, by $\alpha$-conversion if necessary, we can assume without loss of generality that there is no free occurrence of $x$ in $e$. Due to this assumption, we have
\begin{align}
e_1[e_2/x][e] &= e_1[e][e_2[e]/x].\label{eq:fix1}
\end{align}
Then
\begin{align*}
\semr{(\fix x{e_1})[e]}
&= \semr{\fix x{(e_1[e])}} && \text{definition of $[e]$}\\
&= \semr{e_1[e][\fix x{(e_1[e])}/x]}\\
&= \semr{e_1[e][(\fix x{e_1})[e]/x]} && \text{definition of $[e]$}\\
&= \semr{e_1[\fix x{e_1}/x][e]} && \text{by \eqref{eq:fix1}}\\
&= \semr{e_1[\fix x{e_1}/x]\cmp e} && \text{coinductive hypothesis}\\
&= \semr{(\fix x{e_1})\cmp e}.
\tag*\qedhere
\end{align*}
\end{proof}

Now we show how to convert an expression to an equivalent automaton. We first rename all bound variables as necessary to avoid duplication. Then we apply Lemma \ref{lem:elimcmp} to transform the expression to an equivalent one in which all compositions are of the basic form $a\cmp d$ with $a\in\Sigma$. Let $e$ be this new expression. The states of our automaton will be the subexpressions of $e$. Let us write $\auto(e)$ for the state of the automaton corresponding to the expression $e$. Actually, $\auto$ is the identity function, but the difference is that $\sem e\rho$ will refer to the semantics of expressions as given in \S\ref{sec:semanticsofexpressions}, whereas $\sem{\auto(e)}$ will refer to the semantics of automata as given in \S\ref{sec:semanticsofautomata}. The labels and transitions of the automaton are given Table \ref{tab:expressionstoautomata}.
\begin{table}
\caption{Converting an expression to an automaton}
\label{tab:expressionstoautomata}
\begin{tabular}{l@{\hspace{1cm}}l@{\hspace{1cm}}l}
  \toprule
   Expression $e$ & $\ell(\auto(e))$ & $\del(\auto(e))$\\
  \midrule
   $\skp$ & $\skp$ & $-$\\
   $\fail$ & $\fail$ & $-$\\
   $a\in\Sigma$ & $a$ & $h(\skp)$\\
   $a\cmp d$, $a\in\Sigma$ & $a$ & $\auto(d)$\\
   $e_1 \amp e_2$ & $\amp$ & $\mset{\auto(e_1),\auto(e_2)}$\\
   $e_1 \oplus_r e_2$ & $\oplus$ & $r\auto(e_1)+(1-r)\auto(e_2)$\\
   $\fix xd$ & $\amp$ & $\mset{\auto(d)}$\\
   $x$ & $\amp$ & $\mset{\auto(\fix xd)}$\\
  \bottomrule 
\end{tabular}
\end{table}

Here are some observations about this construction:
\begin{itemize}
\item
In the clause for $a\in\Sigma$, we treat $a$ as we would $a\cmp\skp$. We may have to introduce the term $\skp$ if it does not already occur in $e$.
\item
In the clauses for $\fix xd$ and $x$, we have taken $\ell(\auto(e))={\amp}$ and $\del(\auto(e))$ a singleton multiset, but we could have defined $\ell(e)=\oplus$ and $\del(\auto(e))$ a Dirac measure on the successor. In both cases there is a single successor and the effect is the same.
\item
In the clause for $x$, the transition function $\del(\auto(x))$ takes $\auto(x)$ to the image under $\auto$ of the binding occurrence $\fix xd$ of $x$, where $x$ is a free variable of $d$. Any cycle created by this back edge must visit an action state because of the productivity assumption for expressions, so the automaton also satisfies the productivity assumption.
\end{itemize}

\begin{lemmarep}
\label{lem:expressiontoautomaton}
Let $e$ be a subexpression of a closed expression. Let $\rho:\Var\to\DNS$ be an environment such that $\rho(x)=\sem{\fix xd}\rho$ for all free variables $x$ of $e$, where $\fix xd$ is the binding occurrence of $x$; that is, $\rho=\rho[\sem{\fix xd}\rho/x]$. Then $\sem e\rho = \sem{\auto(e)}$. In particular, $\sem e = \sem{\auto(e)}$ for all closed terms $e$.
\end{lemmarep}
\begin{proof}
It may not be immediately clear that there exists an environment $\rho$ satisfying the preconditions of the lemma. Let $\fix{x_i}{e_i}$, $1\le i\le n$, be the sequence of fixpoint expressions in whose scope $e$ occurs in order from outermost to innermost. Given an arbitrary initial environment $\rho:\Var\to\DNS$, let
\begin{align*}
\rho_0 &= \rho & \rho_i &= \rho_{i-1}[\sem{\fix{x_i}{e_i}}\rho_{i-1}/x_i],\ \ 1\le i\le n.
\end{align*}
Then $\rho_n(x_i) = \rho_i(x_i) = \sem{\fix{x_i}{e_i}}\rho_{i-1}$, since bound variables are distinct. But $\sem{\fix{x_i}{e_i}}\rho_{i-1} = \sem{\fix{x_i}{e_i}}\rho_n$, because $\rho_{i-1}$ and $\rho_n$ agree on all free variables of $\fix{x_i}{e_i}$, namely $\{x_1,\ldots,x_{i-1}\}$. Thus $\rho_n(x_i) = \sem{\fix{x_i}{e_i}}\rho_n$, that is, $\rho_n = \rho_n[\sem{\fix{x_i}{e_i}}\rho_n/x_i]$, $1\le i\le n$.

The proof is by coinduction. In most cases, this amounts to comparing the coinductive definitions of the semantics of each operator. For each of the cases $e$ below, suppose $\rho(x)=\sem{\fix xd}\rho$ for all free variables $x$ of $e$; that is, for each free variable $x$ of $e$, $\rho[\sem{\fix xd}\rho/x] = \rho$. For the coinductive cases, there is an appeal to the coinductive hypothesis.
\begin{align*}
\sem{\auto(\skp)} &= \dirac{\mset\eps} = \sem\skp\rho\\
\sem{\auto(\fail)} &= \dirac{\mset{}} = \sem\fail\rho\\
\sem{\auto(a)} &= a\cdot\sem{\auto(\skp)} = a\cdot\dirac{\mset\eps} = \dirac{\mset a} = \sem a\rho\\
\sem{\auto(a\cmp d)} &= a\cdot\sem{\auto(d)} = a\cdot\sem d\rho = \sem{a\cmp d}\rho\\
\sem{\auto(x)} &= \sem{{\amp}\mset{\auto(\fix xd)}} = \sem{\auto(\fix xd)} = \sem{\fix xd}\rho = \rho(x) = \sem{x}\rho\\
\sem{\auto(\fix xd)} &= \sem{{\amp}\mset{\auto(d)}} = \sem{\auto(d)} = \sem{d}\rho[\sem{\fix xd}\rho/x] = \sem{\fix xd}\rho\\
\sem{\auto(e_1 \amp e_2)} &= \sem{{\amp}\mset{\auto(e_1),\auto(e_2)}} = {\amp}\mset{\sem{\auto(e_1)},\sem{\auto(e_2)}} = {\amp}\mset{\sem{e_1}\rho,\sem{e_2}\rho}\\
&= \sem{e_1}\rho \amp \sem{e_2}\rho = \sem{e_1\amp e_2}\rho\\
\sem{\auto(e_1 \oplus_r e_2)} &= \sem{\auto(e_1) \oplus_r\auto(e_2)} = r\sem{\auto(e_1)}+(1-r)\sem{\auto(e_2)}\\
&= r\sem{e_1}\rho+(1-r)\sem{e_2}\rho = \sem{e_1}\rho\oplus_r\sem{e_2}\rho = \sem{e_1\oplus_re_2}\rho.
\tag*\qedhere
\end{align*}
\end{proof}

\begin{theoremrep}
\label{thm:expressiontoautomaton}
For every closed expression $e\in\Exp$, there exists an automaton with equivalent semantics. 
\end{theoremrep}
\begin{proof}
This follows immediately from Lemma \ref{lem:expressiontoautomaton}.
\end{proof}

\subsection{Automata to Expressions}
\label{sec:automatontoexpression}

As a first step in converting automata to expressions, we observe that automata are essentially systems of equations. Given an automaton $(S,\ell,\del)$, we can regard the states $S$ as variables and the transition structure as equations, according to the following table:
\begin{align*}
  \begin{tabular}{ccc}
  \toprule
    $\ell(s)$ & $\del(s)$ & \text{corresponding equation}\\
  \midrule
    $\skp$ & $-$ & $s = \skp$\\
    $\fail$ & $-$ & $s = \fail$\\
    $a\in\Sigma$ & $t$ & $s = a\cmp t$\\
    $\oplus$ & $\sum_{i=1}^k r_it_i$ & $s = \oplus_{i=1}^k r_it_i$\\
    $\amp$ & $\mset{\seq t1k}$ & $s = \bigamp_{i=1}^k t_i$\\
  \bottomrule
\end{tabular}
\end{align*}
The semantic map $\sem-$ assigns the same value to $s$ as a state in the automaton as it does to $s$ as a variable in the corresponding system of equations, because the coinductive definitions of \S\ref{sec:semanticsofautomata} for automata and of \S\ref{sec:systemequations} for systems of equations mirror each other exactly.

It remains to show that a system of equations can be transformed to an equivalent set of expressions. We use a variant of Beki\'c's theorem \cite{Bekić1984}, which provides a general procedure to convert mutually recursive definitions to nested recursions on single variables. For example, the system of two equations in two variables
$x = e(x,y)$, $y = d(x,y)$
can be written
\begin{align*}
& x = \fix x{e(x,\fix y{d(x,y)})} && y = \fix y{d(\fix x{e(x,y)},y)}.
\end{align*}
As formulated in \cite{Bekić1984}, Beki\'c's theorem applies to least fixpoints of Scott-continuous functions on directed-complete partial orders, but we need a version that applies to unique fixpoints of contractive maps on complete metric spaces. Although the two variants rest on different assumptions, the procedure and result are the same. We give a formal proof of this variant.

\begin{lemmarep}
\label{lem:Bekic}
Let $m\ge 1$. The two systems of equations
\begin{align*}
x_i &= e_i,\ 1\le i\le m & x_i &= e_i[\fix{x_1}{e_1}/x_1],\ 1\le i\le m
\end{align*}
have the same solution.
\end{lemmarep}
\begin{proof}
By definition, the solutions of these two systems are environments $\rho',\rho'':\Var\to\DNS$ that are the unique fixpoints of contractive maps $\tau',\tau'' : (\Var\to\DNS) \to (\Var\to\DNS)$, respectively, with
\begin{align*}
\tau'(\rho)
&= \rho[\sem{e_i}\rho/x_i \mid 1\le i\le m] &
\tau''(\rho)
&= \rho[\sem{e_i[\fix{x_1}{e_1}/x_1]}\rho/x_i \mid 1\le i\le m]\\
&&&= \rho[\sem{e_i}\rho[\sem{\fix{x_1}{e_1}}\rho/x_1] / x_i \mid 1\le i\le m].
\end{align*}
Where the last inference follows by applying Lemma \ref{lem:substrebind}. Since $\rho''$ is a fixpoint of $\tau''$, we have
\begin{align}
\rho''
&= \rho''[\sem{e_i}\rho''[\sem{\fix{x_1}{e_1}}\rho''/x_1] / x_i \mid 1\le i\le m].\label{eq:rhopp}
\end{align}
In particular, by the semantic definition of fixpoint expressions in \S\ref{sec:semanticsofexpressions},
\begin{align*}
\rho''(x_1) = \sem{e_1}\rho''[\sem{\fix{x_1}{e_1}}\rho''/x_1] = \sem{\fix{x_1}{e_1}}\rho''.
\end{align*}
But then $\rho''[\sem{\fix{x_1}{e_1}}\rho''/x_1] = \rho''$, because it just rebinds $x_1$ to a value it already has. Thus we can rewrite \eqref{eq:rhopp} as
\begin{align*}
\rho''
&= \rho''[\sem{e_i}\rho''/x_i \mid 1\le i\le m]
= \tau'(\rho''),
\end{align*}
so $\rho''$ is also a fixpoint of $\tau'$. Since the fixpoint of $\tau'$ is unique, $\rho'' = \rho'$.
\end{proof}

\begin{theoremrep}
\label{thm:Bekic}
Given a system of equations $x_i = e_i$, $1\le i\le m$, its unique solution can be expressed as an $m$-tuple of expressions with no free occurrences of any $x_i$. 
\end{theoremrep}
\begin{proof}
By Lemma \ref{lem:Bekic} we can rewrite the system as $x_i = e_i[\fix{x_1}{e_1}/x_1],\ 1\le i\le m$. In the latter system, there are no free occurrences of $x_1$ except the defining occurrence $x_1 = e_1[\fix{x_1}{e_1}/x_1]$, which by the fixpoint axiom of Table \ref{tab:equations} is equivalent to the definition $x_1 = \fix{x_1}{e_1}$. But this definition is not needed in the residual system $x_i = e_i[\fix{x_1}{e_1}/x_1],\ 2\le i\le m$, as there are no free occurrences of $x_1$ in that system.

We now repeat with the system $x_i = e_i[\fix{x_1}{e_1}/x_1],\ 2\le i\le m$, substituting $\fix{x_2}{e_2[\fix{x_1}{e_1}/x_1]}$ for all free occurrences of $x_2$ throughout, and so on inductively. The final result is a fixpoint expression for $x_m$ with no free occurrence of $\seq x1m$.

By eliminating the variables in different orders, we can derive an expression for each variable in the same way.
\end{proof}

We have shown
\begin{corollary}
\label{cor:eqsolvable}
For every automaton $(S,\ell,\del)$, there is a closed expression $e_s$ for each state $s$ such that $\sem{e_s} = \sem{s}$.
\end{corollary}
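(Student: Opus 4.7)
The plan is to chain the two preceding results, observing that the corollary is essentially just their composition. First, I would invoke the observation at the start of \S\ref{sec:automatontoexpression} that every automaton $(S,\ell,\del)$ can be viewed verbatim as a system of equations by treating each state $s\in S$ as a variable and reading off one equation $s = e_s^{\mathrm{eq}}$ per state, according to the table relating $\ell(s)$, $\del(s)$ to a right-hand side built from $\skp$, $\fail$, $a\cmp t$, $\oplus_{i} r_i t_i$, and $\bigamp_i t_i$. The key point is that the coinductive semantic clauses in \S\ref{sec:semanticsofautomata} for automata and in \S\ref{sec:systemequations} for systems of equations are defined by the same recursion, so by uniqueness of the contractive fixpoint (Lemma \ref{lem:semsystemwelldefined} together with Theorem \ref{thm:autsemwelldefined}), the two semantic maps agree on every state/variable.

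Next, I would check that this system meets the restrictions of \S\ref{sec:systemequations}: every variable appears on the left of exactly one equation; sequential composition only ever appears in the atomic form $a\cmp t$; and the productivity assumption on the automaton transfers directly to the no-unguarded-cycle condition required of the system. With these in hand, Theorem \ref{thm:Bekic} applies and produces, for each $s\in S$, a closed expression $e_s$ (obtained by iteratively eliminating variables via the Bekić-style substitution of Lemma \ref{lem:Bekic}) whose interpretation in any environment equals the value assigned to $s$ by the unique solution of the system.

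Putting the pieces together, for each $s$ we have $\sem{e_s} = \sem{s}$, where the right-hand side is the automaton semantics and the left-hand side is the expression semantics: the first equality comes from Theorem \ref{thm:Bekic} identifying $\sem{e_s}$ with the $s$-component of the fixed point of the system, and the second identification of this fixed point with the automaton semantics is the observation above. I expect none of the steps to pose a genuine obstacle, since the heavy lifting was done in Theorem \ref{thm:Bekic}; the only thing to be slightly careful about is the bookkeeping that the closed expression produced by the iterative elimination has no remaining free variables, which follows from the theorem's conclusion that after processing all $m$ variables no free $x_i$ remains.
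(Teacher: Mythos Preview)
Your proposal is correct and follows essentially the same approach as the paper, which does not even write out a proof: the corollary is stated immediately after Theorem~\ref{thm:Bekic} with the phrase ``We have shown,'' indicating it is the direct composition of the automaton-to-system-of-equations correspondence at the start of \S\ref{sec:automatontoexpression} with Theorem~\ref{thm:Bekic}. Your write-up is more explicit about verifying the side conditions (the three restrictions on systems of equations) and about why the two semantic maps coincide, but the route is the same.
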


Taken together, Theorem \ref{thm:expressiontoautomaton} and Corollary \ref{cor:eqsolvable} represent a full Kleene theorem for automata and expressions with probability and angelic nondeterminism.

\section{Coalgebraic Semantics}
\label{sec:coalgebras}

Every automaton can be rearranged to look like Fig.~\ref{fig:coalg}; that is, probabilistic states, followed by choice states, followed by action or terminal states, with the action states leading again to probabilistic states. One can use the finite distributive law $MD \to DM$ (corresponding to the axiom $(e_1\oplus_r e_2)\amp e_3 = (e_1\amp e_3)\oplus_r(e_2\amp e_3)$ of Table \ref{tab:equations}) to move probabilistic states before choice states, then the finite monad laws $MM\to M$ and $DD\to D$ to consolidate choice and probabilistic states, respectively. Dummy probabilistic and choice states with one successor can be added as necessary to maintain this structure.

In this form, the automaton becomes a coalgebra $(S,\del_S)$ for the functor $D(\naturals\times(\Mfin(-))^\Sigma)$ with structure map
$\del_S: S\to D(\naturals\times(\Mfin S)^\Sigma)$.
Here $\Mfin$ is the finite multiset functor; $\Mfin X$ is the set of multisets over $X$ with finite support and finite multiplicities. For a finite set $S$, $\Mfin S = \Ns$.

An element of $D(\naturals\times(\Ns)^\Sigma)$ represents a joint distribution on $\naturals\times(\Ns)^\Sigma$, where the first component represents the multiplicity with which the empty string is accepted and the remaining components represent multisets of states to which the automaton transitions after reading an input symbol. For example, in Fig.~\ref{fig:coalg}, if $s$ is the root of the diagram, then the marginal distribution of $\del_S(s)$ on the first component would give $2$, $1$, or $0$ with probabilities $p+r$, $q$, and $1-(p+q+r)$, respectively; the marginal distribution on the component corresponding to input symbol $a$ would give $\mset s$, $\mset{u,v}$, or $\mset{s,s,t}$ with probabilities $p+r$, $q$, and $1-(p+q+r)$, respectively; and the marginal distribution on the component corresponding to input symbol $b$ would give $\mset t$, $\mset{t,t}$, or $\mset{}$ with probabilities $1-(q+r)$, $r$, or $q$, respectively.

The space of behaviors of automata with actions $\Sigma$ is $\DNS$. This space forms an algebra $(\DNS,\eval)$ for the same functor with structure map
\begin{align*}
\eval : D(\naturals\times(\Mfin\DNS)^\Sigma) \to \DNS.
\end{align*}
The evaluation map $\eval$ is a composition of several steps, shown in the following diagram:
\begin{align*}
\begin{tikzpicture}[->, >=stealth', node distance=50mm, auto]
\small
\node (A) {$D(\naturals\times(\Mfin\DNS)^\Sigma)$};
\node (B) [right of=A] {$D(\naturals\times(D\Mfin(\NS))^\Sigma)$};
\node (C) [right of=B, node distance=45mm] {$D(\naturals\times\DNS^\Sigma)$};
\node (D) [below of=A, node distance=10mm, xshift=-8mm] {};
\node (E) [right of=D, node distance=29mm] {$DD(\naturals\times(\NS)^\Sigma)$};
\node (F) [right of=E, node distance=40mm] {$D(\naturals\times(\NS)^\Sigma)$};
\node (G) [right of=F, node distance=32mm] {$\DNS$};
\path (A) edge node {$\eval_1$} (B);
\path (B) edge node {$\eval_2$} (C);
\path (D) edge node {$\eval_3$} (E);
\path (E) edge node {$\eval_4$} (F);
\path (F) edge node {$\eval_5$} (G);
\end{tikzpicture}
\end{align*}
The first step $\eval_1$ applies the distributive law $\Mfin D\to D\Mfin$. The next step $\eval_2$ collapses a finite multiset of multisets in $\NS$ to a multiset in $\NS$ using the natural transformation $\Mfin M\to M$. Operationally, these two steps together can be viewed as applying the generalized version \eqref{eq:genamp} of $\amp$ to a finite multiset of elements of $\DNS$ to yield a single element of $\DNS$. The next step $\eval_3$ uses cartesian strength to move the inner $D$ to the outside. The next step $\eval_4$ uses the monad multiplication $DD\to D$. Operationally, these two steps together allow a two-step sampling process to be collapsed to a single sample. Finally, $\eval_5$ is the pushforward $Df$ of the bijection
\begin{align*}
& f:\naturals\times(\NS)^\Sigma\to\NS & f(n,\beta_a \mid a\in\Sigma)(x) &= \begin{cases}
n, & \text{if $x=\eps$}\\
\beta_a(y), & \text{if $x=ay$}
\end{cases}
\end{align*}
with inverse
\begin{align}
& f^{-1}:\NS\to\naturals\times(\NS)^\Sigma & f^{-1}(\alpha) &= (\alpha(\eps),\lam{x\in\Sigma^*}{\alpha(ax)} \mid a\in\Sigma).\label{eq:evalisom}
\end{align}

The coalgebra $(S,\del_S)$ and the algebra $(\DNS,\eval)$ of its behaviors work in concert according to the diagram
\begin{align}
\begin{array}c
\begin{tikzpicture}[->, >=stealth', auto]
\small
\node (NW) {$S$};
\node (NE) [right of=NW, node distance=75mm] {$\DNS$};
\node (SW) [below of=NW, node distance=12mm] {$D(\naturals\times(\Mfin S)^\Sigma)$};
\node (SE) [below of=NE, node distance=12mm] {$D(\naturals\times(\Mfin\DNS)^\Sigma)$};
\path (NW) edge node {$\sem{-}$} (NE);
\path (NW) edge node[swap] {$\del_S$} (SW);
\path (SW) edge node[swap] {$D(\id_\naturals\times(\Mfin\sem{-})^\Sigma)$} (SE);
\path (SE) edge node[swap] {$\eval$} (NE);
\end{tikzpicture}
\end{array}
\label{eq:final}
\end{align}
where $\sem-$ is the semantic map. This is a coalgebra/algebra diagram for the functor $D(\naturals\times(\Mfin(-))^\Sigma)$ and allows $\sem-$ to be defined uniquely by corecursion, as we now argue. We will show that the map
\begin{align*}
& \tau:(S\to\DNS)\to(S\to\DNS) & \tau(L) &= \eval\circ D(\id_\naturals\times(\Mfin L)^\Sigma)\circ\del_S
\end{align*}
on labelings $L:S\to\DNS$, in diagram form
\begin{align*}
\begin{tikzpicture}[->, >=stealth', auto]
\small
\node (NW) {$S$};
\node (NE) [right of=NW, node distance=75mm] {$\DNS$};
\node (SW) [below of=NW, node distance=12mm] {$D(\naturals\times(\Mfin S)^\Sigma)$};
\node (SE) [below of=NE, node distance=12mm] {$D(\naturals\times(\Mfin\DNS)^\Sigma)$};
\path (NW) edge node {$\tau(L)$} (NE);
\path (NW) edge node[swap] {$\del_S$} (SW);
\path (SW) edge node[swap] {$D(\id_\naturals\times(\Mfin L)^\Sigma)$} (SE);
\path (SE) edge node[swap] {$\eval$} (NE);
\end{tikzpicture}
\end{align*}
is contractive, thus has a unique fixpoint.

\begin{lemmarep}
\label{lem:nfsemantics}
$\tau$ is contractive with constant of contraction $1/2$.
\end{lemmarep}
\begin{proof}
Let $L:S\to\DNS$ be a labeling. Extend $L$ to domain $\Ns$ by defining
\begin{align*}
& \hat L:\Ns \to \DNS & \hat L(m) = {\amp}(\Mfin L(m)) = {\amp}\mset{L(t)\mid t\in m} = (\bigotimes_{t\in m}L(t)) \circ \Sigma^{-1}. 
\end{align*}

Let $s\in S$ and suppose
\begin{align*}
\del_S(s) = \sum_i r_i(n_i,m_{ia}) \mid a\in\Sigma).
\end{align*}
Applying $D(\id_\naturals\times(\Mfin L)^\Sigma)$ to $\del_S(s)$ yields 
\begin{align*}
\sum_i r_i(n_i,\Mfin L(m_{ia})) \mid a\in\Sigma).
\end{align*}
Applying $\eval$ to this yields
\begin{align}
\tau(L)(s)
&= \eval(\sum_i r_i(n_i,\Mfin L(m_{ia}) \mid a\in\Sigma))\nonumber\\
&= \eval_5(\eval_4(\eval_3(\eval_2(\eval_1(\sum_i r_i(n_i,\Mfin L(m_{ia}) \mid a\in\Sigma))))))\nonumber\\
&= \eval_5(\eval_4(\eval_3(\sum_i r_i(n_i,{\amp}\Mfin L(m_{ia}) \mid a\in\Sigma))))\nonumber\\
&= \eval_5(\eval_4(\eval_3(\sum_i r_i(n_i,\hat L(m_{ia}) \mid a\in\Sigma))))\nonumber\\
&= \eval_5(\sum_i r_i(\dirac{n_i} \otimes \bigotimes_{a\in\Sigma}(\hat L(m_{ia}))))\nonumber\\
&= \sum_i r_i ((\dirac{n_i}\otimes\bigotimes_{a\in\Sigma}\hat L(m_{ia}))\circ f^{-1}).\label{eq:phiapplied}
\end{align}
From the definition of the bijection \eqref{eq:evalisom}, we have
\begin{align}
f^{-1}([\alpha]_n) &= \set{(\beta(\eps),\lam x{\beta(ax)} \mid a\in\Sigma)}{\beta\equiv_n\alpha}\nonumber\\
&= \set{(\beta(\eps),\lam x{\beta(ax)} \mid a\in\Sigma)}{\beta(\eps)=\alpha(\eps)\wedge\forall a\in\Sigma\ \lam y{\beta(ay)}\equiv_{n-1}\lam y{\alpha(ay)}}\label{eq:fminusoneA}\\
&= \set{(\alpha(\eps),\lam x{\beta(ax)} \mid a\in\Sigma)}{\lam y{\beta(ay)}\in[\lam y{\alpha(ay)}]_{n-1},\ a\in\Sigma}\nonumber\\
&= \set{(\alpha(\eps),\gamma_a \mid a\in\Sigma)}{\gamma_a\in[\lam y{\alpha(ay)}]_{n-1},\ a\in\Sigma}\label{eq:fminusoneB}\\
&= \{\alpha(\eps)\}\times\prod_{a\in\Sigma}[\lam y{\alpha(ay)}]_{n-1}.\nonumber
\end{align}
The inference \eqref{eq:fminusoneA} follows from the argument
\begin{align*}
\beta\equiv_{n}\alpha\ &\Iff\ \forall x\in\Sigma^{\le n}\ \beta(x)=\alpha(x)\\
&\Iff\ \beta(\eps)=\alpha(\eps)\wedge\forall a\in\Sigma\ \forall y\in\Sigma^{\le n-1}\ \beta(ay)=\alpha(ay)\\
&\Iff\ \beta(\eps)=\alpha(\eps)\wedge\forall a\in\Sigma\ \lam y{\beta(ay)}\equiv_{n-1}\lam y{\alpha(ay)}.
\end{align*}
The inference \eqref{eq:fminusoneB} holds because $\beta\mapsto\lam y{\beta(ay)}$ is surjective; it is a split epimorphism with right inverse $a\cdot -$, the operation described in \S\ref{sec:injectivemonoidactions}. Applying \eqref{eq:phiapplied} to $[\alpha]_{n}$ yields
\begin{align*}
\tau(L)(s)([\alpha]_{n})
&= \sum_i r_i ((\dirac{n_i}\otimes\bigotimes_{a\in\Sigma}\hat L(m_{ia}))\circ f^{-1})([\alpha]_{n})\\
&= \sum_i r_i (\dirac{n_i}\otimes\bigotimes_{a\in\Sigma}\hat L(m_{ia}))(f^{-1}([\alpha]_{n}))\\
&= \sum_i r_i (\dirac{n_i}\otimes\bigotimes_{a\in\Sigma}\hat L(m_{ia}))(\{\alpha(\eps)\}\times\prod_{a\in\Sigma}[\lam y{\alpha(ay)}]_{n-1})\\
&= \sum_i r_i[\alpha(\eps)=n_i]\cdot\prod_{a\in\Sigma}\hat L(m_{ia})([\lam y{\alpha(ay)}]_{n-1})
\end{align*}
(the left-hand $[-]$ is the \emph{Iverson bracket}: $[\phi] = 1$ if $\phi$ is true, $0$ if false).
Now if $L_1\equiv_{n-1}L_2$, then by Lemma \ref{lem:integraldomain}, $\hat L_1\equiv_{n-1} \hat L_2$. Thus for all $i$ and $a$,
\begin{align*}
\hat L_1(m_{ia})([\lam y{\alpha(ay)}]_{n-1}) = \hat L_2(m_{ia})([\lam y{\alpha(ay)}]_{n-1}),
\end{align*}
which implies that $\tau(L_1)(s)([\alpha]_{n})=\tau(L_2)(s)([\alpha]_{n})$. As $s$ and $\alpha$ were arbitrary, $\tau(L_1)\equiv_n\tau(L_2)$.

We have shown that $L_1\equiv_{n-1}L_2$ implies $\tau(L_1)\equiv_n\tau(L_2)$. By Lemma \ref{lem:metricproperties}, $d(L_1,L_2)\le 2^{-n}$ implies $d(\tau(L_1),\tau(L_2))\le 2^{-(n+1)}$, thus $d(\tau(L_1),\tau(L_2))\le\frac 12d(L_1,L_2)$.
\end{proof}

\begin{theorem}
\label{thm:nfsemantics}
There is a unique semantic map $\sem-:S\to\DNS$ satisfying the diagram \eqref{eq:final}.
\end{theorem}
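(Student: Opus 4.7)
The plan is to invoke the Banach fixpoint theorem on the contractive map $\tau$ from Lemma \ref{lem:nfsemantics}. First I would observe that the commutativity of the diagram \eqref{eq:final} is precisely the equation $\sem- = \tau(\sem-)$; that is, a semantic map making the diagram commute is the same thing as a fixpoint of $\tau$ in the space $S\to\DNS$ of labelings.

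Next I would assemble the metric ingredients. By Lemma \ref{lem:metricproperties}(i), the space $S\to\DNS$ is a complete ultrametric space under the sup metric $d(L_1,L_2) = \sup_{s\in S} d(L_1(s),L_2(s))$. By Lemma \ref{lem:nfsemantics}, the map $\tau:(S\to\DNS)\to(S\to\DNS)$ is contractive with constant $1/2$. Since $S$ is finite, no productivity assumption is needed to replace ``contractive'' by ``eventually contractive'' here, unlike the cruder formulation of Lemma \ref{lem:autcontractive}; the coalgebraic restructuring of the automaton into the form $\del_S:S\to D(\naturals\times(\Mfin S)^\Sigma)$ already absorbed each nondeterministic/probabilistic sequence into a single transition step, so one step of $\tau$ already incorporates a full ``read one input letter'' guard that shrinks distances by $1/2$.

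Then the Banach fixpoint theorem (valid in any complete metric space, including complete ultrametric spaces) gives a unique fixpoint $\sem-\in S\to\DNS$ of $\tau$. By the observation in the first paragraph, this unique fixpoint is exactly the unique map making diagram \eqref{eq:final} commute, which is what we had to prove.

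The only nontrivial step was already dispatched in Lemma \ref{lem:nfsemantics}, so the main obstacle here is really bookkeeping: verifying that the diagram chase equating ``makes \eqref{eq:final} commute'' with ``is a fixpoint of $\tau$'' is correct, i.e., that the composite $\eval\circ D(\id_\naturals\times(\Mfin\sem-)^\Sigma)\circ\del_S$ on the lower edge of \eqref{eq:final} is literally the definition of $\tau(\sem-)$ used in Lemma \ref{lem:nfsemantics}. This is immediate from unfolding the definition of $\tau$, so the proof reduces to a one-line appeal to Banach's theorem.
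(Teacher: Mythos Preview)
Your proposal is correct and takes essentially the same approach as the paper: the paper's proof is the one-liner ``This follows from Lemma~\ref{lem:nfsemantics} by the Banach fixpoint theorem,'' and you have simply unpacked that sentence, making explicit the identification of diagram commutativity with $\sem- = \tau(\sem-)$ and citing Lemma~\ref{lem:metricproperties}(i) for completeness of the labeling space. Your additional remarks about why $\tau$ is contractive in one step (rather than eventually contractive as in Lemma~\ref{lem:autcontractive}) are accurate commentary but not needed for the argument itself.
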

\begin{proof}
This follows from Lemma \ref{lem:nfsemantics} by the Banach fixpoint theorem.
\end{proof}

The coalgebra/algebra diagram \eqref{eq:final} plays the same role as a coalgebra diagram defining the unique coalgebra morphism to the final coalgebra for deterministic automata. Those diagrams can also be regarded as coalgebra/algebra diagrams, as the structure map of a final coalgebra is always invertible by Lambek's lemma \cite{Lambek68}. They give a unique semantic map in the same way, and this formulation explains why.

\subsection{Brzozowski Derivatives}
\label{sec:derivatives}

We would like a syntactic Brzozowski derivative $\Brz:\Exp\to D(\naturals\times(\Mfin\Exp)^\Sigma)$ that admits a coalgebra structure on expressions. It should satisfy the appropriate version of the diagram \eqref{eq:final}, to wit
\begin{align}
\begin{array}c
\begin{tikzpicture}[->, >=stealth', auto]
\small
\node (NW) {$\Exp$};
\node (NE) [right of=NW, node distance=75mm] {$\DNS$};
\node (SW) [below of=NW, node distance=12mm] {$D(\naturals\times(\Mfin\Exp)^\Sigma)$};
\node (SE) [below of=NE, node distance=12mm] {$D(\naturals\times(\Mfin\DNS)^\Sigma)$};
\path (NW) edge node {$\sem{-}$} (NE);
\path (NW) edge node[swap] {$\Brz$} (SW);
\path (SW) edge node[swap] {$D(\id_\naturals\times(\Mfin\sem{-})^\Sigma)$} (SE);
\path (SE) edge node[swap] {$\eval$} (NE);
\end{tikzpicture}
\end{array}
\label{eq:final2}
\end{align}

Intuitively, this means that
sampling from $\Brz(e)$, then applying the projection $\pi_\eps$ on the first component to obtain a number $n\in\naturals$ should give a random result distributed as if we had sampled $\sem e$, then asked for the multiplicity of $\eps$ in the resulting multiset; and sampling from $\Brz(e)$, then applying a projection $\pi_a$ for $a\in\Sigma$ to obtain a finite multiset of expressions $m_a$, then sampling $\sem{e'}$ independently for each expression $e'$ in $m_a$ and taking the multiset union of the results should give a random result distributed as if we had sampled $\sem e$ to get a multiset $\beta\in\NS$, then taken the usual Brzozowski derivative $\lam x{\beta(ax)}$ for weighted automata. Operationally,
\begin{align*}
\sample{\sem e}(\eps)\ &\sim\ \pi_\eps(\sample{(\Brz(e))})\\
\letin\beta{\sample{\sem e}}{\lam x{\beta(ax)}}\ &\sim\ \Sigma(\Mfin(\sample\circ\sem-)(\pi_a(\sample{\Brz(e)}))),\ a\in\Sigma,
\end{align*}
where the relation $\sim$ denotes that the left- and right-hand sides are identically distributed random variables. The following is the ``fundamental theorem'' for our system as described in \cite{Silva10}.
\begin{theoremrep}
There exists $\Brz:\Exp\to D(\naturals\times(\Mfin\Exp)^\Sigma)$ such that \eqref{eq:final2} commutes.
\end{theoremrep}
\begin{proof}
Given an expression $e$, use the following equations of \S\ref{sec:axioms} as reduction rules
\begin{align*}
& (e_1\amp e_2)\cmp e_3 \longrightarrow (e_1\cmp e_3)\amp(e_2\cmp e_3)\\
& (e_1\oplus_r e_2)\cmp e_3 \longrightarrow (e_1\cmp e_3)\oplus_r(e_2\cmp e_3)\\
& (e_1\cmp e_2)\cmp e_3 \longrightarrow e_1\cmp(e_2\cmp e_3)\\
& (e_1\oplus_r e_2)\amp e_3 \longrightarrow (e_1\amp e_3)\oplus_r(e_2\amp e_3)\\
& a \longrightarrow a\cmp\skp\\
& \skp\cmp e \longrightarrow e\\
& \fail\cmp e \longrightarrow \fail\\
& \fix xe \longrightarrow e\subst{\fix xe}x
\end{align*}
as necessary to transform $e$ to an equivalent expression $f$ satisfying the following grammar:
\begin{align*}
f &::= f_1 \oplus_r f_2 \mid g &
g &::= g_1 \amp g_2 \mid h &
h &::= a \cmp f \mid \skp \mid \fail
\end{align*}
The productivity assumption ensures that this is possible. For each maximal subexpression of the form $g$, consolidate all maximal subexpressions of the form $a\cmp e$ using associativity and commutativity of $\amp$ and the rule
\begin{align*}
& (a\cmp e_1)\amp(a\cmp e_2) \longrightarrow a\cmp(e_1\amp e_2)
\end{align*}
and use the rule
\begin{align*}
& e \longrightarrow e\amp (a\cmp\fail)
\end{align*}
as necessary so that each maximal subexpression of the form $g$ contains exactly one maximal subexpression of the form $a\cmp e$.

The resulting formula represents an entity of type $D(\naturals\times(M\,\Exp)^\Sigma)$ provably equivalent to the original formula $e$.
\end{proof}

\section{Conclusion}
\label{sec:future}

We have introduced a version of expressions and automata with probability and angelic nondeterminism modeled with multisets. Our main results are a full Kleene theorem asserting the equivalence of the two formalisms and a development of the corresponding coalgebraic theory, along with axioms and reasoning principles in both denotational and operational styles. To our knowledge, the Kleene theorem is the first result of its type for systems that combine probability and nondeterminism and speaks to the appropriateness of our approach. These results provide a foundation for KAT-style equational reasoning in systems that combine probability and nondeterminism.

Several questions remain that we have left for future work. While we have provided some rules for equational reasoning in \S\ref{sec:axioms}, we do not know whether the system is complete.

We have recently established that program equivalence is decidable (and that will be the subject of a forthcoming report), but the complexity is open. The problem is known to be PSPACE-hard, as the problem for Kleene algebra is PSPACE-complete \cite{KS96a}, but the greater expressiveness of probability and nondeterminism together indicates that the complexity of our system is likely higher.

An obvious next step is to add tests. In similar KAT-like systems, this typically entails no loss of efficiency for the decision problem and would allow the system to model the behavior of a simple probabilistic imperative programming language. Adding mutable variables as in \cite{GKM14a} would further increase expressiveness while likely preserving decidability.

Another question is that of automaton minimization. For deterministic finite automata, the Myhill-Nerode theorem provides a characterization of the minimal automaton for any regular language, as well as a procedure to generate one. We do not know of a similar bound for the automata of this paper. Information-theoretic lower bounds on the size of automata are less obvious than for deterministic automata, as probabilistic transitions allow complex behavior to be captured using very few states \cite{CSZ21}.

\begin{acks}
The support of the National Science Foundation (CCF-2008083) is gratefully acknowledged.
\end{acks}


\begin{thebibliography}{50}


\ifx \showCODEN    \undefined \def \showCODEN     #1{\unskip}     \fi
\ifx \showISBNx    \undefined \def \showISBNx     #1{\unskip}     \fi
\ifx \showISBNxiii \undefined \def \showISBNxiii  #1{\unskip}     \fi
\ifx \showISSN     \undefined \def \showISSN      #1{\unskip}     \fi
\ifx \showLCCN     \undefined \def \showLCCN      #1{\unskip}     \fi
\ifx \shownote     \undefined \def \shownote      #1{#1}          \fi
\ifx \showarticletitle \undefined \def \showarticletitle #1{#1}   \fi
\ifx \showURL      \undefined \def \showURL       {\relax}        \fi
\providecommand\bibfield[2]{#2}
\providecommand\bibinfo[2]{#2}
\providecommand\natexlab[1]{#1}
\providecommand\showeprint[2][]{arXiv:#2}

\bibitem[Abramsky and Jung(1994)]%
        {AbramskyJung94}
\bibfield{author}{\bibinfo{person}{Samson Abramsky} {and}
  \bibinfo{person}{Achim Jung}.} \bibinfo{year}{1994}\natexlab{}.
\newblock \showarticletitle{Domain theory}.
\newblock In \bibinfo{booktitle}{\emph{Handbook of Logic in Computer Science}},
  \bibfield{editor}{\bibinfo{person}{S.~Abramsky}, \bibinfo{person}{D.M.
  Gabbay}, {and} \bibinfo{person}{T.S.E. Maibaum}} (Eds.).
  Vol.~\bibinfo{volume}{III}. \bibinfo{publisher}{Oxford University Press}.
\newblock


\bibitem[Affeldt et~al\mbox{.}(2021)]%
        {Affeldt21}
\bibfield{author}{\bibinfo{person}{Reynald Affeldt}, \bibinfo{person}{Jacques
  Garrigue}, \bibinfo{person}{David Nowak}, {and} \bibinfo{person}{Takafumi
  Saikawa}.} \bibinfo{year}{2021}\natexlab{}.
\newblock \showarticletitle{A trustful monad for axiomatic reasoning with
  probability and nondeterminism}.
\newblock \bibinfo{journal}{\emph{J. Funct. Program.}}  \bibinfo{volume}{31}
  (\bibinfo{year}{2021}), \bibinfo{pages}{e17}.
\newblock
\href{https://doi.org/10.1017/S0956796821000137}{doi:\nolinkurl{10.1017/S0956796821000137}}


\bibitem[Barendregt(1984)]%
        {Barendregt84}
\bibfield{author}{\bibinfo{person}{Henk Barendregt}.}
  \bibinfo{year}{1984}\natexlab{}.
\newblock \bibinfo{booktitle}{\emph{{The Lambda Calculus: Its Syntax and
  Semantics}}}. \bibinfo{series}{Studies in Logic and the Foundations of
  Mathematics}, Vol.~\bibinfo{volume}{103}.
\newblock \bibinfo{publisher}{North-Holland}.
\newblock


\bibitem[Beki{\'{c}}(1984)]%
        {Bekić1984}
\bibfield{author}{\bibinfo{person}{Hans Beki{\'{c}}}.}
  \bibinfo{year}{1984}\natexlab{}.
\newblock \bibinfo{booktitle}{\emph{Definable operations in general algebras,
  and the theory of automata and flowcharts}}.
\newblock \bibinfo{publisher}{Springer Berlin Heidelberg},
  \bibinfo{address}{Berlin, Heidelberg}, \bibinfo{pages}{30--55}.
\newblock
\showISBNx{978-3-540-38933-0}
\href{https://doi.org/10.1007/BFb0048939}{doi:\nolinkurl{10.1007/BFb0048939}}


\bibitem[Cakir et~al\mbox{.}(2021)]%
        {CSZ21}
\bibfield{author}{\bibinfo{person}{Merve~Nur Cakir}, \bibinfo{person}{Mehwish
  Saleemi}, {and} \bibinfo{person}{Karl{-}Heinz Zimmermann}.}
  \bibinfo{year}{2021}\natexlab{}.
\newblock \showarticletitle{On the Theory of Stochastic Automata}.
\newblock \bibinfo{journal}{\emph{CoRR}}  \bibinfo{volume}{abs/2103.14423}
  (\bibinfo{year}{2021}).
\newblock
\showeprint[arXiv]{2103.14423}
\urldef\tempurl%
\url{https://arxiv.org/abs/2103.14423}
\showURL{%
\tempurl}


\bibitem[Chen and Sanders(2009)]%
        {ChenSanders09}
\bibfield{author}{\bibinfo{person}{Yifeng Chen} {and} \bibinfo{person}{Jeff~W.
  Sanders}.} \bibinfo{year}{2009}\natexlab{}.
\newblock \showarticletitle{Unifying Probability with Nondeterminism}. In
  \bibinfo{booktitle}{\emph{{FM} 2009: Formal Methods, Second World Congress,
  Eindhoven, The Netherlands, November 2-6, 2009. Proceedings}}
  \emph{(\bibinfo{series}{Lecture Notes in Computer Science},
  Vol.~\bibinfo{volume}{5850})}, \bibfield{editor}{\bibinfo{person}{Ana
  Cavalcanti} {and} \bibinfo{person}{Dennis Dams}} (Eds.).
  \bibinfo{publisher}{Springer}, \bibinfo{pages}{467--482}.
\newblock
\href{https://doi.org/10.1007/978-3-642-05089-3\_30}{doi:\nolinkurl{10.1007/978-3-642-05089-3\_30}}


\bibitem[Dahlqvist et~al\mbox{.}(2018)]%
        {DahlqvistParlantSilva18}
\bibfield{author}{\bibinfo{person}{Fredrik Dahlqvist}, \bibinfo{person}{Louis
  Parlant}, {and} \bibinfo{person}{Alexandra Silva}.}
  \bibinfo{year}{2018}\natexlab{}.
\newblock \showarticletitle{Layer by Layer - Combining Monads}. In
  \bibinfo{booktitle}{\emph{Theoretical Aspects of Computing - {ICTAC} 2018 -
  15th International Colloquium, Stellenbosch, South Africa, October 16-19,
  2018, Proceedings}} \emph{(\bibinfo{series}{Lecture Notes in Computer
  Science}, Vol.~\bibinfo{volume}{11187})},
  \bibfield{editor}{\bibinfo{person}{Bernd Fischer} {and}
  \bibinfo{person}{Tarmo Uustalu}} (Eds.). \bibinfo{publisher}{Springer},
  \bibinfo{pages}{153--172}.
\newblock
\href{https://doi.org/10.1007/978-3-030-02508-3\_9}{doi:\nolinkurl{10.1007/978-3-030-02508-3\_9}}


\bibitem[Dash(2023)]%
        {Dash23}
\bibfield{author}{\bibinfo{person}{Swaraj Dash}.}
  \bibinfo{year}{2023}\natexlab{}.
\newblock \emph{\bibinfo{title}{A Monadic Theory of Point Processes}}.
\newblock \bibinfo{thesistype}{Ph.\,D. Dissertation}. \bibinfo{school}{Oxford
  University}.
\newblock


\bibitem[Dash and Staton(2021a)]%
        {DashStaton21a}
\bibfield{author}{\bibinfo{person}{Swaraj Dash} {and} \bibinfo{person}{Sam
  Staton}.} \bibinfo{year}{2021}\natexlab{a}.
\newblock \showarticletitle{A Monad for Probabilistic Point Processes}. In
  \bibinfo{booktitle}{\emph{Applied Category Theory 2020 (ACT2020)}}
  \emph{(\bibinfo{series}{EPTCS}, Vol.~\bibinfo{volume}{333})},
  \bibfield{editor}{\bibinfo{person}{David~I. Spivak} {and}
  \bibinfo{person}{Jamie Vicary}} (Eds.). \bibinfo{publisher}{Open Publishing
  Association}, \bibinfo{pages}{19--32}.
\newblock
\href{https://doi.org/10.4204/EPTCS.333.2}{doi:\nolinkurl{10.4204/EPTCS.333.2}}


\bibitem[Dash and Staton(2021b)]%
        {DashStaton21b}
\bibfield{author}{\bibinfo{person}{Swaraj Dash} {and} \bibinfo{person}{Sam
  Staton}.} \bibinfo{year}{2021}\natexlab{b}.
\newblock \showarticletitle{Monads for Measurable Queries in Probabilistic
  Databases}. In \bibinfo{booktitle}{\emph{MFPS}}
  \emph{(\bibinfo{series}{EPTCS}, Vol.~\bibinfo{volume}{351})},
  \bibfield{editor}{\bibinfo{person}{Ana Sokolova}} (Ed.).
  \bibinfo{publisher}{Open Publishing Association}, \bibinfo{pages}{34--50}.
\newblock
\href{https://doi.org/10.4204/EPTCS.351.3}{doi:\nolinkurl{10.4204/EPTCS.351.3}}


\bibitem[den Hartog and de~Vink(1998)]%
        {HartogdeVink99}
\bibfield{author}{\bibinfo{person}{Jerry den Hartog} {and}
  \bibinfo{person}{Erik~P. de Vink}.} \bibinfo{year}{1998}\natexlab{}.
\newblock \showarticletitle{Mixing Up Nondeterminism and Probability: a
  preliminary report}. In \bibinfo{booktitle}{\emph{First International
  Workshop on Probabilistic Methods in Verification, {PROBMIV} 1998,
  Indianapolis, Indiana, USA, June 19-20, 1998}}
  \emph{(\bibinfo{series}{Electronic Notes in Theoretical Computer Science},
  Vol.~\bibinfo{volume}{22})}, \bibfield{editor}{\bibinfo{person}{Christel
  Baier}, \bibinfo{person}{Michael Huth}, \bibinfo{person}{Marta~Z.
  Kwiatkowska}, {and} \bibinfo{person}{Mark Ryan}} (Eds.).
  \bibinfo{publisher}{Elsevier}, \bibinfo{pages}{88--110}.
\newblock
\href{https://doi.org/10.1016/S1571-0661(05)82521-6}{doi:\nolinkurl{10.1016/S1571-0661(05)82521-6}}


\bibitem[D’Argenio et~al\mbox{.}(1998)]%
        {DArgenio98}
\bibfield{author}{\bibinfo{person}{P. D’Argenio}, \bibinfo{person}{H.
  Hermanns}, {and} \bibinfo{person}{J.-P. Katoen}.}
  \bibinfo{year}{1998}\natexlab{}.
\newblock \showarticletitle{On generative parallel composition}. In
  \bibinfo{booktitle}{\emph{Proc. PROBMIV’98}}
  \emph{(\bibinfo{series}{Electronic Notes in Theoretical Computer Science},
  Vol.~\bibinfo{volume}{22})}. \bibinfo{pages}{105--122}.
\newblock


\bibitem[Foster et~al\mbox{.}(2016)]%
        {FKMRS16}
\bibfield{author}{\bibinfo{person}{Nate Foster}, \bibinfo{person}{Dexter
  Kozen}, \bibinfo{person}{Konstantinos Mamouras}, \bibinfo{person}{Mark
  Reitblatt}, {and} \bibinfo{person}{Alexandra Silva}.}
  \bibinfo{year}{2016}\natexlab{}.
\newblock \showarticletitle{Probabilistic NetKAT}. In
  \bibinfo{booktitle}{\emph{25th European Symposium on Programming, {ESOP}
  2016, Eindhoven, The Netherlands, April 2-8, 2016}}
  \emph{(\bibinfo{series}{Lecture Notes in Computer Science},
  Vol.~\bibinfo{volume}{9632})}, \bibfield{editor}{\bibinfo{person}{Peter
  Thiemann}} (Ed.). \bibinfo{publisher}{Springer}, \bibinfo{pages}{282--309}.
\newblock
\href{https://doi.org/10.1007/978-3-662-49498-1\_12}{doi:\nolinkurl{10.1007/978-3-662-49498-1\_12}}


\bibitem[Furusawa et~al\mbox{.}(2008)]%
        {FTN08}
\bibfield{author}{\bibinfo{person}{Hitoshi Furusawa}, \bibinfo{person}{Norihiro
  Tsumagari}, {and} \bibinfo{person}{Koki Nishizawa}.}
  \bibinfo{year}{2008}\natexlab{}.
\newblock \showarticletitle{A Non-probabilistic Relational Model of
  Probabilistic Kleene Algebras}. In \bibinfo{booktitle}{\emph{Relations and
  Kleene Algebra in Computer Science}},
  \bibfield{editor}{\bibinfo{person}{Rudolf Berghammer},
  \bibinfo{person}{Bernhard M{\"o}ller}, {and} \bibinfo{person}{Georg Struth}}
  (Eds.). \bibinfo{publisher}{Springer Berlin Heidelberg},
  \bibinfo{address}{Berlin, Heidelberg}, \bibinfo{pages}{110--122}.
\newblock
\showISBNx{978-3-540-78913-0}


\bibitem[Giry(1981)]%
        {Giry81}
\bibfield{author}{\bibinfo{person}{M. Giry}.} \bibinfo{year}{1981}\natexlab{}.
\newblock \showarticletitle{A Categorical Approach to Probability Theory}. In
  \bibinfo{booktitle}{\emph{Categorical Aspects of Topology and Analysis}}
  \emph{(\bibinfo{series}{Lecture Notes In Mathematics},
  \bibinfo{number}{915})}, \bibfield{editor}{\bibinfo{person}{B.~Banaschewski}}
  (Ed.). \bibinfo{publisher}{Springer-Verlag}, \bibinfo{pages}{68--85}.
\newblock


\bibitem[Glabbeek et~al\mbox{.}(1990)]%
        {Glabbeek90}
\bibfield{author}{\bibinfo{person}{R.v. Glabbeek}, \bibinfo{person}{S. Smolka},
  \bibinfo{person}{B. Steffen}, {and} \bibinfo{person}{C. Tofts}.}
  \bibinfo{year}{1990}\natexlab{}.
\newblock \showarticletitle{Reactive, generative, and stratified models of
  probabilistic processes}. In \bibinfo{booktitle}{\emph{Proc. LICS, IEEE}}.
  \bibinfo{pages}{130--141}.
\newblock


\bibitem[Goy and Petrisan(2020)]%
        {GoyPetrisan20}
\bibfield{author}{\bibinfo{person}{Alexandre Goy} {and}
  \bibinfo{person}{Daniela Petrisan}.} \bibinfo{year}{2020}\natexlab{}.
\newblock \showarticletitle{Combining probabilistic and non-deterministic
  choice via weak distributive laws}. In \bibinfo{booktitle}{\emph{{LICS} '20:
  35th Annual {ACM/IEEE} Symposium on Logic in Computer Science,
  Saarbr{\"{u}}cken, Germany, July 8-11, 2020}},
  \bibfield{editor}{\bibinfo{person}{Holger Hermanns}, \bibinfo{person}{Lijun
  Zhang}, \bibinfo{person}{Naoki Kobayashi}, {and} \bibinfo{person}{Dale
  Miller}} (Eds.). \bibinfo{publisher}{{ACM}}, \bibinfo{pages}{454--464}.
\newblock
\href{https://doi.org/10.1145/3373718.3394795}{doi:\nolinkurl{10.1145/3373718.3394795}}


\bibitem[Grathwohl et~al\mbox{.}(2014)]%
        {GKM14a}
\bibfield{author}{\bibinfo{person}{Niels Bj{\o}rn~Bugge Grathwohl},
  \bibinfo{person}{Dexter Kozen}, {and} \bibinfo{person}{Konstantinos
  Mamouras}.} \bibinfo{year}{2014}\natexlab{}.
\newblock \showarticletitle{{KAT + B!}}. In \bibinfo{booktitle}{\emph{Proc.
  Joint Meeting of the 23rd EACSL Conf. Computer Science Logic (CSL 2014) and
  29th ACM/IEEE Symp. Logic in Computer Science (LICS 2014)}},
  \bibfield{editor}{\bibinfo{person}{Matthias Baaz}, \bibinfo{person}{Thomas
  Eiter}, {and} \bibinfo{person}{Helmut Veith}} (Eds.). EACSL and ACM/IEEE,
  \bibinfo{address}{Vienna, Austria}.
\newblock


\bibitem[Hansson(1994)]%
        {Hansson94}
\bibfield{author}{\bibinfo{person}{H. Hansson}.}
  \bibinfo{year}{1994}\natexlab{}.
\newblock \showarticletitle{Time and probability in formal design of
  distributed systems}.
\newblock \bibinfo{journal}{\emph{Real-Time Safety Critical Systems}}
  \bibinfo{volume}{1} (\bibinfo{year}{1994}).
\newblock


\bibitem[Hatefi~Ardakani(2016)]%
        {Hatefi17}
\bibfield{author}{\bibinfo{person}{Hassan Hatefi~Ardakani}.}
  \bibinfo{year}{2016}\natexlab{}.
\newblock \emph{\bibinfo{title}{Finite horizon analysis of Markov automata}}.
\newblock \bibinfo{thesistype}{Ph.\,D. Dissertation}.
  \bibinfo{school}{Universit\"at des Saarlandes}.
\newblock
\href{https://doi.org/10.22028/D291-26690}{doi:\nolinkurl{10.22028/D291-26690}}


\bibitem[Hermanns(2002)]%
        {Hermanns02}
\bibfield{author}{\bibinfo{person}{Holger Hermanns}.}
  \bibinfo{year}{2002}\natexlab{}.
\newblock \bibinfo{booktitle}{\emph{Interactive Markov Chains: The Quest for
  Quantified Quality}}. \bibinfo{series}{Lecture Notes in Computer Science},
  Vol.~\bibinfo{volume}{2428}.
\newblock \bibinfo{publisher}{Springer}.
\newblock
\showISBNx{3-540-44261-8}
\href{https://doi.org/10.1007/3-540-45804-2}{doi:\nolinkurl{10.1007/3-540-45804-2}}


\bibitem[Jacobs(2021)]%
        {Jacobs21}
\bibfield{author}{\bibinfo{person}{Bart Jacobs}.}
  \bibinfo{year}{2021}\natexlab{}.
\newblock \showarticletitle{From Multisets over Distributions to Distributions
  over Multisets}. In \bibinfo{booktitle}{\emph{2021 36th Annual ACM/IEEE Symp.
  Logic in Computer Science (LICS)}}. \bibinfo{pages}{1--13}.
\newblock
\href{https://doi.org/10.1109/LICS52264.2021.9470678}{doi:\nolinkurl{10.1109/LICS52264.2021.9470678}}


\bibitem[Keimel and Plotkin(2017)]%
        {KeimelPlotkin17}
\bibfield{author}{\bibinfo{person}{Klaus Keimel} {and}
  \bibinfo{person}{Gordon~D. Plotkin}.} \bibinfo{year}{2017}\natexlab{}.
\newblock \showarticletitle{Mixed powerdomains for probability and
  nondeterminism}.
\newblock \bibinfo{journal}{\emph{Log. Methods Comput. Sci.}}
  \bibinfo{volume}{13}, \bibinfo{number}{1} (\bibinfo{year}{2017}).
\newblock
\href{https://doi.org/10.23638/LMCS-13(1:2)2017}{doi:\nolinkurl{10.23638/LMCS-13(1:2)2017}}


\bibitem[Kozen(1997)]%
        {K97c}
\bibfield{author}{\bibinfo{person}{Dexter Kozen}.}
  \bibinfo{year}{1997}\natexlab{}.
\newblock \showarticletitle{Kleene algebra with tests}.
\newblock \bibinfo{journal}{\emph{ACM Trans. Programming Languages and Systems
  (TOPLAS'97)}} \bibinfo{volume}{19}, \bibinfo{number}{3} (\bibinfo{date}{May}
  \bibinfo{year}{1997}), \bibinfo{pages}{427--443}.
\newblock
\href{https://doi.org/10.1145/256167.256195}{doi:\nolinkurl{10.1145/256167.256195}}


\bibitem[Kozen and Silva(2017)]%
        {KS16a}
\bibfield{author}{\bibinfo{person}{Dexter Kozen} {and}
  \bibinfo{person}{Alexandra Silva}.} \bibinfo{year}{2017}\natexlab{}.
\newblock \showarticletitle{Practical coinduction}.
\newblock \bibinfo{journal}{\emph{Mathematical Structures in Computer Science}}
   \bibinfo{volume}{27} (\bibinfo{year}{2017}), \bibinfo{pages}{1132--1152}.
\newblock
\showISSN{1469-8072}
\href{https://doi.org/10.1017/S0960129515000493}{doi:\nolinkurl{10.1017/S0960129515000493}}


\bibitem[Kozen and Smith(1996)]%
        {KS96a}
\bibfield{author}{\bibinfo{person}{Dexter Kozen} {and}
  \bibinfo{person}{Frederick Smith}.} \bibinfo{year}{1996}\natexlab{}.
\newblock \showarticletitle{Kleene algebra with tests: Completeness and
  decidability}. In \bibinfo{booktitle}{\emph{Proc. 10th Int. Workshop Computer
  Science Logic (CSL'96)}} \emph{(\bibinfo{series}{Lecture Notes in Computer
  Science}, Vol.~\bibinfo{volume}{1258})},
  \bibfield{editor}{\bibinfo{person}{D.~van Dalen} {and}
  \bibinfo{person}{M.~Bezem}} (Eds.). \bibinfo{publisher}{Springer-Verlag},
  \bibinfo{address}{Utrecht, The Netherlands}, \bibinfo{pages}{244--259}.
\newblock


\bibitem[Lambek(1968)]%
        {Lambek68}
\bibfield{author}{\bibinfo{person}{Joachim Lambek}.}
  \bibinfo{year}{1968}\natexlab{}.
\newblock \showarticletitle{A fixpoint theorem for complete categories}.
\newblock \bibinfo{journal}{\emph{Mathematische Zeitschrift}}
  \bibinfo{volume}{103} (\bibinfo{year}{1968}), \bibinfo{pages}{151--161}.
\newblock


\bibitem[Larsen and Skou(1991)]%
        {LarsenSkou91}
\bibfield{author}{\bibinfo{person}{K. Larsen} {and} \bibinfo{person}{A. Skou}.}
  \bibinfo{year}{1991}\natexlab{}.
\newblock \showarticletitle{Bisimulation through probabilistic testing}.
\newblock \bibinfo{journal}{\emph{Information and Computation}}
  \bibinfo{volume}{94} (\bibinfo{year}{1991}), \bibinfo{pages}{1--28}.
\newblock


\bibitem[McIver et~al\mbox{.}(2008)]%
        {MGCM08}
\bibfield{author}{\bibinfo{person}{A.K. McIver}, \bibinfo{person}{C. Gonzalia},
  \bibinfo{person}{E. Cohen}, {and} \bibinfo{person}{C.C. Morgan}.}
  \bibinfo{year}{2008}\natexlab{}.
\newblock \showarticletitle{Using probabilistic Kleene algebra pKA for protocol
  verification}.
\newblock \bibinfo{journal}{\emph{The Journal of Logic and Algebraic
  Programming}} \bibinfo{volume}{76}, \bibinfo{number}{1}
  (\bibinfo{year}{2008}), \bibinfo{pages}{90--111}.
\newblock
\showISSN{1567-8326}
\href{https://doi.org/10.1016/j.jlap.2007.10.005}{doi:\nolinkurl{10.1016/j.jlap.2007.10.005}}
\newblock
\shownote{Relations and Kleene Algebras in Computer Science}.


\bibitem[McIver et~al\mbox{.}(2011)]%
        {MRS11}
\bibfield{author}{\bibinfo{person}{Annabelle McIver},
  \bibinfo{person}{Tahiry~M. Rabehaja}, {and} \bibinfo{person}{Georg Struth}.}
  \bibinfo{year}{2011}\natexlab{}.
\newblock \showarticletitle{On Probabilistic Kleene Algebras, Automata and
  Simulations}. In \bibinfo{booktitle}{\emph{Relational and Algebraic Methods
  in Computer Science}}, \bibfield{editor}{\bibinfo{person}{Harrie de~Swart}}
  (Ed.). \bibinfo{publisher}{Springer Berlin Heidelberg},
  \bibinfo{address}{Berlin, Heidelberg}, \bibinfo{pages}{264--279}.
\newblock
\showISBNx{978-3-642-21070-9}


\bibitem[McIver et~al\mbox{.}(2006)]%
        {MciverCohenMorgan06}
\bibfield{author}{\bibinfo{person}{A.~K. McIver}, \bibinfo{person}{E. Cohen},
  {and} \bibinfo{person}{C.~C. Morgan}.} \bibinfo{year}{2006}\natexlab{}.
\newblock \showarticletitle{Using Probabilistic Kleene Algebra for Protocol
  Verification}. In \bibinfo{booktitle}{\emph{Relations and Kleene Algebra in
  Computer Science}}, \bibfield{editor}{\bibinfo{person}{Renate~A. Schmidt}}
  (Ed.). \bibinfo{publisher}{Springer Berlin Heidelberg},
  \bibinfo{address}{Berlin, Heidelberg}, \bibinfo{pages}{296--310}.
\newblock
\showISBNx{978-3-540-37874-7}


\bibitem[Mislove(2000)]%
        {Mislove00}
\bibfield{author}{\bibinfo{person}{Michael~W. Mislove}.}
  \bibinfo{year}{2000}\natexlab{}.
\newblock \showarticletitle{Nondeterminism and Probabilistic Choice: Obeying
  the Laws}. In \bibinfo{booktitle}{\emph{{CONCUR} 2000 - Concurrency Theory,
  11th International Conference, University Park, PA, USA, August 22-25, 2000,
  Proceedings}} \emph{(\bibinfo{series}{Lecture Notes in Computer Science},
  Vol.~\bibinfo{volume}{1877})}, \bibfield{editor}{\bibinfo{person}{Catuscia
  Palamidessi}} (Ed.). \bibinfo{publisher}{Springer},
  \bibinfo{pages}{350--364}.
\newblock
\href{https://doi.org/10.1007/3-540-44618-4\_26}{doi:\nolinkurl{10.1007/3-540-44618-4\_26}}


\bibitem[Mislove et~al\mbox{.}(2003)]%
        {MisloveOuaknineWorrell04}
\bibfield{author}{\bibinfo{person}{Michael~W. Mislove},
  \bibinfo{person}{Jo{\"{e}}l Ouaknine}, {and} \bibinfo{person}{James
  Worrell}.} \bibinfo{year}{2003}\natexlab{}.
\newblock \showarticletitle{Axioms for Probability and Nondeterminism}. In
  \bibinfo{booktitle}{\emph{Proceedings of the 10th International Workshop on
  Expressiveness in Concurrency, {EXPRESS} 2003, Marseille, France, September
  2, 2003}} \emph{(\bibinfo{series}{Electronic Notes in Theoretical Computer
  Science}, Vol.~\bibinfo{volume}{96})},
  \bibfield{editor}{\bibinfo{person}{Flavio Corradini} {and}
  \bibinfo{person}{Uwe Nestmann}} (Eds.). \bibinfo{publisher}{Elsevier},
  \bibinfo{pages}{7--28}.
\newblock
\href{https://doi.org/10.1016/j.entcs.2004.04.019}{doi:\nolinkurl{10.1016/j.entcs.2004.04.019}}


\bibitem[Ong et~al\mbox{.}(2025)]%
        {OMK25a}
\bibfield{author}{\bibinfo{person}{Shawn Ong}, \bibinfo{person}{Stephanie Ma},
  {and} \bibinfo{person}{Dexter Kozen}.} \bibinfo{year}{2025}\natexlab{}.
\newblock \showarticletitle{Probabilistic {K}leene Algebra with Angelic
  Nondeterminism}. In \bibinfo{booktitle}{\emph{Proc. 46th ACM SIGPLAN
  Conference on Programming Language Design and Implementation (PLDI 2025)}}.
  ACM, \bibinfo{address}{Seoul}.
\newblock


\bibitem[Pnueli and Zuck(1993)]%
        {PnueliZuck93}
\bibfield{author}{\bibinfo{person}{A. Pnueli} {and} \bibinfo{person}{L. Zuck}.}
  \bibinfo{year}{1993}\natexlab{}.
\newblock \showarticletitle{Probabilistic verification}.
\newblock \bibinfo{journal}{\emph{Information and Computation}}
  \bibinfo{volume}{103} (\bibinfo{year}{1993}), \bibinfo{pages}{1--29}.
\newblock


\bibitem[Qiao et~al\mbox{.}(2008)]%
        {QWGW08}
\bibfield{author}{\bibinfo{person}{Rui Qiao}, \bibinfo{person}{Yuan Wang},
  \bibinfo{person}{Xinyan Gao}, {and} \bibinfo{person}{Jinzhao Wu}.}
  \bibinfo{year}{2008}\natexlab{}.
\newblock \showarticletitle{Operational semantics of probabilistic Kleene
  algebra with tests}. In \bibinfo{booktitle}{\emph{Proceedings of the 13th
  {IEEE} Symposium on Computers and Communications {(ISCC} 2008), July 6-9,
  Marrakech, Morocco}}. \bibinfo{publisher}{{IEEE} Computer Society},
  \bibinfo{pages}{706--713}.
\newblock
\href{https://doi.org/10.1109/ISCC.2008.4625616}{doi:\nolinkurl{10.1109/ISCC.2008.4625616}}


\bibitem[Rennela(2016)]%
        {Rennela16}
\bibfield{author}{\bibinfo{person}{Mathys Rennela}.}
  \bibinfo{year}{2016}\natexlab{}.
\newblock \showarticletitle{Convexity and Order in Probabilistic Call-by-Name
  {FPC}}.
\newblock \bibinfo{journal}{\emph{CoRR}}  \bibinfo{volume}{abs/1607.04332}
  (\bibinfo{year}{2016}).
\newblock
\showeprint[arXiv]{1607.04332}
\urldef\tempurl%
\url{http://arxiv.org/abs/1607.04332}
\showURL{%
\tempurl}


\bibitem[Segala(1995a)]%
        {Segala95}
\bibfield{author}{\bibinfo{person}{Roberto Segala}.}
  \bibinfo{year}{1995}\natexlab{a}.
\newblock \showarticletitle{A compositional trace-based semantics for
  probabilistic automata}. In \bibinfo{booktitle}{\emph{CONCUR '95: Concurrency
  Theory}}, \bibfield{editor}{\bibinfo{person}{Insup Lee} {and}
  \bibinfo{person}{Scott~A. Smolka}} (Eds.). \bibinfo{publisher}{Springer
  Berlin Heidelberg}, \bibinfo{address}{Berlin, Heidelberg},
  \bibinfo{pages}{234--248}.
\newblock
\showISBNx{978-3-540-44738-2}


\bibitem[Segala(1995b)]%
        {SegalaPhD95}
\bibfield{author}{\bibinfo{person}{R. Segala}.}
  \bibinfo{year}{1995}\natexlab{b}.
\newblock \emph{\bibinfo{title}{Modeling and verification of randomized
  distributed real-time systems}}.
\newblock \bibinfo{thesistype}{Ph.\,D. Dissertation}. \bibinfo{school}{MIT}.
\newblock


\bibitem[Segala and Lynch(1994)]%
        {SegalaLynch94}
\bibfield{author}{\bibinfo{person}{R. Segala} {and} \bibinfo{person}{N.
  Lynch}.} \bibinfo{year}{1994}\natexlab{}.
\newblock \showarticletitle{Probabilistic simulations for probabilistic
  processes}. In \bibinfo{booktitle}{\emph{Proc. CONCUR}}
  \emph{(\bibinfo{series}{LNCS}, Vol.~\bibinfo{volume}{836})}.
  \bibinfo{publisher}{Springer}, \bibinfo{pages}{481--496}.
\newblock


\bibitem[Silva(2010)]%
        {Silva10}
\bibfield{author}{\bibinfo{person}{Alexandra Silva}.}
  \bibinfo{year}{2010}\natexlab{}.
\newblock \emph{\bibinfo{title}{Kleene Coalgebra}}.
\newblock \bibinfo{thesistype}{Ph.\,D. Dissertation}.
  \bibinfo{school}{University of Nijmegen}.
\newblock


\bibitem[Smolka et~al\mbox{.}(2016)]%
        {SKFKS17a}
\bibfield{author}{\bibinfo{person}{Steffen Smolka}, \bibinfo{person}{Praveen
  Kumar}, \bibinfo{person}{Nate Foster}, \bibinfo{person}{Dexter Kozen}, {and}
  \bibinfo{person}{Alexandra Silva}.} \bibinfo{year}{2016}\natexlab{}.
\newblock \showarticletitle{Cantor meets Scott: Domain-Theoretic Foundations
  for Probabilistic Network Programming}.
\newblock \bibinfo{journal}{\emph{CoRR}}  \bibinfo{volume}{abs/1607.05830}
  (\bibinfo{year}{2016}).
\newblock
\showeprint[arXiv]{1607.05830}
\urldef\tempurl%
\url{http://arxiv.org/abs/1607.05830}
\showURL{%
\tempurl}


\bibitem[Sokolova(2011)]%
        {Sokolova11}
\bibfield{author}{\bibinfo{person}{Ana Sokolova}.}
  \bibinfo{year}{2011}\natexlab{}.
\newblock \showarticletitle{Probabilistic systems coalgebraically: A survey}.
\newblock \bibinfo{journal}{\emph{Theoretical Computer Science}}
  \bibinfo{volume}{412} (\bibinfo{year}{2011}), \bibinfo{pages}{5095--5110}.
\newblock


\bibitem[Varacca(2003)]%
        {Varacca03}
\bibfield{author}{\bibinfo{person}{Daniele Varacca}.}
  \bibinfo{year}{2003}\natexlab{}.
\newblock \emph{\bibinfo{title}{Probability, nondeterminism and concurrency:
  Two denotational models for probabilistic computation}}.
\newblock \bibinfo{thesistype}{Ph.\,D. Dissertation}. \bibinfo{school}{Aarhus
  University}.
\newblock


\bibitem[Varacca and Winskel(2006)]%
        {VaraccaWinskel06}
\bibfield{author}{\bibinfo{person}{Daniele Varacca} {and}
  \bibinfo{person}{Glynn Winskel}.} \bibinfo{year}{2006}\natexlab{}.
\newblock \showarticletitle{Distributing probability over non-determinism}.
\newblock \bibinfo{journal}{\emph{Math. Struct. Comput. Sci.}}
  \bibinfo{volume}{16}, \bibinfo{number}{1} (\bibinfo{year}{2006}),
  \bibinfo{pages}{87--113}.
\newblock
\href{https://doi.org/10.1017/S0960129505005074}{doi:\nolinkurl{10.1017/S0960129505005074}}


\bibitem[Vardi(1985)]%
        {Vardi85}
\bibfield{author}{\bibinfo{person}{M. Vardi}.} \bibinfo{year}{1985}\natexlab{}.
\newblock \showarticletitle{Automatic verification of probabilistic concurrent
  finite state programs}. In \bibinfo{booktitle}{\emph{Proc. FOCS, IEEE}}.
  \bibinfo{pages}{327--338}.
\newblock


\bibitem[Wang et~al\mbox{.}(2019)]%
        {WangHoffmannReps19}
\bibfield{author}{\bibinfo{person}{Di Wang}, \bibinfo{person}{Jan Hoffmann},
  {and} \bibinfo{person}{Thomas~W. Reps}.} \bibinfo{year}{2019}\natexlab{}.
\newblock \showarticletitle{A Denotational Semantics for Low-Level
  Probabilistic Programs with Nondeterminism}. In
  \bibinfo{booktitle}{\emph{Proceedings of the Thirty-Fifth Conference on the
  Mathematical Foundations of Programming Semantics, {MFPS} 2019, London, UK,
  June 4-7, 2019}} \emph{(\bibinfo{series}{Electronic Notes in Theoretical
  Computer Science}, Vol.~\bibinfo{volume}{347})},
  \bibfield{editor}{\bibinfo{person}{Barbara K{\"{o}}nig}} (Ed.).
  \bibinfo{publisher}{Elsevier}, \bibinfo{pages}{303--324}.
\newblock
\href{https://doi.org/10.1016/j.entcs.2019.09.016}{doi:\nolinkurl{10.1016/j.entcs.2019.09.016}}


\bibitem[Zilberstein et~al\mbox{.}(2024)]%
        {ZKSTa24}
\bibfield{author}{\bibinfo{person}{Noam Zilberstein}, \bibinfo{person}{Dexter
  Kozen}, \bibinfo{person}{Alexandra Silva}, {and} \bibinfo{person}{Joseph
  Tassarotti}.} \bibinfo{year}{2024}\natexlab{}.
\newblock \bibinfo{booktitle}{\emph{A Demonic Outcome Logic for Randomized
  Nondeterminism}}.
\newblock \bibinfo{type}{{T}echnical
  {R}eport}~{\url{https://arxiv.org/abs/2410.22540}}.
  \bibinfo{institution}{Cornell University}.
\newblock
\newblock
\shownote{POPL 2025, to appear}.


\bibitem[Zwart(2020)]%
        {Zwart20}
\bibfield{author}{\bibinfo{person}{Maaike Zwart}.}
  \bibinfo{year}{2020}\natexlab{}.
\newblock \emph{\bibinfo{title}{On the non-compositionality of monads via
  distributive laws}}.
\newblock \bibinfo{thesistype}{Ph.\,D. Dissertation}. \bibinfo{school}{Oxford
  University}.
\newblock


\bibitem[Zwart and Marsden(2022)]%
        {ZwartMarsden22}
\bibfield{author}{\bibinfo{person}{Maaike Zwart} {and} \bibinfo{person}{Dan
  Marsden}.} \bibinfo{year}{2022}\natexlab{}.
\newblock \showarticletitle{No-Go Theorems for Distributive Laws}.
\newblock \bibinfo{journal}{\emph{Log. Methods Comput. Sci.}}
  \bibinfo{volume}{18}, \bibinfo{number}{1} (\bibinfo{year}{2022}).
\newblock
\href{https://doi.org/10.46298/lmcs-18(1:13)2022}{doi:\nolinkurl{10.46298/lmcs-18(1:13)2022}}


\end{thebibliography}


\nosectionappendix
\appendix

\end{document}
\endinput